\documentclass[11pt,a4paper]{article}
\usepackage[left=25mm,top=25mm,right=25mm]{geometry}

\usepackage[latin1]{inputenc}

\newcommand{\notetoself}[1]{\par\noindent\textsc{\qquad** #1 **}\par\noindent}

\usepackage{amsmath}
\usepackage{amsfonts}
\usepackage{amssymb}
\usepackage{amsthm}
\usepackage{enumerate}
\usepackage[mathscr]{euscript}
\usepackage{url}
\usepackage{comment}

\newtheorem{theorem}{Theorem}
\newtheorem{lemma}[theorem]{Lemma}
\newtheorem{proposition}[theorem]{Proposition}
\newtheorem{corollary}[theorem]{Corollary}
\newtheorem{observation}[theorem]{Observation}

\theoremstyle{definition}
\newtheorem{definition}[theorem]{Definition}
\newtheorem{property}[theorem]{Property}

\theoremstyle{remark}
\newtheorem{remark}[theorem]{Remark}
\newtheorem{example}[theorem]{Example}

\newcommand{\omittext}[1]{}

\newcommand{\N}{\mathbb{N}}

\newcommand{\Q}{\mathbb{Q}}
\newcommand{\R}{\mathbb{R}}

\newcommand{\I}{\mathbb{I}}

\newcommand{\Hl}{\mathbb{H}}

\renewcommand{\P}{\mathcal{P}}
\renewcommand{\Pr}{\mathbb{P}}
\newcommand{\Ex}{\mathbb{E}}

\newcommand{\abs}{\mathrm{abs}}

\newcommand{\diam}{\mathrm{diam}}
\newcommand{\dom}{\mathrm{dom}}

\newcommand{\rng}{\mathrm{rng}}
\newcommand{\supp}{\mathrm{supp}}

\newcommand{\fto}{\rightarrow}
\newcommand{\mvfto}{\rightrightarrows}
\newcommand{\pfto}{\rightharpoonup}
\newcommand{\mfto}{\rightsquigarrow}
\newcommand{\psfto}{\rightharpoonup}

\newcommand{\lmsto}{\rightsquigarrow}

\newcommand{\clI}{\overline{I}}

\newcommand{\clN}{\,\overline{\!N}}
\newcommand{\clB}{\overline{B}}

\newcommand{\calB}{\mathcal{B}}

\newcommand{\calT}{\mathcal{T}}
\newcommand{\calV}{\mathcal{V}}
\newcommand{\calX}{\mathcal{X}}
\newcommand{\calY}{\mathcal{Y}}

\newcommand{\opset}{\mathcal{O}}
\newcommand{\clset}{\mathcal{A}}

\newcommand{\ctsfn}{\mathcal{C}}

\newcommand{\meas}{\mathcal{M}}

\newcommand{\standardtype}[1]{\mathbb{#1}}
\newcommand{\standardfunctor}[1]{\mathcal{#1}}
\newcommand{\usertype}[1]{{#1}}
\renewcommand{\usertype}[1]{\mathbb{#1}}

\newcommand{\tpBl}{\standardtype{B}}
\newcommand{\tpSi}{\standardtype{S}}
\newcommand{\tpNat}{\standardtype{N}}
\newcommand{\tpHl}{\standardtype{H}_<}
\renewcommand{\tpHl}{\standardtype{H}}
\newcommand{\tpIv}{\standardtype{I}}
\newcommand{\tpRe}{\standardtype{R}}
\newcommand{\tpMe}{\standardfunctor{M}}

\newcommand{\tpPr}{\standardfunctor{P}}
\newcommand{\tpCts}{\standardfunctor{C}}

\newcommand{\tpRV}{\standardfunctor{R}}
\newcommand{\tpLM}{\standardfunctor{M_<}}
\newcommand{\tpOp}{\standardfunctor{O}}
\newcommand{\tpCl}{\standardfunctor{A}}
\newcommand{\tpR}{\usertype{R}}
\newcommand{\tpT}{\usertype{T}}
\newcommand{\tpW}{\usertype{W}}
\newcommand{\tpX}{\usertype{X}}
\newcommand{\tpY}{\usertype{Y}}
\newcommand{\tpZ}{\usertype{Z}}
\newcommand{\tpOmega}{\usertype{\Omega}}
\renewcommand{\tpOmega}{\Omega}

\newcommand{\chif}{\raisebox{0.44ex}{\ensuremath{\chi}}}

\newcommand{\true}{{\sf{T}}}
\newcommand{\false}{{\sf{F}}}
\newcommand{\indet}{\bot}

\begin{document}

\title{Computable Random Variables and Conditioning}
\author{Pieter Collins\\Department of Data Science and Knowledge Engineering\\Maastricht University\\\texttt{pieter.collins@maastrichtuniversity.nl}}
\date{22 December 2020}
\maketitle

\begin{abstract}
The aim of this paper is to present an elementary computable theory of random variables, based on the approach to probability via valuations.
The theory is based on a type of lower-measurable sets, which are controlled limits of open sets, and extends existing work in this area by providing a computable theory of conditional random variables.
The theory is based within the framework of type-two effectivity, so has an explicit direct link with Turing computation, and is expressed in a system of computable types and operations, so has a clean mathematical description.
\end{abstract}




\section{Introduction}

In this paper, we present a computable theory of probability and random variables.
The theory is powerful enough to provide a theoretical foundation for the rigorous numerical analysis of discrete-time continuous-state Markov chains and stochastic differential equations~\cite{Collins2014ARXIV}.
We provide an exposition of the approach to probability distributions using valuations and the development of integrals of positive lower-semicontinuous and of bounded continuous functions, and on the approach to random variables as limits of almost-everywhere defined continuous partial functions.


We use the framework of \emph{type-two effectivity (TTE)}~\cite{Weihrauch1999}, in which computations are performed by Turing machines working on infinite sequences, as a foundational theory of computability.
We believe that this framework is conceptually simpler for non-specialists than the alternative of using a domain-theoretic framework.
Since in TTE we work entirely in the class of quotients of countably-based (QCB) spaces, which form a cartesian closed category, many of the basic operations can be carried out using simple type-theoretic constructions such as the $\lambda$-calculus.

In this paper, we deal with computability theory, rather than constructive mathematics.
In practice, this means that we allow recursive constructions, and so accept the axiom of dependent (countable) choice, but since not all operations are decidable, we do not accept the law of the excluded middle.
However, proofs of correctness of computable operators may use non-computable functions and proof by contradiction.

We assume that the reader has a basic familiarity with classical probability theory~(see e.g.~\cite{Shiryaev1995,Pollard2002}.
Much of this article is concerned with giving computational meaning to classical concepts and arguments.
The main difficulty lies in the use of $\sigma$-algebras in classical probability, which have poor computability properties due to the presence of countable unions and complementation.
Instead, we use only topological constructions, which can usually be effectivised directly.
We can compute lower bounds (but not upper bounds) to the measure of open sets, and extend results to measurable sets using completion constructions.
Similarly we define types of measurable and integral functions as completions of types of (piecewise) continuous functions.


In Section~\ref{sec:computableanalysis}, we briefly introduce the foundations of computable analysis.
In Section~\ref{sec:valuations}, we describe the approach to probability theory using valuations.
The main results are in Section~\ref{sec:randomvariables}, in which we give a complete theory of random variables in separable metric spaces.
We begin by constructing types of measurable sets and measurable functions on a given base probability space $(\Omega,P)$ using completion operators, similarly to existing approaches in the literature.
We show that the distribution of a random variable is computable, and conversely, that for any valuation we can construct a realisation by a random variable, similarly to results of~\cite{SchroederSimpson2006JCMPLX,HoyrupRojas2009INFCOMP}.
We then show the trivial result that the product of two random variables is computable, and the classical result~\cite{MannWald1943,BishopCheng1972} that the image of a random variable under a continuous function is computable.
We define the expectation of a random variable, and types of integrable random variables in the standard way.
Finally, we discuss conditioning of random variables, and show how a random variable can be computed from its conditional expectation.

\subsection*{Comparison with other approaches}

An early fairly complete constructive theory of measure theory based on the Daniell integral was developed in~\cite{BishopCheng1972} was presented in~\cite{BishopBridges1985} and~\cite{Chan1974}.
The theory is developed using abstract \emph{integration spaces}, which are triples $(X,L,I)$ where $X$ is a space, $L$ a subset of test functions $X\fto\R$ and $I:L\fto \R$ satifying properties of an integral.
The integral is extended from test functions to integrable functions by taking limits.
Measurable functions are those which can be uniformly approximated by integrable functions on large-measure sets.
It is shown that the image of a measurable function under a continuous function is measurable, and analogue of our Theorem~\ref{thm:imagerandomvariable}.
Measurable sets are defined via \emph{complemented sets}, which are pairs of sets $(A_1,A_0)$ such that $A_0\cap A_1=\emptyset$, and are measurable if $A_0\cup A_1$ is a full set.
Abstract measure spaces are defined in terms of measurable sets, and shown to be equivalent to integration spaces.

A standard approach to a constructive theory of probability measures, as developed in~\cite{JonesPlotkin1989,Edalat1995DTI,SchroederSimpson2006ENTCS,Escardo2009}, is through \emph{valuations}, which are essentially measures restricted to open sets.
Explicit representations of valuations within the framework of type-two effectivity were given in~\cite{Schroeder2007}.
Valuations satisfy the \emph{modularity} property $\nu(U_1)+\nu(U_2) = \nu(U_1\cup U_2) + \nu(U_1\cap U_2)$, and (monotonic) continuity $\nu(U_\infty)=\lim_{n\to\infty} \nu(U_n)$ whenever $U_n$ is an increasing sequence of open sets with $U_\infty=\bigcup_{n\in\N}U_n$.
Relationships between valuations and Borel measures were given in ~\cite{Edalat1995DTI} and extended in~\cite{AlvarezManilla2002} and~\cite{GoubaultLarrecq2005}.

The most straightforward approach to integration is the \emph{Choquet} or \emph{horizontal} integral, a lower integral introduced within the framework of domain theory  in~\cite{Tix1995}; see also~\cite{Konig1995,Lawson2004}.
The lower integral on valuations in the form we use was given in~\cite{Vickers2008}.
The monadic properties of the lower integral on valuations, which has type $(\tpX\fto\tpR^+_<)\fto\tpR^+_<$ were noted by~\cite{Vickers2011PP}.
A similar monadic approach to probability measures was used in~\cite{Escardo2009} to develop a language EPCL for nondeterministic and probabilistic computation.
Here, the type of probability measures on the Cantor space $\Omega=\{0,1\}^\omega$ was identified with the type of integrals $(\Omega \fto \I) \fto \I$ where $\I=[0,1]$ is the unit interval.


An alternative to the use of valuations is that of~\cite{CoquandPalmgren2002AML}.
The exposition is given in terms of general \emph{boolean rings}, but in the language of sets, a measure $\mu$ is given satisfying the modularity condition, and extended by the completion under the metric $d(S_1,S_2)=\mu(S_1 \triangle S_2)$ where $S_1 \triangle S_2 = (S_1 \setminus S_2) \cup (S_2\setminus S_1)$.
In~\cite{WuDing2005MLQ,WuDing2006AML} a concept of \emph{computable measure space} with a concrete representation was given using a ring of subsets $R$ generating the Borel $\sigma$-algebra.
Disadvantages of this approach are that the elements of $R$ must have a computable measure, introducing an undesirable dependency between the measure and the ``basic'' sets which is not present in the approach using valuations.

In the approach presented here, we start with the use of valuations, since these are intrinsic given a base type $\tpX$.
For a fixed valuation, we can extend to a class of \emph{lower-measurable sets}, and also give a definition of measurable set using a completion on complemented open sets (equivalently, on topologically regular sets).
We do not use integration spaces, since we feel that the concept of measure is more fundamental than that of integral.


In the approach of~\cite{Spitters2006}, integrable functions are defined as limits of simple functions with respect to the measurable sets of~\cite{CoquandPalmgren2002AML}.
Measurable function are defined as limits of effectively-converging Cauchy sequences with respect to the pseudometric $d_h(f,g):= \int |f-g| \wedge h$ for positive integrable $h$ and integrable $f,g$.
This work was generalised to Riesz spaces in~\cite{CoquandSpitters2009}.

Random variables over discrete domains were defined in~\cite{Mislove2007}, based on work of~\cite{Varacca2002}, and extended to continuous domains in~\cite{GoubaultLarrecqVaracca2011}.
A continuous random variable in $\tpX$ was defined as a pair $(\nu,f)$ where $\nu$ is a continuous valuation on $\Omega=\{0,1\}^\omega$, and $f$ is a continuous map from $\supp(\nu)$ to $\tpX$, where $\supp(\nu)$ is the smallest closed set $A$ such that $\nu(A)=1$.
A difficulty with this construction is that different random variables require different valuations on the bases space $\Omega$, which makes computation of joint distributions problematic.

In this paper, we define measurable functions as those for which the preimage of an open set is a lower-measurable set and satisfy the natural properties.
This mimics the standard property that the preimage of an open set under a measurable function is (Borel) measurable.
Since measurable functions are in general uncomputable, we do not even attempt to define the ``image of a point''.

A similar approach to~\cite{Spitters2006} is also possible, defining random variables (measurable functions) directly by completion with respect to the Fan metric
\( d(X,Y) = \sup\!\big\{ \varepsilon\in\Q^+ \mid \ P\big(\{\omega\in\Omega \mid d(X(\omega),Y(\omega))>\varepsilon\}\big) > \varepsilon\big\} .\)
The resulting theory is essentially equivalent to that of~\cite{BishopCheng1972}, but developed in reverse.
The resulting representation is equivalent to that using lower-measurable sets.

In~\cite{SchroederSimpson2006JCMPLX}, an alternative representation of valuations and measures on $\tpX$ was developed by defining a valuation $\pi$ on (a subset of) the sequence space $\{0,1\}^\omega$, and pushing-forward by the representation $\delta$ of $\tpX$, yielding $\nu(U)=\pi(\delta^{-1}(U))$.
This representation of valuations is similar the valuation induced by our random variables, except that our random variables are obtained by taking limits, so we need to prove separately that the valuation induced by a random variable is computable.

It was further shown~\cite{SchroederSimpson2006JCMPLX} and that the alternative representations of valuations always exist on sufficiently nice spaces, which can be seen as a realisation result for valuations, where the representation $\delta$ of the space $\tpX$ is a random variable.
In~\cite{HoyrupRojas2009INFCOMP}, a theory of probability was developed for the study of algorithmic randomness, and a similar representation result for valuations  was given, here allowing both the base-space measure $\pi$ and point-representation $\delta$ to be given.
In this paper, we also show that valuations have concrete realisations by random variables, but our result constructs random variables relative to to uniform probability measure on the base space $\{0,1\}^\omega$, and a Cauchy sequence of (continuous) functions rather than a single function on a $G_\delta$-set.

The problem of finding conditional expectation, which classically uses the Radon-Nikodym derivative, was shown to be uncomputable by~\cite{HoyrupRojasWeihrauch2011}.
This means that computably, there is a difference between a random variable, and a ``conditional random variable''.
Here we show that given a conditional random variable $Y|\calX$, and an $\calX$-measurable random variable $X$, we can effectively compute $Y$.
This result is important for stochastic processes, in which we typically can compute the distribution of $X_{t+1}$ given $X_t=x_t$.

\section{Computable Analysis}
\label{sec:computableanalysis}

In the theory of type-two effectivity, computations are performed by Turing machines acting on \emph{sequences} over some alphabet $\Sigma$.
A computation performed by a machine $\meas$ is \emph{valid} on an input $p\in\Sigma^\omega$ if the computation does not halt, and writes infinitely many symbols to the output tape.
A type-two Turing machine therefore performs a computation of a partial function $\eta:\Sigma^\omega \pfto \Sigma^\omega$; we may also consider multi-tape machines computing $\eta:(\Sigma^\omega)^n \pfto (\Sigma^\omega)^m$.
It is straightforward to show that any machine-computable function $\Sigma^\omega \pfto \Sigma^\omega$ is continuous on its domain.

In order to relate Turing computation to functions on mathematical objects, we use \emph{representations} of the underlying sets, which are partial surjective functions $\delta:\Sigma^\omega \psfto \tpX$.
An operation $\tpX\fto \tpY$ is $(\delta_\tpX;\delta_\tpY)$-computable if there is a machine-computable function $\eta:\Sigma^\omega \pfto \Sigma^\omega$ with $\dom(\eta)\supset\dom(\delta_\tpX)$ such that $\delta_\tpY\circ \eta = f\circ\delta_\tpX$ on $\dom(\delta_\tpX)$.
Representations are equivalent if they induce the same computable functions.
A \emph{computable type} is a pair $(\tpX,[\delta])$ where $\tpX$ is a space and $[\delta]$ is an equivalence class of representations of $\tpX$.

If $X$ is a topological space, we say that a representation $\delta$ of $X$ is an \emph{admissible quotient representation} if (i) whenever $f:\tpX\fto \tpY$ is such that $f\circ\delta$ is continuous, then $f$ is continuous, and (ii) whenever $\phi:\Sigma^\omega\pfto \tpX$ is continuous, there exists continuous $\eta:\Sigma^\omega\pfto\Sigma^\omega$ such that $\phi=\delta\circ\eta$.
Any space with a quotient representation is a quotient of a subset of the countably-based space $\Sigma^\omega$, and is a \emph{sequential space}.
(A topological space is a \emph{sequential space} if any subset $W$ for which $x_n\to x_\infty$ with $x_\infty \in W$ implies $x_n\in W$ for all sufficiently large $n$, is an open set.)

A function $f:\tpX \fto \tpY$ is \emph{computable} if there is a machine-computable function $\eta:\Sigma^\omega \pfto \Sigma^\omega$ with $\dom(\eta)\supset\dom(\delta_\tpX)$ such that $\delta_\tpY\circ \eta = f\circ\delta_\tpX$ on $\dom(\delta_\tpX)$.
A multivalued function $F:\tpX \mvfto \tpY$ is \emph{computably selectable} if there is a machine-computable function $\eta:\Sigma^\omega \pfto \Sigma^\omega$ with $\dom(\eta)\supset\dom(\delta_\tpX)$ such that $\delta_\tpY\circ \eta \in F\circ\delta_\tpX$ on $\dom(\delta_\tpX)$; note that different names of $x\in\tpX$ may give rise to different values of $y \in \tpY$.

The category of computable types with (sequentially) continuous functions is Cartesian closed, and the computable functions yield a Cartesian closed subcategory.
For any types $\tpX$, $\tpY$ there exist a canonical product type $\tpX\times\tpY$ with computable projections $\pi_{\tpX}:\tpX\times\tpY\fto\tpX$ and $\pi_{\tpY}:\tpX\times\tpY\fto\tpY$, and a canonical exponential type $\tpY^\tpX$ such that evaluation $\epsilon:\tpY^\tpX\times\tpX \rightarrow \tpY:(f,x)\mapsto f(x)$ is computable.
Since objects of the exponential type are continuous function from $\tpX$ to $\tpY$, we also denote $\tpY^\tpX$ by $\tpX\fto\tpY$ or $\tpCts(\tpX;\tpY)$; in particular, whenever we write $f:\tpX\fto\tpY$, we imply that $f$ is continuous.
There is a canonical equivalence between $(\tpX \times \tpY) \fto \tpZ$ and $\tpX \fto (\tpY \fto \tpZ)$ given by $\tilde{f}(x):\tpY\fto\tpZ:\tilde{f}(x)(y)=f(x,y)$.

There are canonical types representing basic building blocks of mathematics, including the natural number type $\tpNat$ and the real number type $\tpRe$.
We use a three-valued logical type with elements $\{\false,\true,\indet\}$ representing \emph{false}, \emph{true}, and \emph{indeterminate} or \emph{unknowable}, and its subtypes the \emph{Boolean} type $\tpBl$ with elements $\{\false,\true\}$  and the \emph{Sierpinski} type $\tpSi$  with elements $\{\true,\indet\}$.
Given any type $\tpX$, we can identify the type $\tpOp(\tpX)$ of open subsets $U$ of $\tpX$ with $\tpX\fto\tpSi$ via the characteristic function $\chi_U$.
Further, standard operations on these types, such as arithmetic on real numbers, are computable.

A sequence $(x_n)$ is an \emph{effective Cauchy sequence} if $d(x_m,x_n) < \epsilon_{\max(m,n)}$ where $(\epsilon_n)_{n\in\N}$ is a known computable sequence with $\lim_{n\to\infty}\epsilon_n=0$, and a \emph{strong Cauchy sequency} if $\epsilon_n=2^{-n}$.
The limit of an effective Cauchy sequence of real number is computable.

We shall also need the type $\tpHl \equiv \tpRe^{+,\infty}_<$ of positive real numbers with infinity under the lower topology.
The topology on the lower halfline $\tpHl$ is the toplogy of lower convergence, with open sets $(a,\infty]$ for $a\in\R^+$ and $\tpHl$ itself.
A representation of $\tpHl$ then encodes an increasing sequence of positive rationals with the desired limit.
We note that the operators $+$ and $\times$ are computable on $\tpHl$, where we define $0\times\infty = \infty\times 0 = 0$, as is countable supremum $\sup:\tpHl^\omega \fto \tpHl$, $(x_0,x_1,x_2,\ldots)\mapsto\sup\{x_0,x_1,x_2,\ldots\}$.
Further, $\abs:\tpRe\fto\tpHl$ is computable, as is the embedding $\tpSi\hookrightarrow\tpHl$ taking $\true \mapsto 1$ and $\indet \mapsto 0$.
We let $\I_<$ be the unit interval $[0,1]$, again with the topology of lower convergence with open sets $(a,1]$ for $a\in[0,1)$ and $\I$ itself, and $\I_>$ the interval with the topology of upper convergence.

A \emph{computable metric space} is a pair $(\tpX,d)$ where $\tpX$ is a computable type, and $d:\tpX\times \tpX\fto \R^+$ is a computable metric, such that the extension of $d$ to $\tpX\times\clset(\tpX)$ (where $\clset(\tpX)$ is the type of closet subsets of $\tpX$) defined by $d(x,A)=\inf\{d(x,y) \mid y\in A\}$ is computable as a function into $\R^{+,\infty}_<$.
This implies that given an open set $U$ we can compute $\epsilon>0$ such that $B_{\epsilon}(x) \subset U$, which captures the relationship between the metric and the open sets.
The effective metric spaces of~\cite{Weihrauch1999} are a concrete class of computable metric space.

A type $\tpX$ is \emph{effectively separable} if there is a computable function $\xi:\N\fto\tpX$ such that $\rng(\xi)$ is dense in $\tpX$.

Throughout this paper we shall use the term ``compute'' to indicate that a formula or procedure can be effectively carried out in the framework of type-two effectivity.
Other definitions and equations may not be possible to verify constructively, but hold from axiomatic considerations.

\section{Valuations}
\label{sec:valuations}

The main difficulty with classical measure theory is that Borel sets and Borel measures have very poor computability properties.
Although a computable theory of Borel sets was given in~\cite{Brattka2005}, the measure of a Borel set is in general not computable in $\R$.
However, we can consider an approach to measure theory in which we may only compute the measure of \emph{open} sets.
Since open sets are precisely those which can be approximated from inside, we expect to be able to compute lower bounds for the measure of an open set, but not upper bounds.
The above considerations suggest an approach which has become standard in computable measure theory, namely that using \emph{valuations}~\cite{JonesPlotkin1989,Edalat1995DTI,SchroederSimpson2006ENTCS,Escardo2009}.
\begin{definition}[Valuation]
\label{defn:valuation}
The type of \emph{valuations} on $\tpX$ is the subtype $\tpOp(\tpX) \fto \tpHl$ consisting of elements $\nu$ satisfying $\nu(\emptyset)=0$ and the \emph{modularity} condition $\nu(U)+\nu(V) = \nu(U\cup V) + \nu(U\cap V)$ for all $U,V \in \opset(\tpX)$.
\end{definition}
Note that since our valuations are elements of $\tpOp(\tpX)\fto\tpHl$, any $\nu$ satisfies the \emph{monotonicity} condition $\nu(U)\leq \nu(V)$ whenever $U\subset V$, and the \emph{continuity} condition $\nu\bigl(\bigcup_{n=0}^{\infty}U_n\bigr) = \lim_{n\to\infty} \nu(U_n)$ whenever $U_n$ is an increasing sequence of open sets.

A valuation $\nu$ on $\tpX$ is \emph{finite} if $\nu(\tpX)$ is finite, \emph{effectively finite} if $\nu(\tpX)$ is a computable real number, and \emph{locally finite} if $\nu(U)<\infty$ for any $U$ which is a subset of a compact set.

An effectively finite valuation computably induces an upper-valuation on closed sets $\bar{\nu}:\clset(\tpX)\fto \tpRe^{+}_{>}$ by $\bar{\nu}(A) = \nu(\tpX) - \nu(\tpX\setminus A)$.
For any finite valuation, $\nu(U)\leq \bar{\nu}(A)$ whenever $U\subset A$. We say a set $S$ is \emph{$\nu$-regular} if $\bar{\nu}(\partial S)=0$.
An open set $U$ is $\nu$-regular if, and only if, $\nu(U)=\bar{\nu}(\overline{U})$.

The following result shows that the measure of a sequence of small sets approaches zero.
Recall that a space $\tpX$ is \emph{regular} if for any point $x$ and open set $U$, there exists an open set $V$ and a closed set $A$ such that $x\in V\subset A\subset  U$.
\begin{lemma}\label{lem:decreasingopen}
Let $\tpX$ be a separable regular space, and $\nu$ a finite valuation on $\tpX$.
If $U_n$ is any sequence of open sets such that $U_{n+1}\subset U_n$ and $\bigcap_{n=0}^{\infty} U_n=\emptyset$, then $\nu(U_n)\to0$ as $n\to\infty$.
\end{lemma}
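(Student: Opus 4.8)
The plan is to emulate the classical proof that a finite measure is continuous from above, while working around the fact that a valuation is only guaranteed to be continuous \emph{from below} (on increasing sequences of open sets). The complements $\tpX\setminus U_n$ form an increasing sequence of \emph{closed} sets with union $\tpX$, but since they are closed rather than open I cannot feed them directly into the continuity axiom. The main obstacle is therefore to manufacture, out of the $U_n$, an increasing sequence of open sets exhausting $\tpX$ to which continuity does apply, while keeping quantitative control of the valuation that is lost in the process.

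First I would fix $\epsilon>0$ and use regularity together with separability to approximate each $U_i$ from inside. By regularity, every $x\in U_i$ has an open neighbourhood whose closure lies in $U_i$; by separability (second countability, hence Lindel\"of) countably many such neighbourhoods cover $U_i$, and passing to finite unions gives an increasing sequence of open sets with closures contained in $U_i$ whose union is $U_i$. Continuity from below then lets me pick a single open $V_i$ with $\overline{V_i}\subset U_i$ and $\nu(V_i)>\nu(U_i)-\epsilon_i$, where $\epsilon_i=\epsilon\,2^{-i-1}$. Writing $V_i'=U_i\setminus\overline{V_i}$, which is open since $\overline{V_i}$ is closed, modularity applied to the disjoint pair $V_i,V_i'\subset U_i$ gives $\nu(V_i')\le\nu(U_i)-\nu(V_i)<\epsilon_i$.

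The key finite step is the inequality, for open sets $U,G$ and a closed set $A$ with $U\subset A\cup G$, that $\nu(U)\le\bar{\nu}(A)+\nu(G)$: apply modularity to $U$ and the open set $\tpX\setminus A$, observe $U\cap(\tpX\setminus A)\subset G$ and $U\cup(\tpX\setminus A)\subset\tpX$, and rearrange using $\bar{\nu}(A)=\nu(\tpX)-\nu(\tpX\setminus A)$. I apply this with $A=K_n:=\bigcap_{i\le n}\overline{V_i}$ and $G=G_n:=\bigcup_{i\le n}V_i'$, after checking $U_n\subset K_n\cup G_n$: a point of $U_n$ lies in every $U_i$ for $i\le n$, hence either in all the $\overline{V_i}$, so in $K_n$, or else in some $V_i'$. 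Subadditivity then gives $\nu(G_n)\le\sum_{i\le n}\epsilon_i<\epsilon/2$.

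Finally, the closed sets $K_n$ are decreasing with $\bigcap_n K_n\subset\bigcap_i\overline{V_i}\subset\bigcap_i U_i=\emptyset$, so their complements form an increasing sequence of open sets exhausting $\tpX$; continuity from below yields $\nu(\tpX\setminus K_n)\to\nu(\tpX)$, that is, $\bar{\nu}(K_n)\to0$. Choosing $N$ with $\bar{\nu}(K_N)<\epsilon/2$ and using monotonicity of $\bar{\nu}$ gives $\nu(U_n)\le\bar{\nu}(K_n)+\nu(G_n)<\epsilon$ for all $n\ge N$; as $\epsilon$ was arbitrary, $\nu(U_n)\to0$. I expect the only delicate point to be the inside approximation, for it is exactly here that both hypotheses are used: regularity to obtain open sets with closures inside $U_i$, and separability to reduce the resulting cover to a countable one, whence continuity supplies a single good $V_i$.
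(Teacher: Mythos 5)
The paper states this lemma without proof, so there is no official argument to compare against line by line; judged on its own terms, your proof has the right architecture and is essentially the standard ``continuity from above via inner approximation'' argument: replace each $U_i$ by an open $V_i$ with $\overline{V_i}\subset U_i$ and $\nu(V_i)>\nu(U_i)-\epsilon_i$, control the error sets $V_i'=U_i\setminus\overline{V_i}$ by modularity, prove the modular inequality $\nu(U)\leq\bar{\nu}(A)+\nu(G)$ for open $U\subset A\cup G$, and apply continuity from below to the increasing open sets $\tpX\setminus K_n$ whose union is $\tpX$. Your steps involving $V_i'$, the key inequality, the covering $U_n\subset K_n\cup G_n$, and the conclusion $\bar{\nu}(K_n)\to 0$ are all correct, and the inner-approximation device is exactly the one the paper itself uses later (closed $A_n\subset U_n$ with $\nu(A_n)\geq\nu(U_n)-2^{-(n+1)}$ in the proof of Theorem~\ref{thm:gdfesetmeasure}), so this is surely the intended route.

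There is, however, one genuinely flawed justification: you claim separability gives ``second countability, hence Lindel\"of''. For general regular spaces this implication is false --- a separable regular space need not be second countable or Lindel\"of (the Sorgenfrey plane is separable and completely regular, but its antidiagonal is an uncountable closed discrete subspace, so the space is not even hereditarily Lindel\"of, and hereditary Lindel\"ofness is precisely what your step needs, since you must extract a countable subfamily from the regularity cover of each open $U_i$ before sequential continuity of $\nu$ can be invoked). So as written the construction of $V_i$ is unjustified at the lemma's stated level of generality. The step is sound whenever $\tpX$ is countably based (hence hereditarily Lindel\"of), which is the ambient assumption of the paper's TTE/QCB framework, or when $\tpX$ is separable metrizable, where one can bypass covers altogether: take $V_i$ from the canonical increasing sequence $I_{1/k}(U_i)$, which satisfies $\overline{I_{1/k}(U_i)}\subset I_{1/k'}(U_i)\subset U_i$ for $k'>k$ and $\bigcup_k I_{1/k}(U_i)=U_i$, as the paper does in Theorem~\ref{thm:gdfesetmeasure}. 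You should either strengthen the hypothesis you use or note this repair. Separately, a harmless arithmetic slip: with $\epsilon_i=\epsilon\,2^{-i-1}$ one has $\sum_{i\geq 0}\epsilon_i=\epsilon$, not $\epsilon/2$, so your final bound is $3\epsilon/2$ rather than $\epsilon$; since $\epsilon$ is arbitrary this does not affect the conclusion, and taking $\epsilon_i=\epsilon\,2^{-i-2}$ restores the stated constants.
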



A link with classical measure theorey is provided by a number of results that show that valuations can be extended to measures on the Borel $\sigma$-algebra.
\begin{theorem}\label{thm:valuationborelmeasure}
Borel measures and continuous valuations are in one-to-one correspondance:
\begin{enumerate}
 \item on a countably-based locally-compact Hausdorff space~\cite[Corollary~5.3]{Edalat1995DSM}, or
 \item on a locally compact sober space~\cite{AlvarezManilla2002}.
\end{enumerate}
\end{theorem}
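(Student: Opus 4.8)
The plan is to present the correspondence as two mutually inverse maps, dispatching the measure-to-valuation direction quickly and putting the weight on the valuation-to-measure extension, where the hypotheses on $\tpX$ are used. In the easy direction I would restrict a Borel measure $\mu$ to open sets, setting $\nu:=\mu|_{\opset(\tpX)}$: then $\nu(\emptyset)=0$ is immediate, modularity is inclusion--exclusion for $\mu$, and the continuity condition along increasing open sequences is continuity from below of $\mu$. The substance is the reverse construction.

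To recover a measure from a continuous valuation $\nu$, I would define the outer measure $\mu^*(A):=\inf\{\nu(W)\mid A\subseteq W\in\opset(\tpX)\}$ on arbitrary $A\subseteq\tpX$. Monotonicity is clear, and $\mu^*(U)=\nu(U)$ for open $U$ is immediate because $U$ is itself an admissible cover. Countable subadditivity reduces, after an $\varepsilon2^{-n}$ choice of open covers, to subadditivity of $\nu$ on a countable union of open sets, which follows from finite subadditivity (a consequence of modularity) and the continuity condition applied to the increasing partial unions. Carath\'eodory's construction then yields a complete measure on the $\sigma$-algebra $\meas$ of $\mu^*$-measurable sets, and I would put $\mu:=\mu^*|_{\borset(\tpX)}$ as soon as open sets are shown to lie in $\meas$.

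The crux is the Carath\'eodory-measurability of an open set $U$, and here Lemma~\ref{lem:decreasingopen} does the work. Fixing a test set $A$ with $\mu^*(A)<\infty$ and an open $W\supseteq A$ with $\nu(W)\leq\mu^*(A)+\varepsilon$, I would restrict attention to opens contained in $W$, where $\nu$ is finite, so that the hypothesis of the lemma is met. Local compactness and the countable base let me exhaust $U\cap W$ by relatively compact opens $V_n$ with $\overline{V_n}\subseteq V_{n+1}\subseteq U\cap W$ and $\bigcup_n\overline{V_n}=U\cap W$. A single use of modularity gives $\nu(W\cap U)+\nu(W\setminus\overline{V_n})=\nu(W)+\nu\bigl((W\cap U)\setminus\overline{V_n}\bigr)$, and since the open sets $(W\cap U)\setminus\overline{V_n}$ decrease to $\emptyset$, Lemma~\ref{lem:decreasingopen} drives the last term to $0$. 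Bounding $\mu^*(A\cap U)\leq\nu(W\cap U)$ and $\mu^*(A\setminus U)\leq\nu(W\setminus\overline{V_n})$, choosing $n$ large and then letting $\varepsilon\to0$, I obtain $\mu^*(A\cap U)+\mu^*(A\setminus U)\leq\mu^*(A)$, the required splitting inequality. I expect this inner-regularity step to be the main obstacle, since it is exactly where local compactness is indispensable: it is what supplies the exhausting sequence whose closures fill $U\cap W$ from inside.

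For the bijection I would observe that the open sets form a $\pi$-system generating $\borset(\tpX)$ and that a locally finite valuation on a $\sigma$-compact space is $\sigma$-finite, so Dynkin's $\pi$--$\lambda$ theorem gives uniqueness: two Borel measures agreeing with $\nu$ on opens coincide. Hence the extension is a genuine inverse of the restriction, and continuity of the recovered measure on opens is just continuity of $\nu$. This handles case~(1), the countably-based locally-compact Hausdorff setting of~\cite{Edalat1995DSM}, which is in particular separable and regular as Lemma~\ref{lem:decreasingopen} requires. In the sober setting of case~(2) the space need be neither metrizable nor $T_1$, so the lemma no longer applies and I would instead appeal to the lattice-theoretic extension of~\cite{AlvarezManilla2002}.
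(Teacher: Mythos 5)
Your proposal is correct in outline, but it is worth noting that the paper does not prove this theorem at all: it is imported as a black box, with case~(1) cited to \cite[Corollary~5.3]{Edalat1995DSM} and case~(2) to \cite{AlvarezManilla2002}. Your argument is therefore a genuinely different route for case~(1): a self-contained Carath\'eodory extension from the outer measure $\mu^*(A)=\inf\{\nu(W)\mid A\subseteq W\in\opset(\tpX)\}$, with the splitting inequality for an open $U$ obtained by exhausting $U\cap W$ by relatively compact opens and using Lemma~\ref{lem:decreasingopen} to drive $\nu\bigl((W\cap U)\setminus\overline{V_n}\bigr)$ to zero. This makes explicit exactly where local compactness, Hausdorffness (compactness of $\overline{V_n}$), and the separability and regularity hypotheses of Lemma~\ref{lem:decreasingopen} enter, which the citation hides; conversely, the citation route is what covers the sober case~(2), where your method genuinely fails, as you correctly observe. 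Two small repairs: Lemma~\ref{lem:decreasingopen} requires a finite valuation on all of $\tpX$, so you should apply it to $\nu_W(\cdot):=\nu(\cdot\cap W)$, which is finite since $\nu(W)\leq\mu^*(A)+\varepsilon<\infty$; and the bijection is false without local finiteness (on $\R$, the valuation that is $\infty$ on every nonempty open is the restriction of several distinct Borel measures), so the $\sigma$-finiteness you invoke at the Dynkin step is a tacit hypothesis inherited from the cited sources, not optional bookkeeping.
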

\noindent
For a purely constructive approach valuations themselves are main objects of study, and we only (directly) consider the measure of open and closed sets.

Just as for classical measure theory, we say (open) sets $U_1,U_2$ are \emph{independent} if $\nu(U_1\cap U_2)=\nu(U_1)\nu(U_2)$.
\begin{definition}[Conditioning]
Given a sub-topology $\calV$ on $\tpX$ and a valuation $\nu$ on $\tpX$, a \emph{conditional valuation} is a function $\nu(\cdot|\cdot):\tpOp(\tpX)\times\calV\fto\tpHl$ such that $\nu(U\cap V)=\nu(U|V)\nu(V)$ for all $U\in\tpOp(\tpX)$ and $V\in\calV$.
\end{definition}
\noindent
Clearly, $\nu(U\cap V)$ can be computed given $\nu(U|V)$ and $\nu(V)$.
The conditional valuation $\nu(\cdot|V)$ is uniquely defined if $\nu(V)\neq0$.
However, since $\nu(U\cap V):\R^+_<$ but $1/\nu(V):\R^{+,\infty}_>$, the conditional valuation $\nu(\cdot|V)$ cannot be \emph{computed}, even when $\nu(V)>0$, unless we are also given a set $A\in\tpCl(\tpX)$ such that $V\subset A$ and $\bar{\nu}(A\setminus V)=0$, in which case we have $\nu(U|V)=\nu(U\cap V)/\bar{\nu}(A)$.

We can define a notion of integration for positive lower-semicontinuous functions by the \emph{Choquet} or \emph{horizontal} integral; see~\cite{Tix1995,Lawson2004,Vickers2008}.
\begin{definition}[Lower integral]
\label{defn:lowerintegral}
Given a valuation $\nu:(\tpX\fto\tpSi)\fto\tpHl$, define the lower integral $(\tpX\fto\tpHl)\fto\tpHl$ by
\begin{multline} \textstyle \int_{\tpX}\psi\,d\nu = \sup \bigl\{ {\textstyle\sum_{m=1}^{n}} (p_{m}-p_{m-1}) \, \nu(\psi^{-1}(p_{m},\infty]) \\ \mid (p_0,\ldots,p_n) \in \Q^* \wedge 0 = p_0 < p_1 < \cdots < p_n \bigr\}  \label{eq:lowerintegral} \end{multline}
which is equivalent to the real integral
\begin{equation} \textstyle \int_{\tpX}\!\psi\,d\nu = \int_0^\infty \nu\bigl(\psi^{-1}(x,\infty]\bigr) dx . \label{eq:lowerrealintegral} \end{equation}
\end{definition}
\noindent

\noindent
Note that we could use any dense set of computable positive real numbers, such as the dyadic rationals $\Q_2$, instead of the rationals in~\eqref{eq:lowerintegral}.
Since each sum is computable, and the supremum of countably many elements of $\tpHl$ is computable, the lower integral is computable.

It is fairly straightforward to show that the integral is linear,
\begin{equation} \textstyle \int_\tpX (a_1\psi_1+a_2\psi_2)\,d\nu= a_1\int_\tpX\psi_1\,d\nu+a_2\int_\tpX\psi_2\,d\nu  \end{equation}
for all $a_1,a_2\in \Hl$ and $\psi_1,\psi_2:\tpX\fto\Hl$.

If $\chif_U$ is the characteristic function of a set $U$, then
\( \textstyle \int_\tpX \chif_U \,d\nu= \nu(U) ,  \)
and it follows that if $\phi=\sum_{i=1}^{n} a_i\,\chif_{U_i}$ is a step function, then
\( \textstyle \int_\tpX\phi\,d\nu = \sum_{i=1}^{n} a_i\,\nu(U_i) .\)

Given a (lower-semi)continuous linear functional $\mu:(\tpX\fto\tpHl)\fto\tpHl$, we can define a function $\tpOp(\tpX)\fto\tpHl$ by
\( U \mapsto \mu(\chif_U) \) for $U\in\tpOp(\tpX)$.
By linearity, \[ \textstyle \mu(\chif_U)+\mu(\chif_V) = \mu(\chif_{U\cap V})+\mu(\chif_{U\cup V}) .  \]
Hence $\mu$ induces a valuation on $\tpX$.
We therefore obtain a computable equivalence between the type of valuations and the type of positive linear lower-semicontinuous functionals:
\begin{theorem}
\label{thm:lowermeasurevaluation}
The type of valuations $(\tpX\fto\tpSi)\fto\tpHl$ is computably equivalent to the type of continuous linear functionals $(\tpX\fto\tpHl)\fto\tpHl$.
\end{theorem}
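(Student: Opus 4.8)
The plan is to exhibit the two operations witnessing the equivalence and to check that they are mutually inverse computable maps. In one direction I take $\Phi$ sending a valuation $\nu$ to the lower integral $\psi\mapsto\int_\tpX\psi\,d\nu$ of Definition~\ref{defn:lowerintegral}; its linearity and computability have already been recorded, so $\Phi(\nu)$ is a genuine continuous linear functional $(\tpX\fto\tpHl)\fto\tpHl$ and $\Phi$ is computable. In the other direction I take $\Psi$ sending a functional $\mu$ to $U\mapsto\mu(\chif_U)$; since $\chif_U:\tpX\fto\tpHl$ is computable from $U\in\tpOp(\tpX)$ via the embedding $\tpSi\hookrightarrow\tpHl$ and evaluation is computable, $\Psi$ is computable, and the modularity computation above shows that $\Psi(\mu)$ is a valuation.

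It then remains to verify $\Psi\circ\Phi=\id$ and $\Phi\circ\Psi=\id$. The first is immediate from the identity $\int_\tpX\chif_U\,d\nu=\nu(U)$ noted above, since $\Psi(\Phi(\nu))(U)=\int_\tpX\chif_U\,d\nu=\nu(U)$ for every $U\in\tpOp(\tpX)$.

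The substantive direction is $\Phi\circ\Psi=\id$: for $\nu=\Psi(\mu)$ and every lower-semicontinuous $\psi:\tpX\fto\tpHl$ I must show $\int_\tpX\psi\,d\nu=\mu(\psi)$. First, for a step function $\phi=\sum_i a_i\chif_{U_i}$, linearity of $\mu$ together with $\nu(U_i)=\mu(\chif_{U_i})$ gives $\mu(\phi)=\sum_i a_i\,\nu(U_i)=\int_\tpX\phi\,d\nu$, so the two functionals agree on step functions. Next I approximate $\psi$ from below by the increasing sequence $\phi_n=\sum_{k=1}^{n2^n}2^{-n}\,\chif_{\psi^{-1}(k2^{-n},\infty]}$, for which $\phi_n\nearrow\psi$ pointwise; since the increments $\phi_{n+1}-\phi_n$ are again nonnegative step functions, the values $\mu(\phi_n)$ are nondecreasing. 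Because the open sets $\psi^{-1}(k2^{-n},\infty]$ are exactly those appearing in the dyadic partitions of~\eqref{eq:lowerintegral} and such partitions are cofinal, the definition of the lower integral yields $\int_\tpX\psi\,d\nu=\sup_n\int_\tpX\phi_n\,d\nu=\sup_n\mu(\phi_n)$.

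Finally I pass to the limit on the $\mu$ side: the sequence $\phi_n$ is increasing and converges to $\psi$ in the lower topology of $\tpX\fto\tpHl$, so sequential continuity of $\mu$ gives $\mu(\psi)=\sup_n\mu(\phi_n)$, whence $\int_\tpX\psi\,d\nu=\mu(\psi)$ as required. The main obstacle is precisely this last step, namely justifying that a sequentially continuous linear $\mu$ preserves the supremum of the increasing sequence $\phi_n$; this rests on the facts that in the lower topology an increasing sequence converges to its pointwise supremum and that the monotone values $\mu(\phi_n)$ therefore have supremum as their limit. The countable, cofinal choice of the dyadic approximants $\phi_n$ is what allows sequential continuity to suffice in place of full Scott-continuity over the uncountable directed family of step functions below $\psi$.
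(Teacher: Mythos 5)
Your two constructions and the easy verifications (computability and linearity of the lower integral, modularity of $U\mapsto\mu(\chif_U)$, and $\Psi\circ\Phi=\id$ via $\int_\tpX\chif_U\,d\nu=\nu(U)$) coincide with the paper's argument, which consists precisely of the discussion preceding the theorem and leaves the round trip $\Phi\circ\Psi=\id$ implicit; you are right to try to supply it. But your argument for that direction has a genuine gap: the claim that the increments $\phi_{n+1}-\phi_n$ are ``again nonnegative step functions.'' They are pointwise nonnegative and piecewise constant, but they need not be lower-semicontinuous, hence need not lie in the domain $\tpX\fto\tpHl$ of $\mu$ at all, so the inference ``$\mu(\phi_n)$ is nondecreasing'' via linearity is ill-formed. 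Concretely, on $\tpX=\R$ take $\psi(0)=1/2$ and $\psi(x)=3/5$ for $x\neq0$; this $\psi$ is lower-semicontinuous, yet $\phi_2-\phi_1=\tfrac{1}{4}\chif_{\{0\}}$, whose strict superlevel set $\{0\}$ is not open. (Expanding the dyadic refinement, the increment is a combination of characteristic functions of \emph{differences} of open sets.) This matters because your final step needs the inequality $\mu(\phi_n)\leq\mu(\psi)$, and sequential continuity cannot deliver it: since $\tpHl$ is non-Hausdorff, $\mu(\phi_n)\to\mu(\psi)$ in the lower topology only means $\liminf_n\mu(\phi_n)\geq\mu(\psi)$ (the sequence also converges to every value below its supremum), so without the missing bound you have proved only $\sup_n\mu(\phi_n)\geq\mu(\psi)$, not equality.

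The gap is repairable without changing your strategy: monotonicity of $\mu$ follows from continuity alone, via the specialization order rather than via differences. If $\phi\leq\psi$ pointwise with both lower-semicontinuous, then the constant sequence $(\psi,\psi,\ldots)$ converges to $\phi$ in $\tpX\fto\tpHl$ (for $x_k\to x$ one has $\liminf_k\psi(x_k)\geq\psi(x)\geq\phi(x)$ by lower semicontinuity of $\psi$), so every sequentially open set of the function space containing $\phi$ contains $\psi$; applying the continuous $\mu$ and noting that the specialization order on $\tpHl$ is exactly $\leq$ yields $\mu(\phi)\leq\mu(\psi)$. With $\mu(\phi_n)\leq\mu(\psi)$ in hand, the rest of your argument goes through, although you should also justify slightly more carefully that $\phi_n\to\psi$ in the exponential $\tpX\fto\tpHl$: the relevant notion is continuous convergence, $x_k\to x\implies\phi_k(x_k)\to\psi(x)$, which holds by combining monotonicity of the sequence with lower semicontinuity of each $\phi_m$ (given $a<\psi(x)$, pick $m$ with $\phi_m(x)>a$, and use openness of $\phi_m^{-1}(\,]a,\infty]\,)$), not merely by the observation that increasing sequences in $\tpHl$ converge pointwise to their supremum.
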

Types of the form $(\tpX\fto \tpT)\fto \tpT$ for a fixed type $\tpT$ form a \emph{monad}~\cite{Street1972} over $\tpX$, and are particularly easy to work with.

In~\cite[Section~4]{Edalat1995DTI}, a notion of integral $\ctsfn_\mathrm{bd}(\tpX;\R)\fto\R$ on continuous bounded functions was introduced based on the approximation by measures supported on finite sets of points.
Our lower integral on positive lower-semicontinuous functions can be extended to bounded functions as follows:
\begin{definition}[Bounded integration]
\label{defn:boundedintegral}
A continuous function $f:\tpX\fto\tpRe$ is \emph{effectively bounded} if there are (known) computable reals $a,b\in\tpRe$ such that $a < f(x) < b$ for all $x\in\tpX$.

If $\nu$ is effectively finite with $\nu(\tpX)=c$, we define the integral $\tpCts_{\mathrm{bd}}(\tpX;\tpRe) \fto \tpRe$ by
\[ \textstyle \int_\tpX f(x) \, d\nu(x) = \int_\tpX \bigl(a+f(x)\bigr) \,d\nu(x)-a\,c = b\,c - \int_\tpX \bigl(b-f(x)\bigr) \,d\nu(x)  \]
where $a<b$ are bounds for $f$.
\end{definition}
It is clear that the first formula for the integral of $f$ is computable in $\tpRe_<$ and the second in $\tpRe_>$, and that the lower and upper integrals agree if $f$ is continuous.
If $\tpX$ is compact, then any (semi)continuous function is effectively bounded, so the integrals always exist.

In order to define a valuation given a positive linear functional $\tpCts_{\mathrm{cpt}}(\tpX;\tpRe)\fto\tpRe$ on compactly-supported continuous functions, we need some way of approximating the characteristic function of an open set by continuous functions.
If $\tpX$ is \emph{effectively regular}, then given any open set $U$, we can construct an increasing sequence of closed sets $A_n$ such that $\bigcup_{n\to\infty} A_n=U$.
Further, a type $\tpX$ is \emph{effectively quasi-normal} if given disjoint closed sets $A_0$ and $A_1$, we can construct a continuous function $\phi:\tpX\fto[0,1]$ such that $\phi(A_0)=\{0\}$ and $\phi(A_1)=\{1\}$ using an effective Uryshon lemma; see~\cite{Schroeder2009} for details.

We then have an effective version of the Riesz representation theorem:
\begin{theorem}\label{thm:effectiveriesz}
Suppose $\tpX$ is an effectively regular and effectively quasi-normal type.
Then type of locally-finite valuations $(\tpX\fto\tpSi)\fto\tpHl$ is effectively equivalent to the type of positive linear functionals $\tpCts_{\mathrm{cpt}}(\tpX\fto\R)\fto\tpRe$ on continuous functions of compact support.
\end{theorem}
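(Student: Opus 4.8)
The plan is to exhibit mutually inverse computable maps in both directions: a map $\nu\mapsto\Lambda_\nu$ sending a valuation to its integration functional, and a map $\Lambda\mapsto\nu_\Lambda$ recovering a valuation from a functional by taking suprema over subordinate test functions. Each map must be shown to be computable with respect to the representations involved, and the classical Riesz theorem then supplies that they are inverse to one another.

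For the direction from valuations to functionals, I would set $\Lambda_\nu(f)=\int_\tpX f\,d\nu$, extending the lower integral of Definition~\ref{defn:lowerintegral} to compactly-supported $f$ exactly as in Definition~\ref{defn:boundedintegral}. Positivity is immediate, since $\int_\tpX f\,d\nu\ge0$ for $f\ge0$, and linearity follows from linearity of the lower integral together with the decomposition of $f$ into (continuous, compactly-supported) positive and negative parts. The substantive point is computability of $\Lambda_\nu(f)$ as a \emph{two-sided} real: the lower integral only yields a value in $\tpRe_<$. To obtain the complementary upper bound I would localise. Given $f$ with $\supp f$ compact and $|f|\le b$, use effective quasi-normality (the effective Urysohn lemma) to build a bump $g\in\tpCts_{\mathrm{cpt}}(\tpX;\R)$ with $0\le g\le 1$, $g\equiv1$ on a neighbourhood of $\supp f$, and $\supp g$ contained in a compact window $K$. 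Because $\nu$ is locally finite, its restriction to $K$ behaves like an effectively finite valuation, so on this window the upper/lower identity of Definition~\ref{defn:boundedintegral} applies to $bg\pm f$ and pins $\Lambda_\nu(f)$ down in $\tpRe$. Computability of $\nu\mapsto\Lambda_\nu$ is then the already-established computability of the integral, carried out uniformly in the names of $\nu$ and $f$.

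For the reverse direction I would define $\nu_\Lambda(U)=\sup\{\Lambda(f)\mid f\in\tpCts_{\mathrm{cpt}}(\tpX;\R),\ 0\le f\le1,\ \supp f\subseteq U\}$. The classical Riesz argument gives $\nu_\Lambda(\emptyset)=0$ and the modularity condition of Definition~\ref{defn:valuation}, by combining and truncating test functions subordinate to $U$, $V$, $U\cup V$ and $U\cap V$; local finiteness is inherited from the finiteness of $\Lambda$ on each fixed bump. The key computational content is that $\nu_\Lambda(U)$ is computable in $\tpHl$, i.e. from below, which is exactly where both hypotheses on $\tpX$ enter. Given $U$ as an element of $\tpOp(\tpX)$, effective regularity produces an increasing exhaustion by closed sets $A_n$ with compact closure and $\bigcup_n A_n=U$, and effective quasi-normality produces, for each $n$, a subordinate bump $f_n\in\tpCts_{\mathrm{cpt}}(\tpX;\R)$ with $f_n\equiv1$ on $A_n$, $0\le f_n\le1$ and $\supp f_n\subseteq U$. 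Since any admissible $f$ has compact support inside $U$, it is dominated by some $f_N$, so $\nu_\Lambda(U)=\sup_n\Lambda(f_n)$ is a computable countable supremum of reals, hence an element of $\tpHl$.

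I expect the main obstacle to be the two-sided computability in the forward direction: extracting an \emph{upper} bound on $\int_\tpX f\,d\nu$ from the purely lower information carried by a valuation. The localisation to a compact window on which $\nu$ behaves like an effectively finite valuation is what makes this possible, and checking that the window, the bump $g$, and the resulting estimates can all be produced effectively, together with verifying that the approximants $f_n$ in the reverse direction genuinely have compact support, is the technical heart of the argument.
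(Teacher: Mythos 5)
The paper itself gives no proof of Theorem~\ref{thm:effectiveriesz} (it defers the machinery to the cited work of Schr\"oder), so your sketch must be judged on its own. Your overall scheme --- $\nu\mapsto\Lambda_\nu$ by the integral of Definition~\ref{defn:boundedintegral}, $\Lambda\mapsto\nu_\Lambda$ by the supremum over subordinate bumps, with effective regularity producing the exhaustion and the effective Urysohn lemma producing the bumps --- is the standard and surely intended route, and your reverse direction is essentially sound. But the obstacle you correctly flag in the forward direction is not actually overcome by your localisation. ``Locally finite'' as defined in this paper is the purely classical condition $\nu(U)<\infty$; it carries no effective content, whereas Definition~\ref{defn:boundedintegral} requires the total mass $c$ of the window as a \emph{computable} real. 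A name of $\nu$ in $(\tpX\fto\tpSi)\fto\tpHl$ supplies only lower approximations to $\nu(\text{window})$, and no finite prefix of such a name bounds that mass from above. Concretely: on $\tpX=\R$ let $\nu_M=M\delta_0$ for $M\in\tpHl$; the map $M\mapsto\nu_M$ is computable into the type of locally finite valuations, and $\Lambda_{\nu_M}(f)=M\,f(0)$, so a computable map $\nu\mapsto\Lambda_\nu$ into $\tpCts_{\mathrm{cpt}}(\tpX\fto\R)\fto\tpRe$ would compute the identity $\tpHl\fto\tpRe$, which is not even continuous. So your step ``the restriction to $K$ behaves like an effectively finite valuation'' is precisely where the argument fails; the theorem needs an effectivised finiteness hypothesis --- e.g.\ a compact exhaustion together with $\nu$-regular open windows $W_n$ whose masses $\nu(W_n)$ are given as computable reals (the analogue of ``effectively finite'' localised) --- and with that extra data your two-sided estimate via $bg\pm f$ does go through.

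Two smaller repairs are needed in the reverse direction. First, from an increasing sequence of closed sets $A_n$ with $\bigcup_n A_n=U$ you cannot conclude that a compact $\supp f\subset U$ lies in some single $A_N$ (consider $A_n=\{0\}\cup[1/n,1]$ inside $U=[0,1]$); you need the interleaved form $A_n\subset V_n\subset A_{n+1}$ with $V_n$ open, which effective regularity can be made to provide, and then $\supp f\subset V_N$ for some $N$ by compactness, with $f\leq f_{N}$ once $f_{N}\equiv1$ on $\overline{V_N}$. Second, for the $f_n$ to lie in $\tpCts_{\mathrm{cpt}}$ their supports must be compact, which requires local compactness of $\tpX$; the statement leaves this implicit (as it must, since locally finite valuations and compactly supported test functions are only a sensible pairing there), but your construction should invoke it explicitly when choosing the windows containing $\supp f_n$.
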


We consider lower-semicontinuous functionals $(\tpX \fto \tpHl) \fto \tpHl$ to be more appropriate as a foundation for computable measure theory than the continuous functionals $(\tpX \fto \tpRe) \fto \tpRe$, since the equivalence given by Theorem~\ref{thm:lowermeasurevaluation} is entirely independent of any assumptions on the type $\tpX$ whereas the equivalence of Theorem~\ref{thm:effectiveriesz} requires extra properties of $\tpX$ and places restrictions on the function space.

\begin{theorem}[Fubini]
If $\tpX_1$ and $\tpX_2$ are countably-based spaces, then for any $\psi:\tpX_1\times \tpX_2\fto \tpHl$,
\begin{equation} \textstyle \int_{\tpX_1} \int_{\tpX_2} \psi(x_1,x_2) d\nu_2(x_2) d\nu_1(x_1) = \int_{\tpX_2} \int_{\tpX_1} \psi(x_1,x_2) d\nu_1(x_1) d\nu_2(x_2) . \end{equation}
\end{theorem}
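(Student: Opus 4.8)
The plan is to exploit the linearity and the increasing-suprema (monotone convergence) structure of the lower integral, and reduce the identity to a generating class of functions on the product. First I would check that both iterated integrals are well-typed: by cartesian closedness the map $\psi:\tpX_1\times\tpX_2\fto\tpHl$ curries to a continuous map $\tpX_1\fto(\tpX_2\fto\tpHl)$, and post-composing with the lower-integral functional $(\tpX_2\fto\tpHl)\fto\tpHl$ (which is computable, hence continuous) yields a continuous partial integral $x_1\mapsto\int_{\tpX_2}\psi(x_1,x_2)\,d\nu_2(x_2)$ lying in $\tpX_1\fto\tpHl$, so that the outer integral is meaningful (and symmetrically for the other order). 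Writing $L(\psi)$ and $R(\psi)$ for the two iterated integrals, each is a positive linear functional $(\tpX_1\times\tpX_2\fto\tpHl)\fto\tpHl$ that moreover preserves increasing suprema, since the lower integral does (by the continuity condition on valuations together with commutativity of suprema).

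Next I would verify the identity on the basic building blocks. For an open rectangle $U_1\times U_2$ with $U_i\in\tpOp(\tpX_i)$, the inner integral of $\chif_{U_1\times U_2}(x_1,\cdot)$ equals $\chif_{U_1}(x_1)\,\nu_2(U_2)$, using $\int_{\tpX_2}\chif_{U_2}\,d\nu_2=\nu_2(U_2)$; integrating over $x_1$ then gives $\nu_1(U_1)\,\nu_2(U_2)$, and the reversed order yields the same product. By linearity this extends to every finite positive-rational combination $\sum_i c_i\,\chif_{U_1^i\times U_2^i}$, so $L$ and $R$ agree on all such rectangle step functions.

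The remaining step is to approximate an arbitrary $\psi$ from below by such step functions. Since $\tpX_1$ and $\tpX_2$ are countably-based, the product has a countable base of open rectangles, so each layer $\psi^{-1}(p,\infty]$, being open in the product, is an increasing countable union of finite unions of basic rectangles; feeding these into the horizontal-integral expression~\eqref{eq:lowerintegral} exhibits $\psi$ as the increasing supremum of a sequence of rectangle step functions $\phi_n$. Applying the sup-preservation of $L$ and $R$ then gives $L(\psi)=\sup_n L(\phi_n)=\sup_n R(\phi_n)=R(\psi)$, as required.

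The main obstacle is the monotone convergence property, namely that the lower integral commutes with increasing suprema, and that this survives both nested applications. For a single integral it follows from the definition~\eqref{eq:lowerrealintegral} and the continuity condition $\nu\bigl(\bigcup_n V_n\bigr)=\lim_n\nu(V_n)$; the delicate point is that the inner integrals $\int_{\tpX_2}\phi_n(x_1,x_2)\,d\nu_2(x_2)$ themselves form an increasing sequence of lower-semicontinuous functions of $x_1$ converging pointwise to $\int_{\tpX_2}\psi(x_1,x_2)\,d\nu_2(x_2)$, so that a second monotone-convergence step legitimises interchanging $\sup_n$ with the outer integral. Establishing this two-level interchange carefully, and in particular the lower-semicontinuity of the partial integrals needed to apply it, is where the countably-based hypothesis is essential.
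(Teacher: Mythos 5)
The paper states this theorem without proof, so your proposal must stand on its own. Its architecture is the standard one for Fubini on valuations (verify on rectangles, extend to a generating class, then a two-level monotone-convergence argument, with the lower-semicontinuity of the partial integrals correctly identified as the point where currying and the countably-based hypothesis enter), and most of it is sound. But there is one genuine gap, at the extension step. Your generating class is the set of positive-rational combinations $\sum_i c_i\,\chif_{U_1^i\times U_2^i}$, and you prove $L=R$ on it by linearity. However, the approximants you actually construct are not of this form: decomposing each open layer $\psi^{-1}(p_m,\infty]$ into an increasing union of \emph{finite unions} of basic rectangles produces step functions whose building blocks are indicators $\chif_{R_1\cup\cdots\cup R_k}$ of finite unions of rectangles. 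For overlapping rectangles, $\chif_{R\cup R'}=\chif_R+\chif_{R'}-\chif_{R\cap R'}$ has a negative coefficient, so such an indicator is \emph{not} a positive combination of rectangle indicators (consider two overlapping open squares in $\R^2$: any rectangle on which a positive combination below $\chif_{R\cup R'}$ is supported must lie inside the L-shaped union, hence inside $R$ or $R'$, and no positive combination of these reproduces the indicator), and it is not available by an increasing supremum of the functions you have verified either. Linearity alone therefore does not carry the identity $L=R$ from rectangles to the step functions your layer decomposition produces.

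The missing ingredient is modularity. Restricted to indicators of open sets, both iterated integrals define set functions $F(W)=L(\chif_W)$ and $G(W)=R(\chif_W)$ on $\tpOp(\tpX_1\times\tpX_2)$ which are themselves valuations: the slice map $W\mapsto W_{x_1}=\{x_2\mid(x_1,x_2)\in W\}$ commutes with $\cup$ and $\cap$, so $x_1\mapsto\nu_2(W_{x_1})$ satisfies the modular identity pointwise, and the outer lower integral preserves it by linearity; both are also continuous under increasing countable unions by the monotone-convergence property you established. Since $F$ and $G$ agree on rectangles, and rectangles are closed under finite intersection, induction on the number of rectangles via $F(A\cup R)+F(A\cap R)=F(A)+F(R)$ gives agreement on finite unions of rectangles (when the common value $F(A\cap R)=G(A\cap R)$ is infinite, monotonicity forces $F(A\cup R)=G(A\cup R)=\infty$ directly, avoiding subtraction in $\tpHl$), and continuity plus the countable base then gives $F=G$ on all product opens. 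With that lemma in place, your approximation scheme and the double monotone-convergence interchange go through exactly as you describe, since now the step functions over layers (arbitrary product opens, not just rectangles) are covered by linearity over $F=G$.
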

\noindent
Extending valuations to functions $(\tpX\fto\tpHl)\fto\tpHl$, we can write
\[ \nu_1(\lambda x_1.\nu_2(\lambda x_2.\psi(x_1,x_2)))=\nu_2(\lambda x_2.\nu_1(\lambda x_1.\psi(x_1,x_2))) . \]
\begin{definition}[Product valuation]
Let $\nu_i$ be a valuation on $\tpX_i$ for $i=1,2$, where each $\tpX_i$ is countably-based.
The \emph{product} of two valuations is given by
\begin{equation}\begin{aligned} \textstyle [\nu_1 \times \nu_2](U) & \textstyle := \int_{\tpX_1} \int_{\tpX_2} \chi_U(x_1,x_2) d\nu_2(x_2) d\nu_1(x_1) \\ &\textstyle \qquad = \int_{\tpX_2} \int_{\tpX_1} \chi_U(x_1,x_2) d\nu_1(x_1) d\nu_2(x_2) , \end{aligned}\end{equation}
where the two integrals are equal by Fubini's theorem.
\end{definition}


In the sequel, we shall make frequent use of the following result.
\begin{proposition}
\label{prop:subsetintersectionunion}
Let $U,V,W:\tpOp(\tpX)$ with $U\subset V$, and $\nu$ a valuation on $\tpX$. Then
\begin{enumerate}[(a)]
\item\label{itm:subsetintersection} $\nu(U)+\nu(V\cap W) \leq \nu(V)+\nu(U\cap W)$, and
\item\label{itm:subsetunion} $\nu(U)+\nu(V\cup W) \leq \nu(V) + \nu(U\cup W)$.
\end{enumerate}
\end{proposition}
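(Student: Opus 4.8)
The plan is to obtain both inequalities directly from the \emph{modularity} and \emph{monotonicity} properties of $\nu$ recorded after Definition~\ref{defn:valuation}, by feeding a suitably chosen pair of open sets into the modularity identity $\nu(A)+\nu(B)=\nu(A\cup B)+\nu(A\cap B)$ in each case. The single fact that makes everything collapse is the hypothesis $U\subset V$, which gives $U\cap V=U$, so that the intersection and union terms produced by modularity simplify to exactly the quantities appearing in the proposition.

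For part~\itemref{itm:subsetintersection} I would apply modularity to the pair $U$ and $V\cap W$, giving $\nu(U)+\nu(V\cap W)=\nu\bigl(U\cup(V\cap W)\bigr)+\nu\bigl(U\cap V\cap W\bigr)$. Since $U\subset V$, the intersection term is $\nu(U\cap W)$, while $U\cup(V\cap W)\subset V$, so monotonicity gives $\nu\bigl(U\cup(V\cap W)\bigr)\leq\nu(V)$; substituting yields the claim. For part~\itemref{itm:subsetunion} I would instead apply modularity to the pair $V$ and $U\cup W$, obtaining $\nu(V)+\nu(U\cup W)=\nu\bigl(V\cup U\cup W\bigr)+\nu\bigl(V\cap(U\cup W)\bigr)$. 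Because $U\subset V$, the union term is $\nu(V\cup W)$, while $V\cap(U\cup W)=U\cup(V\cap W)\supset U$, so monotonicity gives $\nu\bigl(V\cap(U\cup W)\bigr)\geq\nu(U)$; rearranging produces the stated inequality.

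The point that requires genuine care, rather than the routine set-algebra, is that $\nu$ takes values in $\tpHl$, where subtraction is not available and values may be $\infty$. The argument above is deliberately arranged to avoid any cancellation: in each part I only rewrite one side via an \emph{equality} coming from modularity and then apply monotonicity of one summand, using that addition on $\tpHl$ is monotone. Hence the reasoning stays valid even when some of the measures involved are infinite, which is the main obstacle to watch for. Beyond that, the only real ingredient is selecting the correct pair of sets to insert into modularity; once that choice is made, both inequalities are immediate.
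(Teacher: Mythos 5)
Your proof is correct and is essentially the intended argument: the paper states only that ``the proof is straightforward,'' and the straightforward route is exactly this combination of modularity and monotonicity of $\nu$, using $U\subset V$ to simplify the terms (applying modularity to the pair $U,\ V\cap W$ for \itemref{itm:subsetintersection} and to $V,\ U\cup W$ for \itemref{itm:subsetunion}). Your explicit care to apply modularity once per inequality and then bound a single summand by monotonicity, so that no subtraction in $\tpHl$ is ever invoked, correctly handles the possibility of infinite values and is a point worth having made.
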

\noindent
The proof is straightforward.

\section{Lower-measurable sets}
\label{sec:measurablesets}

In this section, we define measures of non-open sets.

The standard approach to probability theory used in classical analysis is to define a measure over a $\sigma$-algebra of sets.
The main difficulty with a direct effectivisation of probability theory via $\sigma$-algebras is the operation on complementation.
Given an open set $U$, we can \emph{only} hope to compute $\nu(U)$ in $\tpRe^+_<$ (i.e. from below), and since $\sigma$-algebras are closed under complementation, for $A=U^\complement$, we compute $\nu(A)$ in $\tpRe^+_>$.
Then for a countable union of nested closed sets $A_\infty=\bigcup_{n=0}^{\infty} A_n$ with $A_{n+1}\subset A_n$, we need to find information about the limit $\lim_{n\to\infty}\nu(A_n)$ which is an increasing sequence in $\tpRe^+_>$, so we can find neither an upper- or a lower-bound.

Our solution is to consider, for a \emph{fixed} probability valuation $\nu$, a type of \emph{$\nu$-lower-measurable sets}.
These essentially extend the open sets to the $G_\delta$ sets, and have a representation under which it is possible to compute the $\nu$-measure from below.
Further, they are closed under finite intersection and countable union (so may be called a \emph{$\sigma$-semiring}, though this usage is different from that of e.g.~\cite{Schmets2004PP}).
However, they do not formally define a topology on $\tpX$, since they are essentially \emph{equivalence-classes} of subsets of $\tpX$ (and even this property only holds for sufficiently nice spaces, including the Polish spaces).
The resulting theory can be seen as a construction of an outer-regular measure for $G_\delta$ subsets of a space (see~\cite{vanGaansPP}.

\subsection{Definition and basic properties}

We first define the type of lower-measurable sets, and prove some of its basic properties.

\begin{definition}[Lower-Cauchy sequence]
\label{defn:lowermeasureablesequence}
Let $\calV$ be a set with a intersection (or \emph{meet}) operation $\cap:\calV\times\calV\to\calV$ and a compatible subset (or order) relation $\subset$ on $\calV\times\calV$, and let $\nu:\calV\to\R$.
\par
A sequence $(V_k)$ of elements of $\calV$ is a \emph{lower-Cauchy sequence} if for all $\epsilon>0$, there exists $n=N(\epsilon)$, such that
\begin{equation}\label{eq:effectivelowermeasurable} \forall m>n, \ \nu(V_m\cap V_n)\geq \nu(V_n)-\epsilon .  \end{equation}
\par
The convergence is \emph{effective} if $N(\epsilon)$ is known, equivalently, if there is a known sequence $(\epsilon_k)$ with $\lim_{k\to\infty}\epsilon_k=0$ such that for all $m>n$, $\nu(V_m\cap V_n)\geq \nu(V_n)-\epsilon_n$. The convergence is \emph{fast} if $\epsilon_k=2^{-k}$.
\par
The sequence is \emph{monotone (decreasing)} if for all $m>n$, $V_m\subset V_n$.
\end{definition}
\noindent
If $(V_n)$ is an effective lower-Cauchy sequence, then
\[ \forall m>n, \ \nu(V_n\setminus V_m) \leq \epsilon_n . \]
If $(V_n)$ is a fast monotone lower-Cauchy sequence, then
\[ \forall m>n, \ \nu(V_n) \geq \nu(V_m)\geq \nu(V_n)- 2^{-n} . \]

\begin{definition}[Equivalence of lower-Cauchy sequence]
\label{defn:equivalentlowermeasureablesequences}
Two $\nu$-lower-Cauchy sequences are \emph{equivalent}, denoted $(U_n)\sim (V_n)$ if, and only if, for
\[ \forall \epsilon>0,\;\exists n\in\N,\;\forall m \geq n,\ \nu(U_m\cap V_m) \geq \max(\nu(U_m),\nu(V_m)) - \epsilon . \]
\end{definition}

\begin{lemma}
If $(U_n)$ and $V_n$ are fast lower-Cauchy sequences, then $(U_n)$ and $(V_n)$ are equivalent if, and only if
\[ \forall n\in\N, \ \nu(U_n\cap V_n) \geq \max(\nu(U_n),\nu(V_n)) - 2^{-n} . \]
\end{lemma}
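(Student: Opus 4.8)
The plan is to prove the two implications of the biconditional separately, writing (C) for the displayed inequality required to hold at \emph{every} index with rate $2^{-n}$, and (E) for equivalence in the sense of Definition~\ref{defn:equivalentlowermeasureablesequences}. The only structural facts I will use are monotonicity of $\nu$ with respect to $\subset$, the defining inequalities of the two fast lower-Cauchy sequences, and (for one direction) that the representatives are nested. The essential point is that the two rates coincide: the $2^{-n}$ appearing in the fast Cauchy condition and the $2^{-n}$ appearing in (C) are the same, and this is what lets the two notions agree at every finite stage rather than only in the limit.

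The implication (C)$\Rightarrow$(E) is a routine tail argument and uses no property of the sequences beyond (C) itself. Given $\epsilon>0$, choose $N$ with $2^{-N}\le\epsilon$; then for every $m\ge N$ we have $2^{-m}\le 2^{-N}\le\epsilon$, so (C) at index $m$ gives $\nu(U_m\cap V_m)\ge\max(\nu(U_m),\nu(V_m))-2^{-m}\ge\max(\nu(U_m),\nu(V_m))-\epsilon$, which is exactly the defining condition of equivalence.

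The implication (E)$\Rightarrow$(C) carries the real content. Fix $n$; by the symmetry of the two sequences it suffices to prove $\nu(U_n\cap V_n)\ge\nu(U_n)-2^{-n}$, since exchanging the roles of $U$ and $V$ then yields $\nu(U_n\cap V_n)\ge\nu(V_n)-2^{-n}$, and the conjunction of these two inequalities is precisely the $\max$ form. The strategy is to push the fixed index $n$ out to the tail, where equivalence is informative. First, monotonicity of $\nu$ together with the fast Cauchy condition for $(U_n)$ gives, for every $m>n$, the bound $\nu(U_m)\ge\nu(U_m\cap U_n)\ge\nu(U_n)-2^{-n}$. Second, because later terms are contained in earlier ones we have $U_m\cap V_m\subset U_n\cap V_n$, so $\nu(U_n\cap V_n)\ge\nu(U_m\cap V_m)$. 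Third, for $m$ beyond the threshold that equivalence supplies for a given $\epsilon$, we have $\nu(U_m\cap V_m)\ge\nu(U_m)-\epsilon$. Chaining the three bounds gives $\nu(U_n\cap V_n)\ge\nu(U_n)-2^{-n}-\epsilon$, and letting $\epsilon\to0$ removes the last term.

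The main obstacle is exactly this transfer from the asymptotic statement (E) to a bound at the fixed index $n$ with the precise constant $2^{-n}$. The step that makes it go through — and the only one that uses more than monotonicity of $\nu$ and the defining inequalities — is the containment $U_m\cap V_m\subset U_n\cap V_n$, which rests on the representatives being nested (monotone decreasing), the natural form in which $G_\delta$ sets are presented here. I would therefore invoke monotonicity of the sequences at exactly that point and nowhere else; a non-nested fast lower-Cauchy sequence can shed measure at an early index and regain it later, decoupling $\nu(U_n)$ from the tail and breaking the chain, so this hypothesis is genuinely needed and not merely a convenience.
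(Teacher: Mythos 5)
Your proof is correct; note that the paper states this lemma without any proof, so your argument supplies the missing content rather than diverging from an existing one. Both directions are sound: (C)$\Rightarrow$(E) is the routine tail estimate you give, and in (E)$\Rightarrow$(C) the chain $\nu(U_n\cap V_n)\geq\nu(U_m\cap V_m)\geq\nu(U_m)-\epsilon\geq\nu(U_n)-2^{-n}-\epsilon$ (for $m>n$ beyond the equivalence threshold, then $\epsilon\to0$) is exactly right, with the fast-Cauchy inequality $\nu(U_m\cap U_n)\geq\nu(U_n)-2^{-n}$ and monotonicity of $\nu$ supplying the first link and the nestedness of the sequences supplying the second.

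Your closing observation is the one point worth dwelling on, and you have it exactly right: as literally stated (``fast lower-Cauchy sequences'', with no nestedness), the lemma is false, and your ``shed measure and regain it'' scenario can be made concrete. Let $\nu$ be Lebesgue measure on $\tpX=(0,1)$, let $U_n=(0,1)$ for all $n$, and let $V_n=(0,1)$ for $n\neq2$ with $V_2=(0,\tfrac12)$. Then $(V_n)$ is fast lower-Cauchy --- the only binding case is $n=1<m=2$, where $\nu(V_2\cap V_1)=\tfrac12\geq\nu(V_1)-2^{-1}$ --- and $(U_n)\sim(V_n)$ in the sense of Definition~\ref{defn:equivalentlowermeasureablesequences}, since for all $m\geq3$ one has $\nu(U_m\cap V_m)=1=\max(\nu(U_m),\nu(V_m))$; yet at $n=2$, $\nu(U_2\cap V_2)=\tfrac12<\max(\nu(U_2),\nu(V_2))-2^{-2}=\tfrac34$. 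So the monotone-decreasing hypothesis you invoke at the containment $U_m\cap V_m\subset U_n\cap V_n$ is genuinely necessary, not a convenience, and your reading matches the paper's evident intent: lower-measurable sets (Definition~\ref{defn:lowermeasurableset}) are equivalence classes of fast \emph{monotone} lower-Cauchy sequences, and the paper's own proof of transitivity of $\sim$ in the following lemma likewise silently strengthens its hypothesis to ``fast monotone''. The statement of the present lemma should be read (or amended) the same way.
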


\begin{lemma}
The relation $\sim$ is an equivalence relation on fast lower-Cauchy sequences.
\end{lemma}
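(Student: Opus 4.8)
The plan is to verify the three defining properties of an equivalence relation, of which only transitivity requires any real work. Reflexivity is immediate: since $\nu(U_m\cap U_m)=\nu(U_m)=\max(\nu(U_m),\nu(U_m))$, the defining inequality of Definition~\ref{defn:equivalentlowermeasureablesequences} holds for every $\epsilon>0$ (indeed with $\epsilon=0$) and any choice of $n$. Symmetry is equally immediate, because $\cap$ is commutative and $\max$ is symmetric, so the condition $\nu(U_m\cap V_m)\geq\max(\nu(U_m),\nu(V_m))-\epsilon$ is literally unchanged when $U$ and $V$ are interchanged.

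For transitivity I would assume $(U_n)\sim(V_n)$ and $(V_n)\sim(W_n)$ and fix $\epsilon>0$. First I would invoke each hypothesis at tolerance $\epsilon/2$: there are $n_1,n_2$ so that $\nu(U_m\cap V_m)\geq\max(\nu(U_m),\nu(V_m))-\epsilon/2$ for $m\geq n_1$, and $\nu(V_m\cap W_m)\geq\max(\nu(V_m),\nu(W_m))-\epsilon/2$ for $m\geq n_2$. Setting $n=\max(n_1,n_2)$, the goal for $m\geq n$ is a lower bound on $\nu(U_m\cap W_m)$. The central step is the inequality
\[ \nu(U_m\cap V_m\cap W_m)\;\geq\;\nu(U_m\cap V_m)+\nu(V_m\cap W_m)-\nu(V_m), \]
which I would obtain from modularity applied to the pair $U_m\cap V_m$ and $V_m\cap W_m$, whose union $V_m\cap(U_m\cup W_m)$ is contained in $V_m$ so that $\nu$ of the union is at most $\nu(V_m)$ (equivalently, this is Proposition~\ref{prop:subsetintersectionunion}(a) applied to the inclusion $U_m\cap V_m\subset V_m$). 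Since $U_m\cap V_m\cap W_m\subset U_m\cap W_m$, monotonicity transfers the same lower bound to $\nu(U_m\cap W_m)$.

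To finish I would split on which of $\nu(U_m)$, $\nu(W_m)$ is larger. If $\nu(U_m)\geq\nu(W_m)$, I bound the two intersection terms below by $\nu(U_m)-\epsilon/2$ and $\nu(V_m)-\epsilon/2$ respectively (using $\max(\nu(U_m),\nu(V_m))\geq\nu(U_m)$ and $\max(\nu(V_m),\nu(W_m))\geq\nu(V_m)$); the $\nu(V_m)$ terms then cancel, leaving $\nu(U_m\cap W_m)\geq\nu(U_m)-\epsilon=\max(\nu(U_m),\nu(W_m))-\epsilon$. The case $\nu(W_m)\geq\nu(U_m)$ is symmetric. As $\epsilon>0$ was arbitrary, this establishes $(U_n)\sim(W_n)$.

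The only genuine obstacle is transitivity, and within it the essential idea is that the triple intersection can be controlled through the common middle term $V_m$ by modularity; the factor of two this introduces is harmless precisely because equivalence is phrased with an arbitrary tolerance, so feeding the hypotheses in at $\epsilon/2$ absorbs it cleanly. I would deliberately avoid routing the argument through the fast ($2^{-n}$) characterisation of the preceding lemma, since the doubling would spoil the exact $2^{-n}$ bound there; working from the $\epsilon$-form of Definition~\ref{defn:equivalentlowermeasureablesequences} is the natural choice.
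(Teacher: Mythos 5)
Your proof is correct, and the central inequality is the same as the paper's: both arguments control the triple intersection through the common middle term by modularity, $\nu(U_m\cap V_m\cap W_m)\geq\nu(U_m\cap V_m)+\nu(V_m\cap W_m)-\nu(V_m)$ (the paper's proof writes $-\nu(W_m)$ at this point, which is a typo for $-\nu(V_m)$, as its own subsequent rearrangement confirms). Where you genuinely diverge is in how the resulting factor of two is managed. You work straight from the $\epsilon$-form of Definition~\ref{defn:equivalentlowermeasureablesequences}, feeding in both hypotheses at tolerance $\epsilon/2$ at the \emph{same} index $m$ and splitting cases on $\max$; the paper instead proves the exact fast form $\nu(U_n\cap W_n)\geq\max(\nu(U_n),\nu(W_n))-2^{-n}$ from the preceding characterisation lemma, absorbing the doubling by a different device you explicitly chose to avoid: it bounds $\nu(U_n\cap W_n)\geq\nu(U_m\cap V_m\cap W_m)$ for arbitrary $m>n$ using monotonicity of the sequences, applies the fast hypotheses at level $m$ together with fastness of $(U_n)$ to obtain $\nu(U_n\cap W_n)\geq\nu(U_n)-2^{-n}-2\!\times\!2^{-m}$, and then lets $m\to\infty$. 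The trade-off is instructive: the paper's route recovers the sharp $2^{-n}$ bound (so the fast characterisation is preserved with no constant loss), but it needs the sequences to be fast \emph{and} monotone; your route is quantitatively looser yet strictly more elementary and more general, since it uses only modularity and monotonicity of $\nu$ itself, and hence shows that $\sim$ is transitive on arbitrary lower-Cauchy sequences of open sets, with fastness playing no role. By the characterisation lemma, nothing is lost for fast sequences, so your choice to stay at the $\epsilon$-level is sound.
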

\begin{proof}
Reflexivity and commutativity are immediate; it remains to show transitivity.
Suppose $(U_n)$, $(V_n)$ and $(W_n)$ are fast monotone $\nu$-lower-Cauchy sequences, that $(U_n)\sim(V_n)$ and $(V_n)\sim (W_n)$. Then for any $m>n$, $\nu(U_n\cap W_n)\geq\nu(U_m\cap V_m\cap W_m)\geq \nu(U_m\cap V_m)+\nu(V_m\cap W_m)-\nu(W_m) \geq \nu(U_m)+\bigl(\nu(U_m\cap V_m)-\nu(U_m)\bigr)+\bigl(\nu(V_m\cap W_m)-\nu(V_m)\bigr)\geq \nu(U_m)-2\!\times\!2^{-m} \geq \nu(U_n)-2^{-n}-2\!\times\!2^{-m}$. Since $m$ can be made arbitrarily large, $\nu(U_n\cap W_n)\geq \nu(U_n)-2^{-n}$ as required. By symmetry, $\nu(U_n\cap W_n)\geq \nu(W_n)-2^{-n}$.
\end{proof}

\begin{definition}[Lower-measurable set]
\label{defn:lowermeasurableset}
Let $\tpX$ be a type, and $\nu$ a valuation on $\tpX$.
The type of $\nu$-lower-measurable sets is defined as the equivalence classes of fast monotone $\nu$-lower-Cauchy sequences of open subsets of~$\tpX$.
\par
Note that a fast monotone $\nu$-lower-Cauchy sequence $(U_n)$ satisfies
\begin{equation}
\forall n\in\N, \; \forall m>n,\ U_m\subset U_n \wedge \nu(U_m)\geq \nu(U_n)-2^{-n} .
\end{equation}
\end{definition}

By countable-additivity, the classical measure of $U_\infty$ coincides with its $\nu$-lower-measure.
\begin{proposition}[Lower-measure is computable]
If $(U_n)$ is a fast monotone lower-Cauchy sequence of open sets, then $\sup_{n\in\N}(\nu(U_n)-2^{-n})$ is computable in $\R^+_<$.
\end{proposition}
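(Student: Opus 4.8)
The plan is to exhibit the quantity $\sup_{n\in\N}(\nu(U_n)-2^{-n})$ as a computable countable supremum of elements of $\tpHl$. Since each $U_n$ is open and $\nu$ has type $\tpOp(\tpX)\fto\tpHl$, every $\nu(U_n)$ is computable from below, uniformly in $n$, because we are given $(U_n)$ as a computable sequence of open sets. The only operations I may use on $\tpHl$ are those recorded earlier --- addition, multiplication, and countable supremum --- and in particular subtraction is \emph{not} available, so the term $-2^{-n}$ must be handled with care. This is the main obstacle, and the observation that resolves it is that $2^{-n}$ is an exact rational and that I only need a \emph{truncated} subtraction, which is continuous for the lower topology.

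Concretely, I would first show that for each fixed $n$ the map $s_n:\tpHl\fto\tpHl$, $s_n(x)=\max(0,x-2^{-n})$, is computable uniformly in $n$: given a name of $x$ as an increasing sequence of rationals $q_0\le q_1\le\cdots$ with $\sup_k q_k=x$, the sequence $\max(0,q_k-2^{-n})$ is again increasing with supremum $\max(0,x-2^{-n})$, hence a valid $\tpHl$-name. Composing with the computable valuation shows that $b_n:=\max(0,\nu(U_n)-2^{-n})$ is a computable element of $\tpHl$, uniformly in $n$, so $(b_n)$ is a computable element of $\tpHl^\omega$. Applying the computability of countable supremum $\sup:\tpHl^\omega\fto\tpHl$ then yields that $\sup_{n\in\N} b_n$ is computable in $\tpHl$.

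It remains to identify $\sup_n b_n$ with the desired value, which reduces to checking that truncation at $0$ changes nothing because the supremum is nonnegative. By monotonicity of $\nu$, the sequence $\nu(U_n)$ is decreasing and bounded below by $0$, so it has a limit $L\ge0$; and letting $m\to\infty$ in the fast lower-Cauchy inequality $\nu(U_m)\ge\nu(U_n)-2^{-n}$ (valid since $U_m\subset U_n$) gives $L\ge\nu(U_n)-2^{-n}$, i.e. $\nu(U_n)-L\le 2^{-n}$. Hence $L-2^{-n}\le\nu(U_n)-2^{-n}\le L$ for all $n$, so $\sup_n(\nu(U_n)-2^{-n})=L\ge0$. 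Since $\max(0,\cdot)$ commutes with suprema, $\sup_n b_n=\max\bigl(0,\sup_n(\nu(U_n)-2^{-n})\bigr)=\max(0,L)=\sup_n(\nu(U_n)-2^{-n})$, which is therefore computable in $\R^+_<$, as claimed.
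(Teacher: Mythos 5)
Your proof is correct and follows essentially the same route as the paper, which simply observes that $\sup_{n\in\N}\bigl(\nu(U_n)-2^{-n}\bigr)$ is a countable supremum of lower-reals and hence computable in $\tpRe^+_<$. Your additional care with the truncated subtraction $\max(0,\nu(U_n)-2^{-n})$ --- needed because unrestricted subtraction is not available on $\tpHl$ --- and the verification via the fast monotone Cauchy condition that the truncation does not change the supremum, fill in a detail the paper leaves implicit.
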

\begin{proof}
The value $\sup_{n\in\N}\nu(U_n)-2^{-n}$ is the supremum of a countable set of lower-reals, so is computable in $\tpR^+_<$.
\end{proof}
\begin{proposition}[Lower-measure is well-defined]
If $(U_n)$ and $(V_n)$ are equivalent fast monotone $\nu$-lower-Cauchy sequences, then $\sup_{n\in\N}\bigl(\nu(U_n)-2^{-n}\bigr)=\sup_{n\in\N}\bigl(\nu(V_n)-2^{-n}\bigr)$.
\end{proposition}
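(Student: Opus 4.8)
The plan is to establish the two inequalities $\sup_{n\in\N}(\nu(U_n)-2^{-n}) \le \sup_{n\in\N}(\nu(V_n)-2^{-n})$ and its reverse; since the hypotheses are symmetric in $(U_n)$ and $(V_n)$, it suffices to prove one of them. Writing $L_U := \sup_{n\in\N}(\nu(U_n)-2^{-n})$ and $L_V := \sup_{n\in\N}(\nu(V_n)-2^{-n})$, I would show $L_U \le L_V$.

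The key preliminary estimate is the pointwise comparison $|\nu(U_n)-\nu(V_n)| \le 2^{-n}$. To obtain it, I would invoke the preceding lemma, which for fast lower-Cauchy sequences characterises equivalence by $\nu(U_n\cap V_n)\ge\max(\nu(U_n),\nu(V_n))-2^{-n}$ for every $n$. On the other hand, since $U_n\cap V_n\subset U_n$ and $U_n\cap V_n\subset V_n$, monotonicity of $\nu$ gives $\nu(U_n\cap V_n)\le\min(\nu(U_n),\nu(V_n))$. Chaining these two bounds yields $\max(\nu(U_n),\nu(V_n))-\min(\nu(U_n),\nu(V_n))\le 2^{-n}$, which is exactly the asserted comparison.

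With this in hand I would fix $n$ and bound $\nu(U_n)-2^{-n}$ by $L_V$. For any $m>n$ the fast monotone Cauchy property of $(U_n)$ gives $\nu(U_m)\ge\nu(U_n)-2^{-n}$, while the pointwise comparison at index $m$ gives $\nu(U_m)\le\nu(V_m)+2^{-m}$. Combining, $\nu(U_n)-2^{-n}\le\nu(V_m)+2^{-m}=(\nu(V_m)-2^{-m})+2\cdot 2^{-m}\le L_V+2\cdot 2^{-m}$, using $\nu(V_m)-2^{-m}\le L_V$. Letting $m\to\infty$ removes the error term and leaves $\nu(U_n)-2^{-n}\le L_V$; taking the supremum over $n$ gives $L_U\le L_V$, and the symmetric argument completes the proof.

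The main obstacle is the subtle point that the naive bound $\nu(U_n)-2^{-n}\le\nu(V_n)$, which follows at once from $|\nu(U_n)-\nu(V_n)|\le 2^{-n}$, is \emph{not} sufficient, because $L_V$ is the $2^{-m}$-shifted supremum rather than $\nu(V_n)$ itself, and $\nu(V_n)$ may exceed $L_V$. The device of passing to a large index $m>n$, invoking fast convergence a second time, and then sending $m\to\infty$ is precisely what reconciles the shift; everything else is routine monotonicity bookkeeping.
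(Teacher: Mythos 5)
Your proof is correct and takes essentially the same route as the paper's: both derive the pointwise comparison $\nu(U_n)\geq\nu(V_n)-2^{-n}$ from the equivalence characterisation together with monotonicity of $\nu$, establish one inequality with an $\epsilon$-of-room argument, and conclude by symmetry. The only difference is bookkeeping: the paper absorbs the $2^{-m}$-shift via the bound $\nu(V_n)\geq\sup_{m\in\N}\bigl(\nu(V_m)-2^{-m}\bigr)$ (valid for fast monotone sequences, though the paper's one-line chain leaves this unjustified), whereas you pass to a large index $m$, invoke the fast Cauchy property of $(U_n)$, and let $m\to\infty$ --- so your writeup is, if anything, more explicit about precisely the subtlety the paper glosses over.
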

\begin{proof}
Suppose $U_n$ and $V_n$ are equivalent $\nu$-lower-Cauchy sequences.
Then for all $n$, we have $\nu(U_\infty)\geq \nu(U_n)-2^{-n} \geq \nu(U_n\cap V_n)-2^{-n} \geq \nu(V_n)-2\times 2^{-n} \geq \nu(V_\infty)-2\times 2^{n}$. Since $n$ is arbitrary, $\nu(U_\infty) \geq \nu(V_\infty)$. Switching the $U$s and $V$s gives  $\nu(U_\infty)\geq \nu(V_\infty)$.
\end{proof}

\begin{definition}[Lower-measure]
\label{defn:lowermeasure}
The $\nu$-lower-measure of a lower measurable set $U_\infty$ is defined as $\sup_{n\in\N}\nu(U_n)-2^{-n}$, where $U_n$ is any fast monotone lower-Cauchy sequence converging to $U_\infty$.
\end{definition}

\begin{lemma}
\label{lem:subsequence}
Let $(U_n)$ be a $\nu$-lower-Cauchy sequence.
Then sequence $(U_{n+1})$ is $\nu$-lower-Cauchy, and $(U_{n+1})\sim_\nu (U_n)$.
\end{lemma}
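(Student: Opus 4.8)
The plan is to prove both assertions by reindexing, with the second relying crucially on monotonicity of the sequence. Throughout I take $(U_n)$ to be a fast monotone $\nu$-lower-Cauchy sequence, as in Definition~\ref{defn:lowermeasurableset}, so that $U_{n+1}\subset U_n$ and $\nu(U_m\cap U_n)\geq\nu(U_n)-2^{-n}$ for all $m>n$; the argument adapts verbatim to an arbitrary effective modulus $(\epsilon_n)$ in place of $(2^{-n})$.

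First I would verify that $(U_{n+1})$ is again a fast monotone $\nu$-lower-Cauchy sequence. Writing $W_k:=U_{k+1}$, for any $m>n$ the defining inequality for $(W_k)$ at the pair $(m,n)$ is exactly the inequality for $(U_k)$ at the pair $(m+1,n+1)$: namely $\nu(W_m\cap W_n)=\nu(U_{m+1}\cap U_{n+1})\geq\nu(U_{n+1})-2^{-(n+1)}=\nu(W_n)-2^{-(n+1)}$. Since $2^{-(n+1)}\leq 2^{-n}$, the shifted sequence satisfies the fast lower-Cauchy condition with the same modulus $(2^{-n})$ (indeed with the better modulus $(2^{-(n+1)})$), and monotonicity $W_{n+1}\subset W_n$ is inherited directly from that of $(U_n)$. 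Thus the first claim is pure reindexing and requires no new estimate.

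For the equivalence $(U_{n+1})\sim_\nu(U_n)$ I would use the per-index criterion for fast sequences (the lemma following Definition~\ref{defn:equivalentlowermeasureablesequences}), which reduces equivalence to checking $\nu(U_n\cap U_{n+1})\geq\max(\nu(U_n),\nu(U_{n+1}))-2^{-n}$ for every $n$. Here monotonicity does the work: since $U_{n+1}\subset U_n$ we have $U_n\cap U_{n+1}=U_{n+1}$ and $\max(\nu(U_n),\nu(U_{n+1}))=\nu(U_n)$, so the criterion collapses to $\nu(U_{n+1})\geq\nu(U_n)-2^{-n}$. This is precisely the fast lower-Cauchy bound of Definition~\ref{defn:lowermeasureablesequence} applied to the consecutive pair $n<n+1$, and it holds for \emph{every} $n$; hence the two sequences are equivalent.

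The only real subtlety, and the step to get right rather than a genuine obstacle, is the handling of the $\max$ in the equivalence relation. The lower-Cauchy inequality controls only one direction, bounding $\nu(U_m\cap U_n)$ below in terms of the \emph{smaller}-index set $U_n$; it says nothing favourable about $\nu(U_{n+1})$ when $U_{n+1}$ is the larger set. Monotonicity is exactly what guarantees $\nu(U_{n+1})\leq\nu(U_n)$, so that the $\max$ lands on the side the estimate controls. Without it the statement is false: an increasing lower-Cauchy sequence with $U_n\subsetneq U_{n+1}$ and $\nu(U_{n+1})-\nu(U_n)$ bounded away from $0$ is trivially lower-Cauchy yet is not equivalent to its shift, since then $\nu(U_n\cap U_{n+1})=\nu(U_n)$ falls short of $\nu(U_{n+1})$. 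This confirms that the monotonicity built into the type of lower-measurable sets is what makes the lemma go through.
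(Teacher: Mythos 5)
Your proof is correct and takes essentially the same route as the paper, whose entire proof is the one-line reindexing estimate $\nu(U_{n+1})-\nu(U_{m+1}) \leq 2^{-(n+1)} < 2^{-n}$ for $m>n$ that you spell out. Your elaborations—reading the hypothesis as fast monotone (which the paper's use of $2^{-n}$ and of plain differences rather than intersections tacitly assumes), invoking the per-index equivalence criterion, and the increasing-sequence counterexample showing monotonicity is what puts the $\max$ on the side the lower-Cauchy bound controls—simply make explicit what the paper leaves implicit.
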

\begin{proof}
If $m>n$, $\nu(U_{n+1})-\nu(U_{m+1}) \leq 2^{-(n+1)} < 2^{-n}$.
\end{proof}

We first show that given an effective lower-Cauchy sequence, then it is possible to compute an equivalent fast lower-Cauchy subsequence, and given a non-monotone lower-Cauchy sequence of lower-measurable sets (notably, of open sets), we can compute a monotone sequence with the same limit.
\begin{lemma}[Computing fast monotone lower-Cauchy sequences]
\label{lem:fasteffectivelowermeasurablesequence}
\mbox{}
\begin{enumerate}
\item Suppose $(V_n)_{n\in\N}$ is an effective $\nu$-lower-Cauchy sequence.
Then  $U_n=V_{N(2^{-n})}$ is an equivalent fast lower-measurable subsequence.
\item Suppose $(V_n)_{n\in\N}$ is a fast $\nu$-lower-Cauchy sequence.
Then $U_n = \bigcup_{m\geq n+1} V_m$ is an equivalent fast monotone $\nu$-lower-Cauchy sequence.
\end{enumerate}
\end{lemma}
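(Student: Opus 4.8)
The plan is to treat the two parts separately, using in both the classical observation that the real-valued measures of a lower-Cauchy sequence converge. Indeed, for indices $i<j$ the defining inequality gives $\nu(V_i)-\nu(V_j)\leq \nu(V_i)-\nu(V_i\cap V_j)\leq \epsilon_i$ (with $\epsilon_i=2^{-i}$ in the fast case), so $\nu(V_j)\geq \nu(V_i)-\epsilon_i$ for all $j>i$; taking $\liminf_j$ and then $\limsup_i$ shows $\liminf_j\nu(V_j)\geq\limsup_i\nu(V_i)$, hence the limit $L:=\lim_n\nu(V_n)$ exists. I would use $L$ only in the correctness proofs and never in a construction, so this non-effective step is harmless.

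For part (1) I would first pass to a strictly increasing sequence of indices $k_n$ with $\epsilon_{k_n}\leq 2^{-n}$. This is possible effectively, since the modulus $N$ (equivalently the known sequence $\epsilon_m\to 0$) lets me search, for each $n$, for an index $k_n>k_{n-1}$ whose error is below $2^{-n}$; replacing $N(2^{-n})$ by such a monotone reindexing is harmless. Setting $U_n=V_{k_n}$, the fast lower-Cauchy property is immediate: for $m'>n'$ we have $k_{m'}>k_{n'}$, so $\nu(U_{m'}\cap U_{n'})=\nu(V_{k_{m'}}\cap V_{k_{n'}})\geq\nu(V_{k_{n'}})-\epsilon_{k_{n'}}\geq\nu(U_{n'})-2^{-n'}$. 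For equivalence with $(V_n)$ in the sense of Definition~\ref{defn:equivalentlowermeasureablesequences}, I would note that $\nu(U_n)=\nu(V_{k_n})\to L$ and $\nu(V_n)\to L$, while $\nu(U_n\cap V_n)=\nu(V_{k_n}\cap V_n)\geq\nu(V_n)-\epsilon_n$ is squeezed between $\nu(V_n)-\epsilon_n$ and $\min(\nu(U_n),\nu(V_n))$ and hence also tends to $L$; thus $\nu(U_n\cap V_n)-\max(\nu(U_n),\nu(V_n))\to0$.

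For part (2), set $U_n=\bigcup_{m\geq n+1}V_m$; this is open, $U_{n+1}\subset U_n$, and the countable union is computable. The heart of the argument is the two-sided estimate $L\leq\nu(U_n)\leq L+2^{-n}$. The lower bound is trivial, since $U_n\supset V_k$ for every $k\geq n+1$ gives $\nu(U_n)\geq\sup_{k>n}\nu(V_k)\geq\limsup_k\nu(V_k)=L$. For the upper bound I would write $W_N=\bigcup_{k=n+1}^{N}V_k$, so $\nu(U_n)=\lim_N\nu(W_N)$ by continuity, and telescope: modularity gives $\nu(W_N)-\nu(W_{N-1})=\nu(V_N)-\nu(W_{N-1}\cap V_N)$, and since $W_{N-1}\supset V_{N-1}$ the lower-Cauchy property yields $\nu(W_{N-1}\cap V_N)\geq\nu(V_{N-1}\cap V_N)\geq\nu(V_{N-1})-2^{-(N-1)}$; summing the bound $\nu(W_N)-\nu(W_{N-1})\leq\nu(V_N)-\nu(V_{N-1})+2^{-(N-1)}$ from $N=n+2$ to $M$ telescopes the measure terms and leaves a geometric tail below $2^{-n}$, giving $\nu(W_M)\leq\nu(V_M)+2^{-n}$, and $M\to\infty$ yields $\nu(U_n)\leq L+2^{-n}$. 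Fast monotone lower-Cauchiness then follows, since for $m>n$ we get $\nu(U_m\cap U_n)=\nu(U_m)\geq L\geq\nu(U_n)-2^{-n}$. Equivalence with $(V_n)$ follows as before from $\nu(U_n\cap V_n)\geq\nu(V_n\cap V_{n+1})\geq\nu(V_n)-2^{-n}\to L$ and $\max(\nu(U_n),\nu(V_n))\to L$.

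The main obstacle is the telescoping upper bound in part (2). I must work entirely with unions and intersections of open sets, since a difference of open sets is not open and so lies outside the domain of $\nu$, which forces the repeated use of modularity in place of ordinary finite additivity. Moreover the individual increments $\nu(V_N)-\nu(V_{N-1})$ need not be small; it is only their telescoping sum, together with the geometric series of errors, that is controlled. Getting the index bookkeeping and the geometric tail exactly right is the delicate part, whereas everything else is routine once the convergence $\nu(V_n)\to L$ is in hand.
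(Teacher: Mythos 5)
Your proof is correct and takes essentially the same route as the paper: your telescoping estimate $\nu(W_M)\leq\nu(V_M)+2^{-n}$, obtained from the modularity step $\nu(W_N)-\nu(W_{N-1})=\nu(V_N)-\nu(W_{N-1}\cap V_N)$ together with $V_{N-1}\subset W_{N-1}$, is exactly the paper's bound $\nu(U_{n,m})\leq\nu(V_m)+2^{-n}$, and your strictly increasing reindexing in part (1) is the intended (implicitly assumed) reading of $U_n=V_{N(2^{-n})}$. The only real difference is bookkeeping: you route both the fast-Cauchy property and the equivalence through the non-effective (and, as you note, harmless) limit $L=\lim_n\nu(V_n)$, where the paper argues index-by-index, using continuity of $\nu$ to select a single large $m$ with $\nu(U_{n,m})\geq\nu(U_n)-2^{-n}$; your version is a mild streamlining that, incidentally, also makes explicit the equivalence check for part (1) that the paper's proof omits, and gives the eventual-in-$m$ form of the equivalence condition uniformly rather than at one selected index.
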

\begin{proof}\mbox{}
\begin{enumerate}
\item Clearly for $m>n\geq N(2^{-n})$, we have $\nu(U_m\cap U_n) = \nu(V_{N(2^{-m})}\cap V_{N(2^{-n})}) \leq \nu(V_{N(2^{-m})}) + 2^{-n} = \nu(U_m)+2^{-n}$.
\item Since $U_n=U_{n+1}\cup V_{n+1}$, we have $\nu(U_n) + \nu(U_{n+1}\cap V_{n+1})=\nu(U_{n+1}\cup V_{n+1}) + \nu(U_{n+1}\cap V_{n+1})= \nu(U_{n+1})+\nu(V_{n+1})$, and since $V_{n+2}\subset U_{n+1}$, we have $\nu(U_{n+1}\cap V_{n+1})\geq \nu(V_{n+2}\cap V_{n+1})\geq \nu(V_{n+1})-2^{-(n+1)}$.
Hence $\nu(U_n) \leq \nu(U_{n+1}) + 2^{-(n+1)}$, and by induction, we see $\nu(U_n) < \nu(U_{m}) + 2^{-n}$ wheneve $m>n$.
Hence $(U_n)$ is a fast monotone lower-Cauchy subsequence.

To show $(U_n)\sim(V_n)$, since clearly $\nu(U_n)\geq\nu(V_n)$, we need to show $\nu(U_n)\leq \nu(V_n)+\epsilon$ for $n$ sufficiently large.
We first show that for all $m>n$, $\nu(U_n)\leq 2^{-n} + \nu(V_m)$.
Note $U_n=U_{n+1}\cup V_{n+1}$, and $V_{n+2}\subset U_{n+1}$.
Let $U_{n,m}=\bigcup_{k=n+1}^{m}V_k$, noting $U_{m-1,m}=V_m$.
Then $\nu(U_{n+1,m})+\nu(V_{n+1})=\nu(U_{n+1,m}\cup V_{n+1})+\nu(U_{n+1,m}\cap V_{n+1})=\nu(U_{n,m})+\nu(U_{n+1,m}\cap V_{n+1})\geq \nu(U_{n,m})+\nu(V_{n+2}\cap V_{n+1}) \geq \nu(U_{n,m})+\nu(V_{n+1})-2^{-(n+1)}$, so $\nu(U_{n+1,m})\geq \nu(U_{n,m})-2^{-(n+1)}$.
Hence $\nu(U_{n,m})\leq \nu(U_{m-1,m})+\sum_{k=n+1}^{m-1}2^{-k} \leq \nu(V_m)+2^{-n}$.
Since $U_n\bigcup_{m=n+1}^{\infty} U_{n,m}$, there exists $m$ such that $\nu(U_{n,m})\geq \nu(U_{n})-2^{-n}$.
Then $\nu(V_m)\geq \nu(U_n)-2\!\times\!2^{-n}$, and since $\nu(U_m)\geq \nu(U_n)-2^{-n}$, we have $\nu(V_m)\geq \nu(U_m)-3\!\times\!2^{-n}$.
\end{enumerate}
\end{proof}

\begin{remark}[Equivalent definitions of $\nu$-lower-measurable sets]
By Lemma~\ref{lem:fasteffectivelowermeasurablesequence}, we see that the monotonicity condition on the open sets $U_n$ in Definition~\ref{defn:lowermeasurableset} is unnecessary, and that fast convergence can be weakened to effective convergence.

By Theorem~\ref{thm:fastcauchylowermeasurableset}, we see that a second definition would be to say (i) any open set is $\nu$-lower-measurable, and (ii) any fast $\nu$-lower-Cauchy sequence $(V_n)_{n\in\N}$ of $\nu$-lower-measurable sets defines a lower-measurable set $V_\infty$, with
 then $V_\infty:=\sup_{n\in\N}\nu(V_n)-2^{-n}$.

A third definition for countably-based topological spaces would be to take a countable basis, and consider $\nu$-lower-Cauchy sequences of \emph{finite} unions of basic sets.
\end{remark}

Which definition to take is a matter of taste; our Definition~\ref{defn:lowermeasurableset} provides strong properties of the approximating sequences, so is easy to use in hypotheses, but it requires more work to prove.
However, Lemma~\ref{lem:fasteffectivelowermeasurablesequence} shows that it suffices to compute an effectively lower-Cauchy sequence.
One advantage of using fast sequences over effective sequences is that we do not have to explicitly pass around a counvergence rate.

has the advantage of providing a uniform construction, though for explicit computations, it may be more appropriate to restrict to finite unions of basic open sets.

\subsection{Intersections and unions of lower-measurable sets}

We now consider computability of intersections and unions of lower-measurable sets.

The following lemma compares measures of unions and intersections of equivalent sets.
\begin{lemma}\mbox{}
Let $(U_n)$, $(V_n)$ and $(W_n)$ be $\nu$-lower-Cauchy sequences.
\begin{enumerate}
\item If $\nu\bigl((U\cap V)_\infty\bigr)=\nu(U_\infty)=\nu(V_\infty)$, or $\nu\bigl((U\cup V)_\infty\bigr)=\nu(U_\infty)=\nu(V_\infty)$, then $(U_n)\sim(V_n)$.
\item If $(U_n)\sim_\nu (V_n)$, then $(U_n\cap V_n)\sim_\nu (U_{n+1}\cup V_{n+1})$, and both are equivalent to $(U_n)$ and $(V_n)$.
\end{enumerate}
\end{lemma}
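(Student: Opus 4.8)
The plan is to establish both parts by direct manipulation of the defining inequality for equivalence (Definition~\ref{defn:equivalentlowermeasureablesequences}) together with the modularity condition of the valuation $\nu$. Throughout I would work with fast monotone $\nu$-lower-Cauchy sequences, so that the simplified criterion $\nu(U_n \cap V_n) \geq \max(\nu(U_n),\nu(V_n)) - 2^{-n}$ is available, and I would freely use the submodular consequences packaged in Proposition~\ref{prop:subsetintersectionunion}.

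For part~(1), I would argue the contrapositive shape directly: assume $\nu\bigl((U \cap V)_\infty\bigr) = \nu(U_\infty) = \nu(V_\infty)$ and deduce $(U_n)\sim(V_n)$. The key observation is that the lower-measure of the intersection sequence is computed from $\nu(U_n \cap V_n)$, while $\nu(U_\infty)$ and $\nu(V_\infty)$ come from $\nu(U_n)$ and $\nu(V_n)$ respectively. Since $U_n \cap V_n \subset U_n$ and $U_n \cap V_n \subset V_n$, monotonicity gives $\nu(U_n \cap V_n) \leq \min(\nu(U_n),\nu(V_n))$; the hypothesis that all three limits coincide forces $\nu(U_n \cap V_n)$ to approach $\max(\nu(U_n),\nu(V_n))$ in the limit, which is exactly the equivalence criterion. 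For the union case I would use the modularity identity $\nu(U_n) + \nu(V_n) = \nu(U_n \cup V_n) + \nu(U_n \cap V_n)$: if $\nu\bigl((U\cup V)_\infty\bigr)$ equals the common value of $\nu(U_\infty)$ and $\nu(V_\infty)$, then in the limit $\nu(U_n \cap V_n)$ must again approach the common value, giving equivalence. Care is needed because all quantities live in $\R^+_<$ and the limits are suprema of the shifted sequences $\nu(\cdot_n) - 2^{-n}$, so I would phrase each step as an $\epsilon$-estimate valid for all sufficiently large $n$ rather than as an exact equality.

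For part~(2), given $(U_n)\sim_\nu(V_n)$, I would first show $(U_n \cap V_n)$ is itself a fast monotone $\nu$-lower-Cauchy sequence and then compare each of $(U_n \cap V_n)$ and $(U_{n+1} \cup V_{n+1})$ against $(U_n)$ (the comparison with $(V_n)$ following by symmetry, and transitivity of $\sim$ then yielding the remaining equivalences). The central estimate is that $\nu(U_n \cap V_n)$ is within $2^{-n}$ of $\nu(U_n)$, which is precisely the equivalence hypothesis rewritten via the fast criterion; this shows $(U_n \cap V_n) \sim (U_n)$. For the union, modularity gives $\nu(U_{n+1} \cup V_{n+1}) = \nu(U_{n+1}) + \nu(V_{n+1}) - \nu(U_{n+1} \cap V_{n+1})$, and I would bound $\nu(U_{n+1} \cap V_{n+1})$ from below using the equivalence together with the fast-Cauchy control $\nu(U_{n+1}) \geq \nu(U_n) - 2^{-n}$, to conclude $\nu(U_{n+1} \cup V_{n+1})$ is close to $\nu(U_n)$. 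The index shift to $n+1$ in the union term is there exactly to absorb the loss incurred by modularity, so I would track the constants carefully and invoke Lemma~\ref{lem:subsequence} to reconcile the shifted indices.

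The main obstacle I anticipate is bookkeeping of the constants and index shifts so that the final bounds genuinely meet the $2^{-n}$ (or effective-$\epsilon$) threshold required by the equivalence definition, rather than merely a constant multiple of it. Each application of modularity or of the triangle-type estimate for $\sim$ tends to accumulate extra factors of $2^{-n}$, and because everything is computed in $\R^+_<$ I cannot cancel upper and lower bounds freely; the proof of transitivity of $\sim$ earlier in the excerpt shows the pattern, where a $2\times 2^{-m}$ term is pushed to zero by letting $m$ grow. I expect part~(2) to need the same device: prove an inequality valid for all $m>n$ with an error term that vanishes as $m\to\infty$, then take the limit to recover the clean $2^{-n}$ bound. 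Establishing that $(U_n \cap V_n)$ remains fast monotone lower-Cauchy in the first place, so that its lower-measure is even well-defined, is the one genuinely independent sublemma I would need to dispatch before the comparisons can be run.
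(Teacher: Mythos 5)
Your plan is correct and follows essentially the same route as the paper's proof: part (1) passes from the hypothesis on the limiting measures to the fast criterion $\nu(U_n\cap V_n)\geq\max(\nu(U_n),\nu(V_n))-2^{-n}$ via monotonicity and modularity, and part (2) compares each of $(U_n\cap V_n)$ and $(U_{n+1}\cup V_{n+1})$ with $(U_n)$ using modularity, with the index shift absorbing the extra $2^{-n}$ exactly as in the paper. If anything your sketch is slightly more careful than the paper's own write-up, which disposes of the union case of part (1) with a one-line inequality that really needs your modularity argument to yield the required intersection bound, and which leaves the Cauchy property of $(U_n\cap V_n)$ implicit (it is only established afterwards, in Theorem~\ref{thm:intersectionunion}).
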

\begin{proof}\mbox{}
\begin{enumerate}
\item
$\nu(U_n\cap V_n)\geq \lim_{n\to\infty}\nu(U_n\cap V_n) = \nu(U_\infty) \geq \nu(U_n)-2^{-n}$.
Similarly, $\nu(U_n\cup V_n)\geq \nu(\lim_{n\to\infty}U_n\cup V_n)  = \nu(U_\infty) \geq \nu(U_n)-2^{-n}$.
\item
$\nu(U_n\cap(U_n\cap V_n))=\nu(U_n\cap V_n)$, and $\nu(U_n\cap V_n) \geq \nu(U_n)-2^{-n}$ by definition of $\sim_\nu$.
$\nu(U_n\cap(U_n\cup V_n))=\nu(U_n)$, and $\nu(U_n\cup V_n)=\nu(U_n)+\nu(V_n)-\nu(U_n\cap V_n)\leq \nu(U_n)+2^{-n}$, so $\nu(\nu(U_n\cap(U_n\cup V_n))\geq \nu(U_n\cup V_n)-2^{-n}$ as required.
\qedhere
\end{enumerate}
\end{proof}

\begin{theorem}[Intersections and unions of lower-measurable sets]
\label{thm:intersectionunion}
Let $\nu$ be a probability valuation on a type $\tpX$.
Then operations of (1) intersection, and (2) union are computable on $\nu$-lower-measurable sets.
\end{theorem}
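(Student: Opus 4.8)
The plan is to use the two obvious candidate sequences. If $(U_n)$ and $(V_n)$ are fast monotone $\nu$-lower-Cauchy sequences of open sets representing the two given lower-measurable sets, I claim that, after a harmless shift of index, $(U_n\cap V_n)$ represents the intersection and $(U_n\cup V_n)$ represents the union. Since $\tpOp(\tpX)=\tpX\fto\tpSi$ and the binary meet and join of Sierpinski-valued maps are computable, the maps $((U_n),(V_n))\mapsto(U_n\cap V_n)$ and $((U_n),(V_n))\mapsto(U_n\cup V_n)$ are computable on sequences of open sets; so the only work is to check that the outputs are legitimate fast monotone lower-Cauchy sequences and that the construction descends to equivalence classes. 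Monotonicity is immediate, since $U_{n+1}\subset U_n$ and $V_{n+1}\subset V_n$ give $U_{n+1}\cap V_{n+1}\subset U_n\cap V_n$ and $U_{n+1}\cup V_{n+1}\subset U_n\cup V_n$.

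For the lower-Cauchy estimate I would route everything through Proposition~\ref{prop:subsetintersectionunion}, which packages the modularity I need in a form that only mentions $\nu$-values of open sets. For the intersection, apply part~\itemref{itm:subsetintersection} twice: first to the inclusion $U_m\subset U_n$ with $W=V_m$, and then to $V_m\subset V_n$ with $W=U_n$. Using the fast estimates $\nu(U_m)\geq\nu(U_n)-2^{-n}$ and $\nu(V_m)\geq\nu(V_n)-2^{-n}$, this yields, for $m>n$,
\[ \nu(U_m\cap V_m)\ \geq\ \nu(U_n\cap V_m)-2^{-n}\ \geq\ \nu(U_n\cap V_n)-2\cdot 2^{-n}. \]
Symmetrically, part~\itemref{itm:subsetunion} gives $\nu(U_m\cup V_m)\geq\nu(U_n\cup V_n)-2\cdot 2^{-n}$. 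Thus the shifted sequences $(U_{n+1}\cap V_{n+1})$ and $(U_{n+1}\cup V_{n+1})$ satisfy the fast estimate; alternatively, one may feed the effective estimate directly into Lemma~\ref{lem:fasteffectivelowermeasurablesequence} to extract a fast sequence without tracking the constant $2$.

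The remaining and most delicate point is well-definedness: the output equivalence class must not depend on the chosen representatives. Suppose $(U_n)\sim_\nu(U_n')$ and $(V_n)\sim_\nu(V_n')$. Here I cannot manipulate set differences directly, since $\nu$ is only available on open sets, so I again reduce everything to Proposition~\ref{prop:subsetintersectionunion}. Replacing $U_n$ by the open set $U_n\cap U_n'$ inside $U_n\cap V_n$ (an inclusion of open sets) and using $\nu(U_n\cap U_n')\geq\nu(U_n)-2^{-n}$, then doing the same for $V_n$, gives
\[ \nu\bigl((U_n\cap V_n)\cap(U_n'\cap V_n')\bigr)\ \geq\ \nu(U_n\cap V_n)-2\cdot 2^{-n}, \]
and by the symmetric argument the same bound holds with $\nu(U_n'\cap V_n')$ on the right-hand side, hence with $\max\bigl(\nu(U_n\cap V_n),\nu(U_n'\cap V_n')\bigr)$. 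This is exactly the comparison required by Definition~\ref{defn:equivalentlowermeasureablesequences} (up to the constant $2$, absorbed by a shift of index), so $(U_n\cap V_n)\sim_\nu(U_n'\cap V_n')$; the union case is identical using part~\itemref{itm:subsetunion}.

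I expect the main obstacle to be precisely this bookkeeping rather than any conceptual difficulty: one must keep every estimate expressed through unions and intersections of open sets, so that only $\nu$-values of open sets ever appear, and track the accumulation of the $2^{-n}$ error terms so that a single index shift restores the fast rate. Once this discipline is maintained, the modularity encapsulated in Proposition~\ref{prop:subsetintersectionunion} does all the real work, and the computability claim reduces to the observation that pointwise meet, pointwise join, and reindexing of sequences of open sets are computable operations.
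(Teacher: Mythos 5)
Your proof is correct, and for unions it is essentially the paper's own argument: two applications of Proposition~\ref{prop:subsetintersectionunion}\itemref{itm:subsetunion} give $\nu(U_n\cup V_n)-\nu(U_m\cup V_m)\leq 2\cdot 2^{-n}$, and the index shift restores the fast rate. For intersections, however, you diverge from the paper, and your version is in fact a repair of it. The paper estimates, for $l>m>n$, $\nu(U_n\cap V_n)-\nu(U_m\cap V_m)\leq\nu(U_n)-\nu(U_l\cap V_l)\leq\nu(U_n)-\nu(U_l)+2^{-l}$, whose middle step tacitly uses $\nu(U_l\cap V_l)\geq\nu(U_l)-2^{-l}$; that holds when $(U_n)\sim_\nu(V_n)$, but not for two arbitrary lower-measurable sets, and the paper's unshifted conclusion $\nu(U_m\cap V_m)\geq\nu(U_n\cap V_n)-2^{-n}$ is false in general: for Lebesgue measure on $(0,1)$ take $U_1=V_1=(0,1)$ and $U_n=(0,\tfrac12)$, $V_n=(\tfrac12,1)$ for all $n\geq2$; both are fast monotone lower-Cauchy sequences, yet $\nu(U_1\cap V_1)-\nu(U_2\cap V_2)=1>2^{-1}$. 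Your double application of part~\itemref{itm:subsetintersection}, yielding the rate $2\cdot 2^{-n}$ (tight in this example) followed by a shift, is the correct statement, and it makes the intersection case structurally parallel to the paper's union case.

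You also supply a step the paper's proof omits entirely: extensionality. The lemma preceding the theorem only compares $(U_n\cap V_n)$ with $(U_{n+1}\cup V_{n+1})$ when $(U_n)\sim_\nu(V_n)$; it never checks that $\cap$ and $\cup$ descend to equivalence classes of representatives, which is genuinely needed since a lower-measurable set \emph{is} such a class. Your bound $\nu\bigl((U_n\cap V_n)\cap(U_n'\cap V_n')\bigr)\geq\max\bigl(\nu(U_n\cap V_n),\nu(U_n'\cap V_n')\bigr)-2\cdot 2^{-n}$ is correct and meets Definition~\ref{defn:equivalentlowermeasureablesequences} directly, with no shift needed in its $\epsilon$-form. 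One small caveat where the union case is not quite ``identical'': part~\itemref{itm:subsetunion} controls $\nu\bigl((U_n\cap U_n')\cup(V_n\cap V_n')\bigr)$, and you must add the inclusion $(U_n\cap U_n')\cup(V_n\cap V_n')\subset(U_n\cup V_n)\cap(U_n'\cup V_n')$ together with monotonicity of $\nu$ to convert this into the required bound on $\nu\bigl((U_n\cup V_n)\cap(U_n'\cup V_n')\bigr)$; with that observation made explicit, your argument goes through in full.
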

\noindent
\begin{proof}
Suppose $U_n \lmsto_\nu U_\infty$ and $V_n \lmsto_\nu V_\infty$.
\begin{enumerate}
\item We show that $(U_n \cap V_n)$ is a fast $\nu$-lower-Cauchy sequence.
For $m>n$ and any $l>m$, $\nu(U_n\cap V_n)-\nu(U_m\cap V_m) \leq \nu(U_n)-\nu(U_l\cap V_l) \leq \nu(U_n)-\nu(U_l)+2^{-l} \leq 2^{-n}+2^{-l}$. Since $l$ can be made arbitrarily large, $\nu(U_m\cap V_m)\geq \nu(U_n\cap V_n)-2^{-n}$.
\item
For $m>n$, we have $U_m\subset U_n$ and $V_m\subset V_n$, so by Proposition~\ref{prop:subsetintersectionunion}, $\nu(U_m)+\nu(V_m)+\nu(U_n\cup V_n)  \leq \nu(V_m) + \nu(U_m\cup V_n) + \nu(U_n)  \leq \nu(U_m \cup V_m) + \nu(U_n)+\nu(V_n)$.
Hence $\nu(U_n\cup V_n)-\nu(U_m\cup V_m) \leq (\nu(U_n)-\nu(U_m))+(\nu(V_n)-\nu(V_m)) \leq 2\!\times\!2^{-n}$, so $\nu(U_{n+1}\cup V_{n+1})-\nu(U_{m+1}\cup V_{m+1})\leq 2^{-n}$.
\qedhere
\end{enumerate}
\end{proof}
\noindent

\begin{theorem}[Countable unions of lower-measurable sets]
\label{thm:countableunion}
Let $\nu$ be a probability valuation on a type $\tpX$.
Then the operation of countable union is computable on $\nu$-lower-measurable sets.
\end{theorem}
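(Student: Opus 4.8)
The plan is to collapse the two limiting processes---over the index $n$ inside each set, and over the enumeration $k$ of the sets---into a single diagonal sequence of open sets. Write $A^{(k)}=[(U^{(k)}_n)_n]$, where each $(U^{(k)}_n)_n$ is a fast monotone $\nu$-lower-Cauchy sequence of open sets, so that $U^{(k)}_{n+1}\subset U^{(k)}_n$ and $\nu(U^{(k)}_n)-\nu(U^{(k)}_m)\le 2^{-n}$ for $m>n$. I would then define
\[ W_n \;=\; \bigcup_{k=0}^{\infty} U^{(k)}_{n+k+1} . \]
Each $W_n$ is a countable union of open sets, hence open and computable in $\tpOp(\tpX)$ uniformly in $n$, so the map $(A^{(k)})_k \mapsto (W_n)_n$ is computable; this already discharges the \emph{computability} half of the statement. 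Monotonicity is immediate, since $(n{+}1){+}k{+}1 > n{+}k{+}1$ gives $U^{(k)}_{(n+1)+k+1}\subset U^{(k)}_{n+k+1}$ termwise, whence $W_{n+1}\subset W_n$. Everything remaining is a \emph{correctness} proof, for which non-computable reasoning is permitted.

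The engine of the correctness proof is a subadditivity-of-loss estimate for valuations: if $B_k\subset A_k$ are open, then $\nu\bigl(\bigcup_k A_k\bigr)-\nu\bigl(\bigcup_k B_k\bigr)\le \sum_k\bigl(\nu(A_k)-\nu(B_k)\bigr)$. I would prove the finite-$N$ version by replacing $A_k$ with $B_k$ one index at a time and bounding each single replacement using Proposition~\ref{prop:subsetintersectionunion}\itemref{itm:subsetunion} (with $U=B_k\subset V=A_k$ and $W$ the union of the terms not currently being replaced), then pass to the countable bound via the monotone continuity of $\nu$ on increasing unions of open sets. Applying this with $A_k=U^{(k)}_{n+k+1}$ and $B_k=U^{(k)}_{m+k+1}$ for $m>n$, the $k$-th loss is at most $2^{-(n+k+1)}$, and these sum to $2^{-n}$; hence $\nu(W_m)\ge\nu(W_n)-2^{-n}$, so $(W_n)$ is a fast monotone $\nu$-lower-Cauchy sequence and $[(W_n)]$ is a genuine lower-measurable set with computable lower-measure $\lim_n\nu(W_n)$.

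Finally I would identify $[(W_n)]$ with the countable union by comparing its lower-measure with $\sup_N\nu(B_N)$, where $B_N=\bigcup_{k\le N}A^{(k)}$ is lower-measurable by iterating Theorem~\ref{thm:intersectionunion}. For the lower bound, $W_n\supset\bigcup_{k\le N}U^{(k)}_{n+k+1}$, and the shifted diagonal $(\bigcup_{k\le N}U^{(k)}_{n+k+1})_n$ is a fast monotone representative of $B_N$ (equivalent to the finite-union representative, the per-component shifts being absorbed as in Lemma~\ref{lem:subsequence}), so $\lim_n\nu(W_n)\ge\nu(B_N)$ for every $N$. For the upper bound, the same loss estimate applied to $A_k=U^{(k)}_{n+k+1}$ against $B_k=U^{(k)}_{J}$ for a common deep index $J$, together with $\nu(B_N)=\lim_J\nu(\bigcup_{k\le N}U^{(k)}_J)$ from Definition~\ref{defn:lowermeasure}, yields $\nu(\bigcup_{k\le N}U^{(k)}_{n+k+1})\le\nu(B_N)+2^{-n}$; letting $N\to\infty$ by continuity and then $n\to\infty$ gives $\lim_n\nu(W_n)\le\sup_N\nu(B_N)$. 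Equality of the two bounds, hence $\nu([(W_n)])=\nu(\bigcup_k A^{(k)})$, follows, and well-definedness (equivalent input sequences produce an equivalent $(W_n)$) follows from the identical geometric estimate. I expect this upper bound to be the main obstacle: it is the only place where the \emph{accumulated} error over infinitely many components must be controlled, and it works solely because the per-component budget $2^{-(n+k+1)}$ is summable in $k$ and because the lower-measure of a finite union is attained as a limit over deep indices, forcing the double limit in $N$ and $J$ to be handled with care.
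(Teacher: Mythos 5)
Your proposal is correct and follows essentially the same route as the paper's proof: the same diagonal sequence $W_n=\bigcup_{k=0}^{\infty}U^{(k)}_{n+k+1}$, justified by the same subadditivity-of-loss estimate $\nu\bigl(\bigcup_k A_k\bigr)-\nu\bigl(\bigcup_k B_k\bigr)\leq\sum_k\bigl(\nu(A_k)-\nu(B_k)\bigr)$, with the per-component budget $2^{-(n+k+1)}$ summing to $2^{-n}$. Your additional verifications (well-definedness under equivalence of representatives and the identification of $[(W_n)]$ with the union via the finite unions $B_N$) are sound and merely make explicit what the paper leaves implicit.
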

\begin{proof}
We can show that if $U_k\subset V_k$ for $k=1,2,\ldots$, then $\nu(\bigcup_{k=1}^{\infty}V_k)-\nu(\bigcup_{k=1}^{\infty}U_k)\leq \sum_{k=1}^{n}\bigl(\nu(V_k)-\nu(U_k)\bigr)$.
\par
Given $U_{k,n} \lmsto_\nu U_k$, we show that $V_n:=\bigcup_{k=0}^{\infty} U_{k,n+k+1}$ is a $\nu$-lower-Cauchy sequence.
For if $m>n$, then $\nu\bigl(\bigcup_{k=0}^{\infty} U_{k,n+k+1}\bigr)-\nu\bigl(\bigcup_{k=0}^{\infty}U_{k,m+k+1}\bigr) \leq \sum_{k=0}^{\infty}\bigl(\nu(U_{k,n+k+1})-\nu(U_{k,m+k+1})\bigr)\leq\sum_{k=0}^{\infty}2^{-(n+k+1)}=2^{-n}$ as required.
\end{proof}

We can similarly compute countable intersections of \emph{effectively} decreasing sequences of $\nu$-lower-measurable sets.
\begin{theorem}[Effective countable intersections of lower-measurable sets]
\label{thm:fastcauchylowermeasurableset}
Suppose $(V_n)$ is a fast monotone lower-Cauchy sequence of $\nu$-lower-measurable sets.
Then $V_n$ converges effectively to a $\nu$-lower-measurable set $V_\infty$.
\end{theorem}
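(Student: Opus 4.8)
The plan is to unfold each lower-measurable set $V_n$ into a witnessing fast monotone $\nu$-lower-Cauchy sequence of open sets and then assemble a single diagonal sequence of open sets whose limit is the desired $V_\infty$. Concretely, for each $n$ fix open sets $(U_{n,k})_{k\in\N}$ with $U_{n,k}\lmsto_\nu V_n$, so that by Definition~\ref{defn:lowermeasurableset} and Definition~\ref{defn:lowermeasure} we have $V_n\subset U_{n,k}$ together with $\nu(V_n)\leq \nu(U_{n,k})\leq \nu(V_n)+2^{-k}$ for every $k$. The hypotheses on $(V_n)$ give, for $m>n$, that $V_m\subset V_n$ and $\nu(V_m)\geq\nu(V_n)-2^{-n}$.

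First I would define the diagonal $W_n:=\bigcap_{j=0}^{n}U_{j,n}$, a finite intersection of open sets and hence open. Monotonicity $W_{n+1}\subset W_n$ is automatic: each factor $U_{j,n+1}$ (for $j\leq n$) is contained in $U_{j,n}$ since $(U_{j,k})_k$ decreases in $k$, and $W_{n+1}$ merely carries the extra factor $U_{n+1,n+1}$. The crucial estimate is the two-sided bound $\nu(V_n)\leq \nu(W_n)\leq \nu(V_n)+2^{-n}$. The upper bound follows from $W_n\subset U_{n,n}$. The lower bound is where the monotonicity of the original sequence is used: for every $j\leq n$ we have $V_n\subset V_j\subset U_{j,n}$, whence $V_n\subset \bigcap_{j\leq n}U_{j,n}=W_n$ and so $\nu(V_n)\leq\nu(W_n)$.

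With this bound in hand, $(W_n)$ is a monotone sequence of open sets that is effectively lower-Cauchy: for $m>n$, $\nu(W_m)\geq\nu(V_m)\geq\nu(V_n)-2^{-n}\geq\nu(W_n)-2^{-n}-2^{-n}$, a known convergence rate. By Lemma~\ref{lem:fasteffectivelowermeasurablesequence} this may be upgraded to an equivalent fast monotone sequence, so $(W_n)$ defines a $\nu$-lower-measurable set $V_\infty$. It then remains to verify that $V_n\lmsto_\nu V_\infty$ effectively, i.e. that $V_\infty\subset V_n$ and $\nu(V_\infty)\geq\nu(V_n)-2^{-n}$. The measure inequality is immediate from $\nu(V_\infty)=\sup_m\bigl(\nu(W_m)-2^{-m}\bigr)\geq\nu(V_m)-2^{-m}\geq\nu(V_n)-2^{-n}-2^{-m}$ on letting $m\to\infty$. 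For the inclusion I would pass the containments $W_m\subset U_{n,m}$ (valid for $m\geq n$, since $W_m\subset U_{n,m}$ whenever $n\leq m$) to the limit, using that the tail $(U_{n,m})_{m\geq n}$ still converges to $V_n$ by Lemma~\ref{lem:subsequence}.

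The main obstacle is the simultaneous control of two competing requirements in the diagonal: the intersection is forced upon us to secure monotonicity of $(W_n)$, yet intersecting could a priori lose measure relative to $V_n$; it is precisely the monotonicity $V_n\subset V_j$ of the given sequence that rescues the lower bound $\nu(V_n)\leq\nu(W_n)$. The only remaining bookkeeping is the two independent error scales --- the $2^{-n}$ from the lower-Cauchy rate of $(V_n)$ and the $2^{-k}$ from the open approximations --- which is handled by the choice of indices and, if a clean fast rate is desired, by reindexing via Lemma~\ref{lem:fasteffectivelowermeasurablesequence}.
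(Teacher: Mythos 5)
Your proof is correct and follows essentially the same route as the paper's: both arguments unfold each $V_n$ into a witnessing fast monotone lower-Cauchy sequence of open sets and form a monotone diagonal intersection (the paper takes $V_{\infty,n}=V_{n+1,n+1}\cap V_{\infty,n-1}$, you take $W_n=\bigcap_{j=0}^{n}U_{j,n}$), sandwiched between the tail set $V_n$ and a single approximant $U_{n,n}$ to obtain the Cauchy estimate and then the limit's measure bound $\nu(V_n)-\nu(V_\infty)\leq 2^{-n}$. The only cosmetic difference is that the paper's slightly sharper sandwich yields the fast rate $2^{-n}$ directly, whereas you get rate $2\cdot 2^{-n}$ and upgrade by reindexing via Lemma~\ref{lem:fasteffectivelowermeasurablesequence}.
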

\begin{proof}
Write $V_{k,\infty}=V_k$ and let $V_{k,n}$ be a fast monotone lower-Cauchy sequence of open sets converging to $V_k$.
Define $V_{\infty,n}=V_{n+1,n+1}\cap V_{\infty,n-1}$, which is clearly monotone.
Note that since $V_{n+1,\infty}\subset V_{n+1,n+1}$ and $V_{n+1,\infty}\subset V_{n,\infty}$, we have $V_{\infty,n}\supset V_{n+1,\infty}=V_{n+1}$.
Then for $m>n$, $\nu(V_{\infty,n})-\nu(V_{\infty,m})\leq \nu(V_{n+1,n+1})-\nu(V_{m+1})\leq \nu(V_{n+1,n+1})-\nu(V_{n+1,\infty})+\nu(V_{n+1})-\nu(V_{m+1})\leq 2^{-(n+1)}+2^{-(n+1)}=2^{-n}$, so $V_{\infty,n}$ is a fast monotone lower-Cauchy sequence of open sets, so represents a lower-measurable set $V_{\infty,\infty}=V_\infty$.
\par
Finally, we have $\nu(V_n)-\nu(V_\infty)-\nu(V_{n,\infty})-\nu(V_{\infty,\infty})\leq \nu(V_{\infty,n})-\nu(V_{\infty,\infty})\leq 2^{-n}$, from which we see that $V_\infty$ is indeed the limit of $(V_n)$.
\end{proof}

\begin{proposition}[Lower measure is modular]
$\nu(U_\infty\cap V_\infty)+\nu(U_\infty\cup V_\infty) = \nu(U_\infty)+\nu(V_\infty)$.
\end{proposition}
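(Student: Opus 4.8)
The plan is to reduce the statement to the modularity of $\nu$ on \emph{open} sets (the defining property in Definition~\ref{defn:valuation}) by passing to the limit along the approximating sequences. Write $U_\infty$ and $V_\infty$ for the two lower-measurable sets, represented by fast monotone $\nu$-lower-Cauchy sequences of open sets $(U_n)$ and $(V_n)$, so that by construction (Theorem~\ref{thm:intersectionunion}) the sequence $(U_n\cap V_n)$ represents $U_\infty\cap V_\infty$ and $(U_n\cup V_n)$ represents $U_\infty\cup V_\infty$.

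First I would record the elementary fact that for any fast monotone $\nu$-lower-Cauchy sequence $(W_n)$ the lower-measure of Definition~\ref{defn:lowermeasure} coincides with the ordinary limit of the finite measures, i.e.\ $\nu(W_\infty)=\sup_{n}\bigl(\nu(W_n)-2^{-n}\bigr)=\lim_{n\to\infty}\nu(W_n)$. Indeed $\nu(W_n)$ is decreasing (since $W_{n+1}\subset W_n$) and bounded below, hence convergent; the defining inequality $\nu(W_m)\geq\nu(W_n)-2^{-n}$ for $m>n$ gives $\lim_m\nu(W_m)\geq\nu(W_n)-2^{-n}$, so the supremum is at least the limit, while $2^{-n}\to0$ gives the reverse inequality. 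This converts the problem from one about suprema to one about honest limits of real sequences.

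Next, for each fixed $n$ the sets $U_n$ and $V_n$ are open, so the modularity condition of Definition~\ref{defn:valuation} applies verbatim at the finite level:
\[ \nu(U_n\cap V_n)+\nu(U_n\cup V_n)=\nu(U_n)+\nu(V_n). \]
Now I identify the limits of all four terms as $n\to\infty$. By the first paragraph and Theorem~\ref{thm:intersectionunion}, $\nu(U_n\cap V_n)\to\nu(U_\infty\cap V_\infty)$ and $\nu(U_n\cup V_n)\to\nu(U_\infty\cup V_\infty)$ (the index shift appearing in the proof of Theorem~\ref{thm:intersectionunion} is harmless by Lemma~\ref{lem:subsequence}, as it does not change the limit), while trivially $\nu(U_n)\to\nu(U_\infty)$ and $\nu(V_n)\to\nu(V_\infty)$. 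Letting $n\to\infty$ in the displayed equality yields the claim.

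The only genuine obstacle is the identification $\lim_n\nu(U_n\cup V_n)=\nu(U_\infty\cup V_\infty)$ together with its intersection analogue: this is exactly the point where one must invoke that $(U_n\cap V_n)$ and $(U_n\cup V_n)$ are \emph{valid representatives} of the intersection and union — established in Theorem~\ref{thm:intersectionunion} — rather than arbitrary sequences that happen to have the right pointwise limit. Once that is in hand, each of the four terms is the limit of a convergent real sequence, the finite-level identity is an \emph{exact} equality, and no further effectivity estimate is required; the modularity of the lower-measure is simply inherited in the limit.
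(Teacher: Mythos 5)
Your proposal is correct and follows essentially the same route as the paper: both proofs invoke the exact modularity identity $\nu(U_n\cap V_n)+\nu(U_n\cup V_n)=\nu(U_n)+\nu(V_n)$ at each finite level and pass to the limit, relying on Theorem~\ref{thm:intersectionunion} to certify that $(U_n\cap V_n)$ and the (harmlessly shifted) $(U_n\cup V_n)$ represent the intersection and union. The only difference is bookkeeping: where the paper tracks explicit $3\times 2^{-n}$ error terms in two one-sided sandwich estimates, you first prove the clean identification $\sup_n\bigl(\nu(W_n)-2^{-n}\bigr)=\lim_{n\to\infty}\nu(W_n)$ for fast monotone lower-Cauchy sequences and then take exact limits, which is a tidier presentation of the same argument.
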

\begin{proof}
For fixed $n$, $\nu(U_n\cap V_n)+\nu(U_n\cup V_n) = \nu(U_n)+\nu(V_n)$ by modularity of valuations.
Then $\nu(U_\infty\cap V_\infty)+\nu(U_\infty\cup V_\infty) \geq \nu(U_n\cap V_n)-2^{-n}+\nu(U_n\cup V_n)-2\!\times\!2^{-n}=\nu(U_n)+\nu(V_n)-3\times2^{-n}\geq \nu(U_\infty)+\nu(V_\infty)-3\times2^{-n}$. Taking $n\to\infty$ gives $\nu(U_\infty\cap V_\infty)+\nu(U_\infty\cup V_\infty) \geq \nu(U_\infty)+\nu(V_\infty)$. The reverse inequality is similar, since $\nu(U_\infty)+\nu(V_\infty)\geq \nu(U_n)+\nu(V_n)-2\times2^{-n}$.
\end{proof}

\subsection{Topology of lower-measurable sets}

The representation of $\nu$-lower-measurable sets induces a (non-Hausdorff) quotient topology on the space.
Recall that for open sets, $U_n\to U_\infty \iff \forall x\in U_\infty, \exists N, \forall n\geq N, x\in U_n$.

For $\nu$-lower-Cauchy sequences, convergence is given by $(U_{k,n})\to (U_{\infty,n})$ as $k\to\infty$ if for all $n$, there exists $K$ such that $\nu(U_{k,n}\cap U_{\infty,n})\geq \nu(U_{\infty,n})-2^{-n}$ whenever $k\geq K(n)$.
The convergence is effective if $K(n)$ is known; by restricting to subsequences we may take $K(n)=n$.
In this case, a $\nu$-lower-Cauchy sequence representing the limit is $U_{\infty,n}=\bigcup_{k=n+1}^{\infty}U_{k,n+k+1}$.

\begin{property}[Topology on $\nu$-lower-measurable sets]
A set of $\nu$-lower-measurable sets $\mathcal{W}$ is \emph{open} if
\begin{equation}
\forall W\in \mathcal{W},\ \exists\epsilon>0,\ \forall V,\ \nu(V\cap W)>\nu(W)-\epsilon \implies V\in \mathcal{W}.
\end{equation}
\end{property}

\begin{property}
The convergence relation on $\nu$-lower-Cauchy sequences of $\nu$-lower-measurable sets is given by $V_k \to V_\infty$ as ${k\to\infty}$ if $\liminf_{k\to\infty}\nu(V_k\cap V_\infty)\geq\nu(V_\infty)$.
\par
The convergence is \emph{effective} if $\nu(V_k\cap V_\infty)\geq\nu(V_\infty)-\epsilon(k)$ for \emph{known} $\epsilon(k)$, and \emph{fast} if $\epsilon(k)=2^{-k}$.
\end{property}

\subsection{Relationship with classical measure-theory}

\begin{remark}[Outer-regular measures]
In the literature on classical measure theory (see~\cite{vanGaansPP}), defining $\nu(U_\infty)=\bigcap_{n\in\N}\nu(U_n)=\inf_{n\in\N} \nu(U_n)$ for a decreasing sequence of open sets corresponds to a \emph{outer-regular measure}, since we approximate $U_\infty$ from outside. However, since $U_n$ converges to $U_\infty$ from above, but the open sets $U_n$ are  inherently approximated from below, we cannot compute the measure of an \emph{arbitrary} decreasing sequence of open sets.
This motivates the use of $\nu$-lower-Cauchy sequences, which converge rapidly from above.
Since we compute $\nu(U_\infty)$ from below, we use the terminology ``lower measure''.
\end{remark}

\begin{remark}[Relationship with classical Borel measures]
Our lower-measurable sets are all Borel sets, with lower measure equal to the classical measure. Further, since any outer measure on a separable metric space is a Borel measure, the measure of any set is the infemum of the measure of its $\epsilon$-neighbourhoods, we see that any measurable set is equal to a lower-measurable set up to a set of measure $0$.
Hence our lower-measurable sets capture the measure-theoretic behaviour of all Borel sets, but do so in a way in which the measure is semicomputable.
\end{remark}

\subsection{Measurable sets}

\begin{definition}[Upper-measurable sets; upper-measure]
The type of \emph{upper-measurable sets} is the set of increasing sequences of closed sets $(A_n)$ such that $\nu(A_m) \leq \nu(A_n) + 2^{-n}$ for all $n$ and all $m>n$. The \emph{upper-measure} is $\inf_{n\in\N} \nu(A_n)+2^{-n}$.
\end{definition}
Note that a representation of an upper-measurable set is the same as the complement of a lower-measurable set.

\begin{definition}[Measurable sets]
\label{lem:measurableset}
The type of \emph{measurable sets} consists of equivalence classes of monotone sequences of pairs of open and closed sets $(U_n,A_n)$ such that $U_{n}\subset U_{n+1}\subset A_{n+1}\subset U_{n}$ for all $n$, and $\nu(A_n\setminus U_n) \leq 2^{-n}$, under the equivalence relation $(U_n,A_n)\sim (V_n,B_n)$ if, and only if, $\nu(A_n\setminus V_n \cup B_n\setminus U_n)\to0$ as $n\to\infty$.
\end{definition}
In other words type of $\nu$-measurable sets in $\tpX$ is the effective completion of the type of pairs $(U,A)\in\tpOp(\tpX)\times\tpCl(\tpX)$ satisfying $U\subset A$ under the (non-metric) distance $d((U,A),(V,B))=P(A\setminus V \cup B\setminus U)$.
Note that $d((U,A),(U,A))=\nu(A\setminus U)$, which need not be zero, but from the condition \linebreak $d((U_m,A_m),(U_n,A_n))<2^{-\min(m,n)}$, we have $\mu(A_n\setminus U_n)<2^{-n}$ for all $n$.

The type of measurable sets is equivalent to giving a fast monotone lower-Cauchy sequence $(U_n)$ and a fast monotone upper-Cauchy sequence $(A_n)$ for the same set i.e. such that $\nu(A_n\setminus U_m)\to0$ whenever $m,n\to\infty$.

\subsection{Lower-measurable sets as point-sets}

If $(U_n)$ is a $\nu$-lower-Cauchy sequence, we will write $\nu(U_\infty)$ for $\sup_{n\in\N}\nu(U_n)-2^{-n}$.
Similarly, we write $\nu((U\cap V)_\infty)$ for $\sup_{n\in\N}\nu(U_n\cap V_n)-2^{-n}$.
It is tempting to define the lower-measure of $(U_n)$ as a property of the \emph{intersection} $\bigcap_{n\in\N}U_n$.
Unfortunately, for general spaces, it need not be the case that $\sup_{n\in\N}\nu(U_n)-2^{-n} = \sup_{n\in\N}\nu(V_n)-2^{-n}$ even if $\bigcap_{n\in\N}U_n=\bigcap_{n\in\N}V_n$.
However, for spaces satisfying the conditions of Theorem~\ref{thm:valuationborelmeasure}, the $\nu$-lower-measure is indeed a property of the $G_\delta$ intersection.
We prove this directly for Polish (separable completly metrisable) spaces:

\begin{theorem}[Lower-measurable sets in metric spaces]
\label{thm:gdfesetmeasure}
If $\tpX$ is a separable complete metric space, $\nu$ a valuation on $\tpX$, and $\bigcap_{n\in\N}U_n=\bigcap_{n\in\N}V_n$ for $\nu$-lower-Cauchy sequences $(U_n),(V_n)$, then $\sup_{n\in\N}U_n-2^{-n}=\sup_{n\in\N}V_n-2^{-n}$.
\end{theorem}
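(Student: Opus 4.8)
The plan is to reduce everything to a single inequality between the two limiting measures and to prove that inequality by approximating the common $G_\delta$ intersection from inside by \emph{compact} sets. First I would record that for a fast monotone lower-Cauchy sequence the lower-measure coincides with $\lim_n\nu(U_n)=\inf_n\nu(U_n)$: since $U_m\subset U_n$ and $\nu(U_m)\ge\nu(U_n)-2^{-n}$ for $m>n$, the reals $\nu(U_n)$ decrease, while letting $m\to\infty$ gives $\lim_m\nu(U_m)\ge\nu(U_n)-2^{-n}$, so $\sup_n(\nu(U_n)-2^{-n})=\lim_n\nu(U_n)$. Writing $u=\lim_n\nu(U_n)$, $v=\lim_m\nu(V_m)$, and $G=\bigcap_n U_n=\bigcap_m V_m$, and noting that the hypotheses are symmetric in $(U_n)$ and $(V_n)$, it suffices to prove $u\ge v$. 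The basic obstruction is that the natural comparison sets $V_m\setminus U_n$ are locally closed, not open, so $\nu$ is undefined on them; I would therefore work throughout with the upper-valuation $\bar\nu(A)=\nu(\tpX)-\nu(\tpX\setminus A)$ on closed sets (assuming, as elsewhere in this section, that $\nu$ is finite), using the two facts that $\bar\nu(A)\le\nu(U)$ whenever a closed $A$ lies in an open $U$, and that $\bar\nu$ is subadditive on closed sets; both follow directly from modularity.

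The key step, and the place I expect the main difficulty, is an inner-regularity statement: for any open $V$ and any $\eta>0$ there is a \emph{compact} $K\subset V$ with $\bar\nu(K)>\nu(V)-\eta$. This is exactly where completeness and separability are used. From a countable dense set I would cover $\tpX$ by balls of radius $2^{-k}$ and, by continuity of $\nu$ on increasing unions of open sets, retain finitely many of them carrying all but $\eta\,2^{-k}$ of the mass; intersecting the corresponding finite unions of closed balls over all $k$ yields a totally bounded closed set, which is compact because $\tpX$ is complete and which captures mass within $\eta$ of $\nu(\tpX)$. Intersecting such a tight compact set with an inner closed approximation of $V$, obtained by writing $V=\bigcup_k\{x:d(x,\tpX\setminus V)\ge 2^{-k}\}$ and applying Lemma~\ref{lem:decreasingopen} to the decreasing open sets $V\setminus\{x:d(x,\tpX\setminus V)\ge 2^{-k}\}$, whose intersection is empty, produces the required $K\subset V$.

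With inner regularity in hand, I would fix $\epsilon>0$ and choose for each $m$ a compact $K_m\subset V_m$ with $\bar\nu(K_m)>\nu(V_m)-\epsilon\,2^{-m-1}\ge v-\epsilon\,2^{-m-1}$. Setting $B_M=\bigcap_{m\le M}K_m$, each $B_M$ is compact with $B_M\subset V_M$, and since $V_M\setminus B_M\subset\bigcup_{m\le M}(V_m\setminus K_m)$, subadditivity of $\nu$ on open sets gives $\nu(V_M\setminus B_M)<\epsilon/2$, whence $\bar\nu(B_M)=\nu(V_M)-\nu(V_M\setminus B_M)>v-\epsilon/2$. Moreover $\bigcap_M B_M\subset\bigcap_m V_m=G\subset U_n$ for every $n$. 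Now fix $n$: the compact sets $B_M$ decrease and their intersection lies in the open set $U_n$, so by the finite-intersection property some $B_M\subset U_n$; as $B_M$ is a closed subset of the open set $U_n$, this yields $\nu(U_n)\ge\bar\nu(B_M)>v-\epsilon/2$. Since this holds for every $n$, $u=\inf_n\nu(U_n)\ge v-\epsilon/2$, and letting $\epsilon\to0$ gives $u\ge v$. Exchanging the roles of $(U_n)$ and $(V_n)$ gives $v\ge u$, hence $u=v$.

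I expect the inner-regularity (tightness) step to be the crux: it is the only point that uses completeness, to upgrade total boundedness to compactness, together with separability, and it is precisely this property that fails on the non-Polish spaces for which the lower-measure is genuinely not a function of the intersection.
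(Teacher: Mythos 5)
Your proof is correct, but it takes a genuinely different route from the paper's. You establish Ulam-type tightness---inner regularity of $\nu$ by \emph{compact} sets, built from finitely many small closed balls at every scale, with completeness upgrading total boundedness to compactness---and then use the finite-intersection property of the decreasing compacts $B_M$ to fit a single compact set of $\bar\nu$-mass at least $v-\epsilon$ inside each open $U_n$. The paper's proof avoids compactness altogether: it picks closed sets $A_n\subset U_n$ with $\bar\nu(A_n)\geq\nu(U_n)-2^{-(n+1)}$ (available from continuity of $\nu$ along the increasing open sets $I_\epsilon(U_n)$), forms the \emph{increasing} closed sets $B_n=\bigcap_{m\geq n}A_m$, which lie in $\bigcap_m U_m=\bigcap_m V_m$, shows $\bar\nu(B_n)\geq\lim_m\nu(U_m)-2^{-n}$ using that $\bar\nu$ is downward continuous along decreasing closed sets (this comes for free, being upward continuity of $\nu$ on the open complements), and then concludes $\lim_n\nu(U_n)\leq\lim_m\nu(V_m)$ directly from the same closed-inside-open comparison $\bar\nu(B_n)\leq\nu(V_m)$ that you use---but applied to every $m$ simultaneously, so no compact set is ever needed to get inside a single $U_n$. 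What your route buys is a self-contained classical argument (essentially Ulam's theorem) that makes the role of the Polish hypothesis fully explicit; what the paper's route buys is brevity and visibly weaker machinery---indeed its argument invokes completeness at most implicitly, so your closing diagnosis that tightness is ``the only point'' where completeness enters is accurate for your proof but overstates what the theorem requires. Two minor points of bookkeeping: where you say $\bar\nu$ is ``subadditive on closed sets'', what you actually use is $\bar\nu(A\cap B)\geq\bar\nu(A)+\bar\nu(B)-\nu(\tpX)$, i.e.\ subadditivity of $\nu$ on the open complements, so state it that way; and, like the paper, you silently read the hypothesis as fast \emph{monotone} lower-Cauchy sequences for a finite (probability) valuation, which is the section's standing convention and is needed both for $\nu(U_n)\geq v$ termwise and for $\bigcap_n U_n$ to be the correct point-set.
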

\begin{proof}
Since $I_{\epsilon}(U)\subset \clI_{\epsilon}(U)\subset I_{\delta}(U)$ when $\delta<\epsilon$, and since $\nu$ is continuous, there exist closed sets $A_n\subset U_n$ such that $\nu(A_n)\geq \nu(U_n)-2^{-(n+1)}$.
Let $B_n=\bigcap_{m\geq n}A_m$, so $B_n$ is an increasing sequence of closed sets such that $B_n\subset \bigcap_{m\geq n}U_m = U_\infty$ for all $n$, so $\bigcup_{n\in\N}B_n\subset \bigcap_{n\in\N}U_n$.
Further, since $\nu(A_n\cap\cdots\cap A_m)\geq\nu(U_n\cap\cdots\cap U_m)+\sum_{l=n}^{m}\bigl(\nu(A_l)-\nu(U_l)\bigr)\geq \nu(U_m)-2^{-n}$ for all $m$, we have that $\nu(B_n)\geq\inf_{m\in\N}\nu(U_m)-2^{-n}=\sup_{m\in\N}U_m-2^{-m}-2^{-n}$, so $\inf_{n\in\N}\nu(B_n)+2^{-n}\geq\sup_{n\in\N}\nu(U_n)-2^{-n}$, and $\lim_{n\to\infty}\nu(B_n)\geq\lim_{n\to\infty}\nu(U_n)$.
Since also $B_n\subset\bigcap_{n\in\N}V_n$ for all $n$, we have $\lim_{n\to\infty}\nu(U_n)\leq\lim_{n\to\infty}\nu(B_n)\leq\lim_{n\to\infty}\nu(V_n)$, so $\sup_{n\in\N}U_n-2^{-n}\leq\sup_{n\in\N}V_n-2^{-n}$. The reverse inequality follows by symmetry.
\end{proof}



\section{Computable Random Variables}
\label{sec:randomvariables}

In the standard approach to probability theory developed in classical analysis, one defines random variables as measurable functions over a base probability space.
Given types $\tpX$ and $\tpY$, a representation of the Borel measurable functions $f$ from $\tpX$ to $\tpY$ was given in~\cite{Brattka2005}, but this does not allow one to compute lower bounds for the measure of $f^{-1}(V)$ for $V\in\tpOp(\tpY)$.

A computable theory of random variables should, at a minimum, enable us to perform certain basic operations, including:
\begin{enumerate}[(i)]
 \item Given a random variable $X$ and open set $U$, compute lower-approximation to $\Pr(X\in U)$. \label{property:distribution}
 \item Given random variables $X_1,X_2$, compute the random variable $X_1\times X_2$ giving the joint distribution. \label{property:product}
 \item Given a random variable $X$ and a continuous function $f$, compute the image $f(X)$. \label{property:image}
 \item Given a sequence of random variables $X_1,X_2,\ldots$ converging effectively in probability, compute a limit random variable $X_\infty = \lim_{m\to\infty} X_m$. \label{property:limit}
 \item Given a probability distribution $\nu$ on a sufficiently nice space $\tpX$, compute a random variable $X$ with distribution $\nu$. \label{property:realisation}
\end{enumerate}
Property~(\ref{property:distribution}) states that we can compute the distribution of a random variable, while property~(\ref{property:product}) implies that a random variable is \emph{more} than its distribution; it also allows us to compute its joint distribution with another random variable.
Property~(\ref{property:image}) also implies that for random variables $X_1,X_2$ on a computable metric space $(\tpX,d)$, the random variable $d(X_1,X_2)$ is computable in $\R^+$, so the probability $\Pr(d(X_1,X_2)<\epsilon)$ is computable in $\tpIv_<$, and $\Pr(d(X_1,X_2)\leq\epsilon)$ is computable in $\tpIv_>$.
Property~(\ref{property:limit}) is a completeness property and allows random variables to be approximated.
Property~(\ref{property:realisation}) shows that random variables can realise a given distribution.
These properties are similar to those used in~\cite{Kersting2008}.

Ideally, one would like a representation of bounded measurable functions $f:\tpX\fto \R$ such that for \emph{every} finite measure $\mu$ on $\tpX$, the integral $\int_\tpX f(x) \, d\mu(x)$ is computable.
But then $f(y) = \int_\tpX f(x) \, d\delta_y(x)$ would be computable, so $f$ would be continuous.
Any effective approach to measurable functions and integration must therefore take some information about the measure into account.

We will consider random variables on a fixed probability space $(\Omega,P)$.
Since any probability distribution on a Polish space is equivalent to a distribution on the standard Lesbesgue-Rokhlin probability space~\cite{Rokhlin1952}, it is reasonable to take the base space to be the Cantor space $\Sigma=\{0,1\}^\omega$ and $P$ the standard measure.

However, our treatment of random variables will require the notion \emph{lower-measures}, and a type of \emph{lower-measurable sets}, for which we can compute the measure in $\R^+_<$.

\subsection{Measurable functions and random variables}

\begin{definition}[Measurable function]
Let $\tpW,\tpX$ be types and $\nu$ a finite measure on $\tpW$.
Then the type of \emph{$\nu$-measurable functions} $f$ from $\tpW$ to $\tpX$ is defined by continuous $f^{-1}:\tpOp(\tpX)\to\tpLM(\tpW)$ satisfying $f^{-1}(\emptyset)=_\nu \emptyset$, $f^{-1}(X)=_\nu W$, $f^{-1}(U_1\cap U_2)=_\nu f^{-1}(U_1)\cap f^{-1}(U_2)$ and $f^{-1}(U_1\cup U_2)=_\nu f^{-1}(U_1)\cup f^{-1}(U_2)$.
\par
We denote the $\nu$-measurable functions $f$ from $\tpW$ to $\tpX$ by $f:\tpW\mfto_\nu\tpX$, or simply $f:\tpW\mfto\tpX$ if the measure $\nu$ is clear from the context.
\par
Measurable functions $f_1$ and $f_2$ are considered equal if $f_1^{-1}(U) =_\nu f_2^{-1}(U)$ for all $U\in\tpOp(\tpX)$.
\end{definition}
\noindent
Note that since $f^{-1}$ is continuous, then we must have $f^{-1}(V_1)\subset_\nu f^{-1}(V_2)$ if $V_1\subset V_2$, and $f^{-1}(\bigcup_{n=0}^{\infty}V_n) =_\nu \bigcup_{n=0}^{\infty}f^{-1}(V_n)$.
\begin{remark}
We do not actually define $f$ as a \emph{function} $\tpW\to\tpX$ since this would involve evaluating at points, though clearly any continuous function $\tpW\to\tpX$ is measurable.
\end{remark}
\noindent

\begin{definition}[Random variable]
\label{defn:randomvariable}
Let $\Omega$ be a separable complete metric space used as a base space, $P$ a probability measure on $\Omega$, and $\tpX$ a topological space
Then a \emph{random variable} $X$ on $\tpX$ is a $P$-measurable function $\Omega\mfto \tpX$.
We denote the type of random variables on $\tpX$ by $\tpRV(\tpX)$ or $\Omega\mfto\tpX$.
\end{definition}
\noindent
We will sometimes write $\P(X\in U)$ as a shorthand for $P(\{\omega\in\Omega \mid X(\omega)\in U\})$; for a piecewise-continuous random variable $X$ we implicitly restrict $\omega$ to $\dom(X)$.

Just as for measurable functions, although a random variable $X$ is \emph{defined} relative to the underlying space $\Omega$, we cannot in general actually \emph{compute} $X(\omega)$ in any meaningful sense for fixed $\omega\in\Omega$. The expression $X(\omega)$ only makes sense for random variables \emph{given} as (piecewise) continuous functions $\Omega\fto \tpX$ as stated below:
\begin{definition}[(Piecewise)-continuous random variable]\label{defn:continuousrandomvariable}\label{defn:piecewisecontinuousrandomvariable}
A \emph{continuous random variable} on $(\Omega,P)$ with values in $\tpX$ is a continuous function $X:\Omega\fto \tpX$.
\par
A \emph{piecewise-continuous random variable} is a continuous partial function $X:\Omega\pfto \tpX$ such that $\dom(X)\in\tpOp(\Omega)$ and $P(\dom(X))=1$.
\end{definition}
\noindent
We use the terminology ``piecewise-continuous'' since $X:\Omega\pfto \tpX$ may arise as the restriction of a piecewise-continuous function to its continuity set.

\begin{observation}
\label{obs:piecewisecontinuous}
Given a piecewise-continuous random variable $X:\Omega\pfto\tpX$, we can compute $X^{-1}:\tpOp(\tpX)\to\tpMe_<(\Omega)$, since $X^{-1}(U)$ is open for open $U$, and any open set is lower-measurable.
\end{observation}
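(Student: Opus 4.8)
The plan is to verify directly that a piecewise-continuous random variable $X$ yields a computable preimage map $X^{-1}:\tpOp(\tpX)\to\tpMe_<(\Omega)$, where the target is the type of lower-measurable sets. The key observation is that for a continuous partial function $X:\Omega\pfto\tpX$ with open domain, the preimage $X^{-1}(U)$ of any open $U\subset\tpX$ is open \emph{as a subset of} $\dom(X)$, hence open in $\Omega$ since $\dom(X)\in\tpOp(\Omega)$. Thus $X^{-1}(U)\in\tpOp(\Omega)$, and by Definition~\ref{defn:lowermeasurableset} every open set is trivially a $\nu$-lower-measurable set (represented by the constant sequence $U_n=X^{-1}(U)$, which is fast monotone since $\nu(U_n)-\nu(U_m)=0\leq 2^{-n}$). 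So the map is well-defined into $\tpMe_<(\Omega)$.

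First I would recall that continuity of $X:\Omega\pfto\tpX$ means precisely that preimages of opens are relatively open in $\dom(X)$, i.e. $X^{-1}(U)=W\cap\dom(X)$ for some $W\in\tpOp(\Omega)$; since $\dom(X)$ is itself open, $X^{-1}(U)$ is open in $\Omega$. I would then note that this is computable: the representation of $X$ as a continuous partial function gives a machine computing a name of $X^{-1}(U)\in\tpOp(\Omega)$ from a name of $U\in\tpOp(\tpX)$, because $\tpOp(\tpX)\cong(\tpX\fto\tpSi)$ and composition with $X$ (evaluation) is computable in the Cartesian closed setting described in Section~\ref{sec:computableanalysis}. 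Finally, the embedding $\tpOp(\Omega)\hookrightarrow\tpMe_<(\Omega)$ is computable, sending an open set $U$ to the (trivially fast monotone lower-Cauchy) constant sequence it represents; composing gives the desired computable $X^{-1}:\tpOp(\tpX)\to\tpMe_<(\Omega)$.

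The substantive content is almost entirely in the two facts that (a) preimages of opens under a continuous partial map with open domain are open in the ambient space, and (b) open sets inject computably into lower-measurable sets. Neither requires the full machinery of lower-Cauchy sequences: the point of stating the observation is precisely that for piecewise-continuous random variables we bypass the limiting construction entirely, since we land in $\tpOp(\Omega)$ directly. I expect no serious obstacle here; the only point needing care is confirming that the constant sequence $U_n=X^{-1}(U)$ genuinely satisfies the fast monotone lower-Cauchy condition $\nu(U_m\cap U_n)\geq\nu(U_n)-2^{-n}$ for $m>n$, which is immediate as $U_m\cap U_n=U_n$ gives equality, so the representing sequence is valid and its lower-measure equals $\nu(X^{-1}(U))=\Pr(X\in U)$ as expected.
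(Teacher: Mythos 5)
Your proposal is correct and follows exactly the paper's own (inline) justification: the Observation's content is precisely that $X^{-1}(U)$ is open in $\Omega$ because $\dom(X)$ is open and $X$ is continuous there, and that open sets embed computably into $\tpMe_<(\Omega)$ via the constant (trivially fast monotone lower-Cauchy) sequence. Your extra verification that the constant sequence satisfies the condition of Definition~\ref{defn:lowermeasurableset} and yields lower-measure $\nu(X^{-1}(U))$ is a faithful expansion of the same argument, not a different route.
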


Clearly, for piecewise-continuous random variables $X_1,X_2$, we have $X_1 = X_2$ if $P(\{\omega \in \Omega \mid X_1(\omega) \neq X_2(\omega)\})=0$; in other words, $X_1$ and $X_2$ are \emph{almost-surely equal}.

By~\cite[Theorem~2.2.4]{Weihrauch1999}, machine-computable functions $\{0,1\}^\omega\fto\{0,1\}^\omega$ are defined on a $G_\delta$-subsets of $\{0,1\}^\omega$.
Indeed, any function into a metric space is continuous on a $G_\delta$ set of points.
This makes functions defined and continuous on a full-measure $G_\delta$-subset of $\{0,1\}^\omega$ a natural class of random variables, where by a full-measure $G_\delta$-set $W$, we require $P(U)=1$ whenever $W\subset U$ for open $U$.
\begin{definition}[Almost-surely continuous random variable]\label{defn:almostsurelycontinuousrandomvariable}
An \emph{almost-surely-continuous random variable} on $(\Omega,P)$ with values in $\tpX$ is a continuous partial function $X:\Omega\pfto \tpX$ such that $\dom(X)$ is a $G_\delta$ set and $P(\dom(X))=1$.
\end{definition}
\noindent
Not all measurable random variables are almost-surely continuous:
\begin{example}
Define a strong Cauchy sequence $X_n$ of piecewise-continuous random variable taking values in $\{0,1\}$ such that $X_n=1$ on a decreasing sequence of closed sets $W_n$ of measure $(1+2^{-n})/2$ whose limit is a Cantor set.
Then $X_\infty=\lim_{n\to\infty}X_n$ is discontinuous on a set of positive measure, so is not an almost-surely-continuous random variable.
\end{example}

It will often be useful to consider random variables taking finitely many values:
\begin{definition}[Simple random variable]
A random variable $X$ on $(\Omega,P)$ with values in $\tpX$ \emph{simple} if it takes finitely many values.
\end{definition}

\subsection{Properties of random variables}

We now consider the properties (i)-(iv) that we wish our random variables to have, and show that they are satisfied.

\subsubsection{Distribution}

\begin{definition}[Distribution of a measurable random variable]
For a measurable random variable $X$ over base space $(\Omega,\P)$, define its \emph{distribution} by
\[ \Pr(X\in U) = \P(X^{-1}(U)) . \]
\end{definition}
From our definition, the probability distribution of a random variable is trivially computable:
\begin{observation}
Let $X$ be a random variable on $\tpX$. Then the distribution of $X$ is computable.
\end{observation}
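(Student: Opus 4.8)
The plan is to read the distribution as a composition of two operations, each of which has already been shown to be computable, so that the whole follows from closure under composition in the category of computable types. By the definition of a measurable random variable, $X$ is given \emph{precisely} by its preimage operator $X^{-1}\colon\tpOp(\tpX)\to\tpLM(\Omega)$, a computable (continuous) map into the type of $P$-lower-measurable sets; this is primitive data, not something we must reconstruct. The Proposition that lower-measure is computable supplies a computable map $\tpLM(\Omega)\to\R^+_<$ sending a lower-measurable set to its $P$-lower-measure. Composing, the assignment $U\mapsto P(X^{-1}(U))$ is computable from $\tpOp(\tpX)$ into $\R^+_<$, hence into $\tpHl$, so $\Pr(X\in\cdot)$ is computable as a function on open sets.

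The remaining step is to verify that this map is a genuine valuation, so that the distribution lands in the intended type $\tpOp(\tpX)\to\tpHl$. The condition $\Pr(X\in\emptyset)=P(X^{-1}(\emptyset))=0$ is immediate from $X^{-1}(\emptyset)=_P\emptyset$. For modularity, the measurability axioms give $X^{-1}(U\cup V)=_P X^{-1}(U)\cup X^{-1}(V)$ and $X^{-1}(U\cap V)=_P X^{-1}(U)\cap X^{-1}(V)$, while the Proposition that lower-measure is modular yields $P\bigl(X^{-1}(U)\cup X^{-1}(V)\bigr)+P\bigl(X^{-1}(U)\cap X^{-1}(V)\bigr)=P(X^{-1}(U))+P(X^{-1}(V))$. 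Combining the two gives $\Pr(X\in U\cup V)+\Pr(X\in U\cap V)=\Pr(X\in U)+\Pr(X\in V)$, as required.

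There is no substantial obstacle: the content of the observation is that the two intermediate maps compose at the level of \emph{computable} operators, i.e.\ that a name for $X$ (equivalently, for $X^{-1}$) together with a name for $U$ jointly and uniformly produce a name for the lower-real $P(X^{-1}(U))$, which is exactly what the Cartesian-closed structure of the category of computable types provides. The only point demanding attention is to keep the passage uniform in $X$, so that the statement records the stronger fact that the distribution operator $\tpRV(\tpX)\to(\tpOp(\tpX)\to\tpHl)$ is itself computable, thereby realising property~\itemref{property:distribution}.
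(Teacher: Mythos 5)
Your proposal is correct and takes essentially the same route as the paper, which records no proof because the claim follows trivially from the definitions: a random variable \emph{is} its computable preimage operator $X^{-1}:\tpOp(\tpX)\to\tpLM(\Omega)$, and composing it with the computable lower-measure map $\tpLM(\Omega)\to\R^+_<$ gives $U\mapsto\P(X^{-1}(U))$ uniformly in $X$ and $U$. Your extra verification that the result is a valuation (via the measurability axioms and modularity of the lower measure) is a harmless elaboration of the same argument, not a different approach.
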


\subsubsection{Products}

\begin{definition}
Suppose $\tpX_1$ and $\tpX_2$ are such that the product space $\tpX_1 \times \tpX_2$ is a sequential space.
Then for $X_1:\tpRV(\tpX_1)$ and $X_2:\tpRV(\tpX_2)$, the product $X_1\times X_2:\tpRV(\tpX_1\times\tpX_2)$ is defined by setting
$[X_1\times X_2]^{-1}(U_1\times U_2)=X_1^{-1}(U_1) \cap X_2^{-1}(U_2)$, and extending to arbitrary sets by taking unions of the product open sets.
\end{definition}
The product of two random variables is computable:
\begin{theorem}[Computability of products]
\label{thm:productrandomvariable}
Suppose $\tpX$ and $\tpY$ are such that the product space $\tpX \times \tpY$ is sequential.
Then for $X:\tpRV(\tpX)$ and $\tpY:\tpRV(\tpY)$, the product $X\times Y:\tpRV(\tpX\times\tpY)$ is computable.
\end{theorem}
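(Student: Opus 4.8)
The plan is to reduce the computability of the product random variable to the already-established computability of operations on $\nu$-lower-measurable sets. Recall that a random variable $X:\tpRV(\tpX)$ is given by its continuous preimage map $X^{-1}:\tpOp(\tpX)\to\tpLM(\Omega)$, and that the product $X\times Y$ is \emph{defined} on basic open rectangles by $[X\times Y]^{-1}(U\times V)=X^{-1}(U)\cap Y^{-1}(V)$. So the task is to show that this prescription extends to a genuine computable measurable function, i.e.\ that the resulting map $\tpOp(\tpX\times\tpY)\to\tpLM(\Omega)$ is computable and satisfies the axioms of Definition for a measurable function.

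First I would handle the extension from rectangles to arbitrary open sets. Since $\tpX\times\tpY$ is assumed sequential (so that its admissible quotient representation behaves well), every open $W\subset\tpX\times\tpY$ is a countable union $W=\bigcup_{k}(U_k\times V_k)$ of open rectangles, and this decomposition is computable from a name of $W$ because the product topology is countably generated from the two factors. I would then \emph{define} $[X\times Y]^{-1}(W)=\bigcup_k\bigl(X^{-1}(U_k)\cap Y^{-1}(V_k)\bigr)$. The right-hand side is built from: the computable preimage maps $X^{-1}$ and $Y^{-1}$ into $\tpLM(\Omega)$; the computable intersection of two lower-measurable sets (Theorem on intersections and unions); and the computable countable union of lower-measurable sets (Theorem on countable unions). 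Chaining these three computable operations gives a machine computing $[X\times Y]^{-1}(W)$ in $\tpLM(\Omega)$, which is exactly what property (i) requires.

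Next I would verify that the map so defined is well-defined and satisfies the measurable-function axioms. Well-definedness means independence (up to $=_\nu$) of the chosen rectangle decomposition of $W$; this follows from the modularity and monotonicity of the lower-measure established for $\nu$-lower-measurable sets, since two decompositions of the same $W$ have the same union in $\tpLM(\Omega)$ up to measure zero. The axioms $[X\times Y]^{-1}(\emptyset)=_\nu\emptyset$ and $[X\times Y]^{-1}(\tpX\times\tpY)=_\nu\Omega$ are immediate from $X^{-1}(\tpX)=_\nu\Omega$ and $Y^{-1}(\tpY)=_\nu\Omega$. For the lattice homomorphism properties, intersection distributes over the rectangle decompositions because $(U_1\times V_1)\cap(U_2\times V_2)=(U_1\cap U_2)\times(V_1\cap V_2)$, and one uses that $X^{-1}$ and $Y^{-1}$ each preserve finite intersections and unions up to $=_\nu$, together with modularity of the lower-measure to push these through the countable unions.

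I expect the main obstacle to be the interchange of the countable-union extension with the intersection axiom: verifying $[X\times Y]^{-1}(W_1\cap W_2)=_\nu[X\times Y]^{-1}(W_1)\cap[X\times Y]^{-1}(W_2)$ for general open $W_1,W_2$ requires distributing an intersection across two countable unions of rectangles and controlling the measure of the resulting doubly-indexed family in $\tpLM(\Omega)$. The estimates here are not hard individually, but one must be careful that the error terms $2^{-n}$ from the fast lower-Cauchy representations aggregate correctly under the countable operations; the bound from the countable-union theorem, namely $\nu\bigl(\bigcup V_k\bigr)-\nu\bigl(\bigcup U_k\bigr)\leq\sum_k\bigl(\nu(V_k)-\nu(U_k)\bigr)$, is exactly the tool that makes this summability work, so I would lean on it throughout rather than re-deriving term-by-term estimates.
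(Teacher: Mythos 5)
Your proof takes essentially the same route as the paper's: sequentiality of $\tpX\times\tpY$ reduces everything to open rectangles, where $[X\times Y]^{-1}(U\times V)=X^{-1}(U)\cap Y^{-1}(V)$ is computable by Theorem~\ref{thm:intersectionunion}, with the extension to arbitrary open sets by countable unions of rectangles handled by Theorem~\ref{thm:countableunion}. The paper's proof is a two-line version of yours, and your extra material (computing the rectangle decomposition from a name of $W$, well-definedness, and verification of the modularity axioms) merely makes explicit what the paper leaves implicit in its definition of the product random variable.
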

\begin{proof}
Since $\tpX\times \tpY$ is a sequential space, the topology is generated by sets of the form $U\times V$ for $U\in\tpOp(\tpX)$ and $V\in\tpOp(\tpY)$.
Then $(X\times Y)^{-1}(U\times V) = X^{-1}(U) \cap Y^{-1}(V)$, which is computable in $\tpLM(\Omega,P)$ by Theorem~\ref{thm:intersectionunion}.
\end{proof}
\begin{lemma}
If $X_1$, $X_2$ are continuous random variables, then the product $X_1\times X_2$ is the functional product
\begin{equation*} (X_1\times X_2)(\omega) = (X_1(\omega),X_2(\omega)) .\end{equation*}
\end{lemma}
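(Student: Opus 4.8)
The plan is to verify that the two definitions of the product — the abstract one via preimages of open rectangles (Definition before Theorem~\ref{thm:productrandomvariable}) and the concrete functional product $(X_1\times X_2)(\omega)=(X_1(\omega),X_2(\omega))$ — agree $P$-almost everywhere when $X_1,X_2$ are \emph{continuous} random variables, i.e. genuine continuous functions $\Omega\fto\tpX_i$. Since the abstract product is determined by its action $[X_1\times X_2]^{-1}(U_1\times U_2)$ on open rectangles (these generate the topology of the sequential space $\tpX_1\times\tpX_2$), and two measurable functions are equal precisely when their preimages of open sets agree up to $\nu$-null sets, it suffices to check that the functional product $Z:=\omega\mapsto(X_1(\omega),X_2(\omega))$ satisfies $Z^{-1}(U_1\times U_2)=_\nu X_1^{-1}(U_1)\cap X_2^{-1}(U_2)$ for all open $U_1\subset\tpX_1$, $U_2\subset\tpX_2$.

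First I would observe that $Z$ is itself a continuous function $\Omega\fto\tpX_1\times\tpX_2$: it is the pairing $\langle X_1,X_2\rangle$ of two continuous maps into the product, which is continuous by the universal property of the product type (equivalently, because $\pi_{\tpX_i}\circ Z=X_i$ is continuous and the product carries the initial topology). Hence $Z$ is a continuous, and therefore measurable, random variable, so it has a well-defined preimage operator on open sets. Next I would compute this preimage on a rectangle directly as a set equality of honest subsets of $\Omega$:
\[
Z^{-1}(U_1\times U_2)=\{\omega\mid (X_1(\omega),X_2(\omega))\in U_1\times U_2\}=X_1^{-1}(U_1)\cap X_2^{-1}(U_2),
\]
which holds on the nose (not merely up to null sets) because membership in a rectangle factors as the conjunction of the two coordinate conditions. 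For continuous $X_i$ the sets $X_i^{-1}(U_i)$ are genuinely open, so their intersection is exactly the open set assigned by the abstract definition.

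I would then extend from rectangles to arbitrary open sets of $\tpX_1\times\tpX_2$. Since the space is sequential, every open set is a countable union of open rectangles, and both the abstract product preimage (by its defining extension ``by taking unions of the product open sets'') and the functional preimage $Z^{-1}$ commute with countable unions (the latter because preimage of any function commutes with arbitrary unions, the former by continuity of measurable preimage operators as noted after the definition of measurable function). Agreement on the generating rectangles therefore propagates to agreement on all open sets, giving $Z^{-1}(W)=_\nu[X_1\times X_2]^{-1}(W)$ for every $W\in\tpOp(\tpX_1\times\tpX_2)$, which is exactly the condition for equality of measurable functions. Thus $X_1\times X_2=Z$ as random variables.

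I expect the only genuine subtlety — rather than an obstacle — to be bookkeeping about the sequential structure: one must use that $\tpX_1\times\tpX_2$ is sequential to guarantee that arbitrary open sets are recovered as countable unions of rectangles, so that the two extension procedures actually coincide, and one should be slightly careful that the equality on rectangles is on-the-nose while the final statement is the almost-everywhere equality built into the notion of equality of random variables. Everything else is a routine unwinding of definitions, and no measure estimates or approximation arguments are needed since the $X_i$ are assumed continuous and total.
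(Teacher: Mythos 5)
Your proof is correct and follows essentially the same route as the paper, whose one-line argument is precisely your rectangle computation $[X_1\times X_2]^{-1}(U_1\times U_2)=X_1^{-1}(U_1)\cap X_2^{-1}(U_2)=\{\omega\mid (X_1(\omega),X_2(\omega))\in U_1\times U_2\}$; your additional care about continuity of the pairing and the extension to general open sets via the sequential-space hypothesis simply makes explicit what the paper's definition of the product already builds in.
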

\begin{proof}
$[X_1\times X_2]^{-1}(U_1\times U_2)=X_1^{-1}(U_1) \cap X_2^{-1}(U_2)=\{\omega\in\Omega\mid X_1(\omega)\in U_1\}\times\{\omega\in\Omega\mid X_2(\omega)\in U_2\}$.
\end{proof}

\subsubsection{Image}

\begin{definition}
The image of a random variable $X:\tpRV(\tpX)$ under a continuous function $f:\tpX\to\tpY$ is defined by $[f(X)]^{-1}(V):=X^{-1}(f^{-1}(V))$.
\end{definition}

\begin{theorem}[Computability of images]
\label{thm:imagerandomvariable}
The image of a random variable under a continuous function is computable.
\end{theorem}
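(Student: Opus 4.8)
The plan is to reduce the computability of the image $f(X)$ to the computability of preimages under the continuous function $f$ together with the already-established computability of $\nu$-lower-measurable sets. By definition, $[f(X)]^{-1}(V) = X^{-1}(f^{-1}(V))$ for $V\in\tpOp(\tpY)$, so the whole construction factors through two maps: the preimage map of $f$ on open sets, and the preimage map of the measurable random variable $X$. First I would observe that since $f:\tpX\fto\tpY$ is continuous, $f^{-1}(V)$ is an open subset of $\tpX$ for any $V\in\tpOp(\tpY)$, and the induced map $f^{-1}:\tpOp(\tpY)\fto\tpOp(\tpX)$ is computable; this is a standard fact in type-two effectivity, following from the identification of $\tpOp(\tpX)$ with $\tpX\fto\tpSi$ via characteristic functions and the computability of composition (evaluation) in the cartesian closed category of computable types.

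Next I would use the assumption that $X$ is a $P$-measurable random variable, which by definition supplies a computable map $X^{-1}:\tpOp(\tpX)\fto\tpLM(\Omega)$. Composing the two computable maps gives a computable map $V\mapsto X^{-1}(f^{-1}(V))$ from $\tpOp(\tpY)$ into the type $\tpLM(\Omega)$ of lower-measurable sets. It then remains to check that this composite map genuinely defines a measurable random variable, i.e.\ that it satisfies the defining axioms of a measurable function: preservation (up to $=_\nu$) of $\emptyset$, of the whole space, and of finite intersections and unions. Each of these follows immediately from the corresponding property of $X^{-1}$ combined with the elementary set-theoretic identities $f^{-1}(\emptyset)=\emptyset$, $f^{-1}(\tpY)=\tpX$, $f^{-1}(V_1\cap V_2)=f^{-1}(V_1)\cap f^{-1}(V_2)$, and $f^{-1}(V_1\cup V_2)=f^{-1}(V_1)\cup f^{-1}(V_2)$, which hold for preimages under any function and are respected by $X^{-1}$ by its own axioms.

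The main point requiring care, and what I expect to be the only real obstacle, is verifying that the intersection and union conditions are preserved \emph{computably} on the lower-measurable target type rather than merely set-theoretically. That is, I must ensure that $X^{-1}(f^{-1}(V_1)\cap f^{-1}(V_2))=_\nu X^{-1}(f^{-1}(V_1))\cap X^{-1}(f^{-1}(V_2))$ as lower-measurable sets in $\tpLM(\Omega)$, where the intersection on the right is the operation shown computable in Theorem~\ref{thm:intersectionunion}; the analogous statement for unions uses the union operation from the same theorem. Since $f^{-1}(V_1)\cap f^{-1}(V_2)$ is itself an open set and $X^{-1}$ commutes with finite intersections of open sets by its defining axioms, this reduces to the measurability axioms of $X$ applied to the open sets $f^{-1}(V_1)$ and $f^{-1}(V_2)$, and the computability is then inherited directly from Theorem~\ref{thm:intersectionunion}. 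Thus the argument is essentially a composition-of-computable-maps argument, with the substantive work already packaged into the computability of intersections and unions of lower-measurable sets.
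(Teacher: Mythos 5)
Your proposal is correct and follows essentially the same route as the paper, which proves the result in one line by observing that $[f(X)]^{-1}(V)=X^{-1}(f^{-1}(V))$ is computable because $f^{-1}(V)$ is computable in $\tpOp(\tpX)$. Your additional verification that the composite preserves the modularity axioms (preservation of $\emptyset$, the whole space, and finite intersections and unions up to $=_\nu$) is a sound elaboration of what the paper leaves implicit, but does not change the argument.
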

\begin{proof}
$[f(X)]^{-1}(V) = X^{-1}(f^{-1}(V))$, which is computable since $f^{-1}(V)$ is computable in $\tpOp(\tpX)$.
\end{proof}

\begin{remark}
If $g:\tpX\mfto\tpY$ and $f:\tpY\mfto \tpZ$ are measurable then the composition is not computable.
For example, take $g_c:\tpOmega\to\tpR$ to be the constant function $g_c(\omega)=c$, and $f:\tpR\mfto\{0,1\}$ the Heaviside function $f(x)=0$ if $x\leq 0$ and $f(x)=1$ for $x>0$.
Then for $r=1$ and $c_n\searrow c_\infty 0$ we have $P((f\circ g_{c_n})^{-1}(1))=P(\Omega)=1$, but $P((f\circ g_{c_\infty})^{-1}(1))=P(\emptyset)=0$.
\end{remark}

\subsubsection{Convergence}

The convergence relation induced by the standard representation of a function type is that of \emph{pointwise-convergence}.
For random variables (as measurable functions), this means that $X_n\to X_\infty$ if, and only if, for all open $U$, $X_n^{-1}(U)\to X_\infty^{-1}(U)$ in the type of $\P$-lower-measurable sets.
Explicitly:
\begin{property}
A sequence of random variables $(X_n)_{n\in\N}$ \emph{converges} to a random variable $X_\infty$ if for all open $U$,
$\liminf_{n\to\infty}\Pr(X_n\in U \wedge X_\infty\in U)\geq \Pr(X_\infty\in U)$.
\par
The convergence is \emph{effective} if $\Pr(X_n\in U \wedge X_\infty\in U)\geq \Pr(X_\infty\in U)-\varepsilon(U,n)$ for known $\varepsilon(U,n)$, and \emph{fast} if $\Pr(X_n\in U \wedge X_\infty\in U)\geq \Pr(X_\infty\in U)-2^{-n}$.
\end{property}
\noindent
We also obtain computability of limits of effectively-converging Cauchy-like sequences.
\begin{definition}
A sequence of random variables $(X_n)_{n\in\N}$ is a \emph{fast Cauchy sequence} if for all open $U$, and all $m>n$, $\Pr(X_m\in U \wedge X_n\in U)\geq \Pr(X_n\in U)-2^{-n}$.
\end{definition}
\begin{theorem}\label{thm:fastcauchyrandomvariable}
If $(X_n)$ is a fast Cauchy sequence of random variables, then $\lim_{n\to\infty}X_n$ exists and is computable from $(X_n)$.
\end{theorem}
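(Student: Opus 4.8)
The plan is to construct $X_\infty$ by specifying its preimage map $X_\infty^{-1}$ directly, and then to check that this map is a genuine measurable function for which $X_n\to X_\infty$ holds effectively. The key observation is that the fast-Cauchy hypothesis on $(X_n)$ says precisely that, for \emph{each} fixed open $U$, the sequence of lower-measurable sets $(X_n^{-1}(U))_{n\in\N}$ satisfies $\nu(X_m^{-1}(U)\cap X_n^{-1}(U)) \geq \nu(X_n^{-1}(U)) - 2^{-n}$ for all $m>n$; that is, it is a fast $\nu$-lower-Cauchy sequence of lower-measurable sets in the sense of Definition~\ref{defn:lowermeasureablesequence}, and crucially the rate $2^{-n}$ is \emph{uniform in $U$}.

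First I would define, for each $U\in\tpOp(\tpX)$, the set $X_\infty^{-1}(U)\in\tpLM(\Omega)$ as the limit of the fast lower-Cauchy sequence $(X_n^{-1}(U))_n$. This limit exists and is computable by Theorem~\ref{thm:fastcauchylowermeasurableset}, extended to non-monotone sequences as noted in the remark following Lemma~\ref{lem:fasteffectivelowermeasurablesequence} (if preferred, one first replaces the sequence by the equivalent monotone one $\bigcup_{m\geq n+1}X_m^{-1}(U)$, whose computability follows from Theorem~\ref{thm:countableunion}). The limit is well-defined as an equivalence class since lower-measure is invariant under $\sim_\nu$. By construction, the convergence $X_n^{-1}(U)\to X_\infty^{-1}(U)$ in $\tpLM(\Omega)$ is effective at the uniform rate $2^{-n}$, which is exactly the assertion that $X_n\to X_\infty$ effectively (indeed fast) in the convergence relation on random variables.

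Next I would verify that $U\mapsto X_\infty^{-1}(U)$ is a legitimate measurable function: that it is continuous $\tpOp(\tpX)\to\tpLM(\Omega)$ and satisfies the four algebraic conditions. Continuity should follow from the uniformity of the rate, since $X_\infty^{-1}$ is then a limit, uniform in $U$, of the continuous preimage maps $X_n^{-1}$. The conditions $X_\infty^{-1}(\emptyset)=_\nu\emptyset$ and $X_\infty^{-1}(\tpX)=_\nu\Omega$ pass to the limit directly. For the lattice conditions, each $X_n^{-1}$ satisfies $X_n^{-1}(U_1\cap U_2)=_\nu X_n^{-1}(U_1)\cap X_n^{-1}(U_2)$ and its union analogue; taking limits and invoking that $\cap$ and $\cup$ are computable on $\tpLM(\Omega)$ (Theorem~\ref{thm:intersectionunion}) should yield the corresponding identities for $X_\infty^{-1}$.

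I expect the main obstacle to be this last step: interchanging the limit in $n$ with the operations $\cap,\cup$ on lower-measurable sets and establishing continuity of $X_\infty^{-1}$. The delicate point is that $X_\infty^{-1}(U)$ is only an equivalence class of lower-Cauchy sequences, so the lattice identities must be checked at the level of $\sim_\nu$ and the two limiting processes carefully commuted; here the uniform rate $2^{-n}$ is essential, as it controls the interchange without any nonuniform $\varepsilon$-bookkeeping. Computability of $X_\infty$ is then immediate, since every operation invoked---the diagonal limit of Theorem~\ref{thm:fastcauchylowermeasurableset} and the binary operations of Theorem~\ref{thm:intersectionunion}---is computable, and these combine, via the cartesian-closed structure of the category of computable types, to produce a name for $X_\infty^{-1}$ as a continuous function and hence a name for $X_\infty$.
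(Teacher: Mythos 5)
Your proposal takes essentially the same route as the paper's proof: for each open $U$ you regard $(X_n^{-1}(U))_{n\in\N}$ as a fast $\P$-lower-Cauchy sequence of lower-measurable sets (with rate $2^{-n}$ uniform in $U$, supplied directly by the fast-Cauchy hypothesis) and define $X_\infty^{-1}(U)$ as its limit via Theorem~\ref{thm:fastcauchylowermeasurableset}. Your additional steps---monotonizing via Theorem~\ref{thm:countableunion} and verifying continuity and the lattice/modularity axioms of $X_\infty^{-1}$ by passing through the limit---merely make explicit what the paper compresses into ``converges to $W_\infty$ with the correct properties'' (compare the same limit argument spelled out in the proof of Theorem~\ref{thm:fanmetriclimit}).
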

\begin{proof}
For every open $U\subset \tpX$, we need to compute $X_\infty^{-1}(U)$ as a $\P$-lower-measurable set.
Define $W_n=X_n^{-1}(U)$, which is a $\P$-lower-measurable set computable from $X_n$.
Then for $m>n$, $\P(W_m\cap W_n)\geq \P(W_n)-2^{-n}$, so $(W_n)$ forms a fast lower-Cauchy sequence on $\P$-lower-measurable sets, and converges to $W_\infty$ with the correct properties by Theorem~\ref{thm:fastcauchylowermeasurableset}.
\end{proof}

\subsection{Equality of random variables}

We now show a result that two random variables are equal if, and only if, their products with the identity random variable on the base space are equal.
\begin{proposition}
\label{prop:equalproductdistribution}
Let $I$ be the identity random varible on $\Omega$. Then $X_1=X_2$ if, and only if $Y_1:=I\times X_1$ and $Y_2:=I\times X_2$ have the same distribution.
\end{proposition}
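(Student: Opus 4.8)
The plan is to reduce both implications to a single statement about the ``slice'' measures $V \mapsto P\bigl(V \cap X_i^{-1}(U)\bigr)$ on $\Omega$. First I would unfold the product: since $I$ is the identity, $(I\times X_i)^{-1}(V\times U) = I^{-1}(V)\cap X_i^{-1}(U) = V\cap X_i^{-1}(U)$, so the distribution of $Y_i$ on a basic open box is $\Pr(Y_i\in V\times U) = P\bigl(V\cap X_i^{-1}(U)\bigr)$. Because the product topology on $\Omega\times\tpX$ is generated by such boxes, and a valuation is determined on all open sets by its values on a meet-closed basis (via modularity/inclusion--exclusion on finite unions and monotone continuity on increasing unions), $Y_1$ and $Y_2$ have the same distribution if and only if $P\bigl(V\cap X_1^{-1}(U)\bigr)=P\bigl(V\cap X_2^{-1}(U)\bigr)$ for all open $V\subset\Omega$, $U\subset\tpX$. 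Since $X_1=X_2$ means exactly $X_1^{-1}(U)=_P X_2^{-1}(U)$ for all open $U$, the whole proposition reduces to proving, for each fixed open $U$ and the $P$-lower-measurable sets $A:=X_1^{-1}(U)$, $B:=X_2^{-1}(U)$, that $A=_P B$ holds if and only if $P(V\cap A)=P(V\cap B)$ for every open $V$.

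The forward direction is the easy one. Assuming $A=_P B$, i.e.\ $P(A)=P(B)=P(A\cap B)$, I would apply Proposition~\ref{prop:subsetintersectionunion}(a) — whose proof uses only modularity and monotonicity and so carries over verbatim to lower-measurable sets, the lower-measure being modular — with the containment $A\cap B\subset A$ and the test set $V$. This gives $P(A\cap B)+P(A\cap V)\leq P(A)+P(A\cap B\cap V)$, and cancelling $P(A\cap B)=P(A)$ yields $P(V\cap A)\leq P(V\cap A\cap B)$; with monotonicity this forces $P(V\cap A)=P(V\cap A\cap B)$, and symmetrically $P(V\cap B)=P(V\cap A\cap B)$, so $P(V\cap A)=P(V\cap B)$.

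The reverse direction is the crux, and the main obstacle is that $A$ and $B$ are only lower-measurable, so their measures are one-sidedly computable and I cannot take complements. The trick I would use is to feed the open approximants of $A$ back in as test sets. Let $(A_k)$ and $(B_k)$ be the fast monotone lower-Cauchy sequences of open sets representing $A$ and $B$. Taking $V=A_k$ and using $A\subset A_k$ (so $A_k\cap A=_P A$), the hypothesis gives $P(A_k\cap B)=P(A_k\cap A)=P(A)$. Now the key point: since $A_k\cap B$ is represented by the fast monotone decreasing sequence $(A_k\cap B_m)_m$, its lower-measure equals $\lim_m P(A_k\cap B_m)=\inf_m P(A_k\cap B_m)$. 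Here I would record the elementary fact that for a fast monotone lower-Cauchy sequence the defining condition $P(U_m)\geq P(U_n)-2^{-n}$ for all $m>n$ forces $P(U_n)-2^{-n}\leq\lim_\ell P(U_\ell)$, so the supremum defining the lower-measure coincides with the limit. Consequently $\inf_m P(A_k\cap B_m)=P(A)$, which means $P(A_k\cap B_m)\geq P(A)$ for every $k$ and $m$.

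Finally I would pass to the diagonal. By Theorem~\ref{thm:intersectionunion} the set $A\cap B$ is represented by the fast monotone sequence $(A_k\cap B_k)_k$, so its lower-measure is $\lim_k P(A_k\cap B_k)\geq P(A)$ by the uniform bound just obtained; combined with the trivial $P(A\cap B)\leq P(A)$ coming from $A\cap B\subset A$, this gives $P(A\cap B)=P(A)$. Running the symmetric argument with $V=B_k$ gives $P(A\cap B)=P(B)$, whence $P(A\cap B)=P(A)=P(B)$, i.e.\ $A=_P B$; applying this for every open $U$ yields $X_1=X_2$. I expect the only delicate points to be the justification that the lower-measure of a monotone fast sequence is its limit (needed precisely to convert the infimum into a \emph{uniform} lower bound $P(A_k\cap B_m)\geq P(A)$) and the observation that Proposition~\ref{prop:subsetintersectionunion} survives the passage from open to lower-measurable sets.
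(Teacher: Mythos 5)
Your proof is correct and takes essentially the same route as the paper's: both reduce equality of the product distributions to the slice identity $P\bigl(V\cap X_1^{-1}(U)\bigr)=P\bigl(V\cap X_2^{-1}(U)\bigr)$ on open boxes, feed the open approximants of $X_1^{-1}(U)$ back in as the $\Omega$-side test set $V$, and conclude $P(A\cap B)=P(A)=P(B)$ by symmetry. The paper merely compresses your uniform-bound-plus-diagonal step into the single chain $P(A\cap B)+2^{-n}\geq P(A_n\cap B)=P(A_n\cap A)=P(A)$ and leaves the easy (``only if'') direction and the basis-determination of the distribution implicit, which are presentational rather than substantive differences.
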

\begin{proof}
We need to show that for every open $U$, $X_1^{-1}(U)=_\P X_2^{-1}(U)$, which holds if $\P(X_1^{-1}(U)\cap X_2^{-1}(U))=\P(X_1^{-1}(U))=\P(X_2^{-1}(U))$
Computing a $\P$-lower-Cauchy sequence representing $W_i=X_i^{-1}(U)$ yields open sets $W_{i,n}$ such that $\P(W_{i,n}\cap W_i)\geq \P(W_i)-2^{-n}$.
Then $\Pr(X_1\in U\wedge X_2\in U)+2^{-n}=\P(W_1\cap W_2)+2^{-n}\geq \P(W_{1,n}\cap W_2)=\Pr(I\in W_{1,n}\wedge X_2\in U)=\Pr(Y_2\in W_{1,n}\times U)=\Pr(Y_1\in W_{1,n}\times U)=\Pr(I\in W_{1,n}\wedge X_1\in U)=\P(W_{1,n}\cap W_1)=\P(W_1)=\Pr(X_1\in U)$.
Since $n$ is arbitrary, $\Pr(X_1\in U\wedge X_2\in U)\geq\Pr(X_1\in U)$.
The result follows by symmetry.
\end{proof}

\section{Random Variables in Metric Spaces}

In this section, we consider random variables in metric spaces.
We show that an equivalent notion to our general random variables is given by completion in the Fan metric.

\subsection{Constructions in metric spaces}

We first prove some generally-useful results on constructions of topological partitions in computable metric spaces.

The following decomposition result is essentially a special case of the effective Baire category theorem~\cite{YasugiMoriTsujii1999,Brattka2001}.
\begin{lemma}
\label{lem:topologicalpartition}
Let $X$ be an effectively separable computable metric space, and $\nu$ be a valuation on $X$.
Then given any $\epsilon>0$, we can compute a topological partition $\calB$ of $X$ such that $\diam(B)\leq\epsilon$ for all $B\in\calB$, and $\nu(X\setminus\bigcup\calB)=0$.
\end{lemma}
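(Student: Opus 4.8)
The plan is to construct the partition greedily by covering $X$ with a countable family of small balls and then disjointifying them, using effective separability to enumerate the centres and the valuation's continuity to control the leftover measure. Let $\{x_i\}_{i\in\N}=\rng(\xi)$ be the dense computable sequence guaranteed by effective separability. For each $i$ I would like to choose a radius $r_i\in(\epsilon/3,\epsilon/2)$ so that the ball $B_i=B_{r_i}(x_i)$ is $\nu$-regular, i.e.\ $\bar\nu(\partial B_i)=0$; this is possible because for fixed $x_i$ the boundary spheres $\partial B_r(x_i)$ are disjoint for distinct $r$, so at most countably many can have positive $\bar\nu$-measure, and I can compute an admissible $r_i$ by searching. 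The diameter of each $B_i$ is then at most $2r_i<\epsilon$, as required.

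Next I would disjointify: set $B_0'=B_0$ and $B_i'=B_i\setminus\overline{(B_0\cup\cdots\cup B_{i-1})}$, which are the pieces of the topological partition. Because each $B_i$ is $\nu$-regular, removing the \emph{closures} of the earlier balls rather than the balls themselves costs no measure on the boundaries, so the $B_i'$ are open, pairwise disjoint, and satisfy $\bigcup_i B_i'=_\nu\bigcup_i B_i$ up to the null set $\bigcup_i\partial B_i$. Since $\{x_i\}$ is dense and each $r_i>\epsilon/3$, the balls $B_i$ already cover $X$, so $\bigcup_i B_i=X$ and hence $\nu(X\setminus\bigcup_i B_i')=0$. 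Each $B_i'$ has diameter at most $2r_i\le\epsilon$, giving $\diam(B)\le\epsilon$ for every block.

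The final point is computability: I need the family $\calB=\{B_i'\}$ to be a \emph{computable} topological partition, meaning the blocks are uniformly computable open sets and $\nu(X\setminus\bigcup\calB)=0$ holds effectively. The $B_i'$ are finite Boolean combinations of computable balls, hence computable in $\tpOp(X)$; the delicate part is verifying $\nu(X\setminus\bigcup\calB)=0$, for which I would invoke Lemma~\ref{lem:decreasingopen} applied to the decreasing open sets $X\setminus\bigcup_{i\le n}B_i$, whose intersection is contained in the $\nu$-null boundary set $\bigcup_i\partial B_i$.

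The main obstacle I expect is the \emph{effective} selection of $\nu$-regular radii $r_i$ with a guaranteed lower bound on the radius. Since $\bar\nu$ is only upper-semicomputable (valued in $\tpRe^+_>$), I cannot directly test whether $\bar\nu(\partial B_r(x_i))=0$; instead I must exploit that the map $r\mapsto\nu(B_r(x_i))$ is monotone with at most countably many jumps and find a continuity point by a convergent search, or else relax the requirement and merely control $\sum_i\bar\nu(\partial B_i)$ so it contributes no mass in the limit. Managing this search while keeping the whole construction uniform in $\epsilon$ is the crux; once the regular radii are secured, the disjointification and the measure estimate are routine.
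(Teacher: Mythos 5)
Your architecture is the same as the paper's proof — dense computable centres, $\nu$-regular radii confined to a fixed subinterval, disjointification by subtracting the \emph{closures} of earlier balls, leftover contained in the null union of boundaries — so everything turns on the step you yourself flag as the crux, and there the proposal stops short of a proof. The claim ``at most countably many spheres have positive measure, so I can compute an admissible $r_i$ by searching'' does not work as stated: $\bar\nu(\partial B_r(x_i))=0$ is not a semi-decidable property (it is an intersection of infinitely many verifiable conditions), so no terminating or even convergent search can certify any single radius as regular, and your final paragraph concedes this without supplying the repair. The missing idea, which is essentially the whole content of the paper's proof, is an effective Baire category argument: for each $\delta>0$, the set $\{r\in(\epsilon/2,\epsilon)\mid \bar\nu(\clB(x,r)\setminus B(x,r))<\delta\}$ is (i) \emph{semi-decidable}, because $\bar\nu(\clB(x,r)\setminus B(x,r))=\nu(\tpX)-\nu\bigl(B(x,r)\cup(\tpX\setminus\clB(x,r))\bigr)$ is upper-semicomputable, so the strict inequality is verifiable; (ii) \emph{open} in $r$, by continuity of $\nu$ on the increasing open complements of shrinking closed annuli; and (iii) \emph{dense}, since a finite valuation can give mass at least $\delta$ to only finitely many disjoint spheres. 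Nesting these sets over $\delta=2^{-k}$, one computes rationals $q_k$ with $|q_k-q_{k+1}|<2^{-k-1}$ and $\bar\nu(\partial B(x,q_k))<2^{-k}$ such that the limit $r_\epsilon(x)=\lim_{k\to\infty}q_k$ lies in every one of them; regularity of the limit radius then holds \emph{by construction} and is never verified for any individual sphere — which is precisely how the non-semi-decidability is circumvented. Your parenthetical alternative, ``find a continuity point of $r\mapsto\nu(B_r(x_i))$ by a convergent search,'' is this same argument in embryo, but the uniformity and semi-decidability bookkeeping is the lemma, not a routine afterthought.

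Your fallback route would also not rescue the statement: merely controlling $\sum_i\bar\nu(\partial B_i)<\epsilon'$ leaves leftover measure up to $\epsilon'$, whereas the lemma requires $\nu(X\setminus\bigcup\calB)=0$ exactly (and its application in Lemma~\ref{lem:almostsureretraction} needs $\nu(U)=1$, not $\nu(U)>1-\epsilon'$), so that relaxation proves a strictly weaker result. Two smaller points: the sets $X\setminus\bigcup_{i\leq n}B_i$ to which you apply Lemma~\ref{lem:decreasingopen} are closed, not open — you should instead use $X\setminus\bigcup_{i\leq n}\clB_i$, which are open, decreasing, and have empty intersection because balls of radius bounded below about a dense sequence cover $X$; alternatively, with regular radii secured, the direct estimate $\nu\bigl(\bigcup_{i\leq n}B_i'\bigr)\geq\nu\bigl(\bigcup_{i\leq n}B_i\bigr)-\sum_{i\leq n}\bar\nu(\partial B_i)=\nu\bigl(\bigcup_{i\leq n}B_i\bigr)\to\nu(X)$ suffices. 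On the credit side, your choice of radii in $(\epsilon/3,\epsilon/2)$ actually delivers $\diam(B)\leq\epsilon$ cleanly, whereas the paper's interval $[\epsilon/2,\epsilon]$ only gives diameter at most $2\epsilon$ and implicitly needs the same rescaling.
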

\begin{proof}
For any $\epsilon>0$, any $\delta>0$, and any $x\in\tpX$, $\{ r >0 \mid \epsilon/2 < r < \epsilon \,\wedge\, \nu(\clB(x,r)\setminus B(x,r))<\delta\}$ is a computable open dense subset of $[\epsilon/2,\epsilon]$.
We can therefore construct a sequence of rationals $q_k\in(\epsilon/2,\epsilon)$ such that $|q_k-q_{k+1}|<2^{-k-1}$ and $\nu(\clB(x,q_k)\setminus B(x,q_k))<2^{-k}$.
Then taking $r_\epsilon(x)=\lim_{k\to\infty} q_k$ yields a radius $r_\epsilon(x) \in [\epsilon/2,\epsilon]$ such that $\nu(B(x,r_\epsilon(x))=\nu(\clB(x,r_\epsilon(x)))$.

Since $X$ is effectively separable, it has a computable dense sequence $(x_n)_{n\in\N}$.
For $\epsilon>0$, the sets $B(x_n,r_{\epsilon}(x_n))$ have radius at least $\epsilon/2$, so cover $X$.
We take as topological partition the sets $B(x_n,r_{\epsilon}(x_n)) \setminus \bigcup_{m=0}^{n-1} \clB(x_m,r_{\epsilon}(x_m))$ for $n\in\N$.
\end{proof}

\begin{lemma}
\label{lem:almostsureretraction}
Let $\tpX$ be an effectively separable computable metric space, and $\nu$ a valuation on $\tpX$.
Then for any $\epsilon>0$, we can construct an open set $U:\tpOp(\tpX)$ and a function $r:U\to X$ such that $\nu(U)=1$, $r$ has finite range, and $d(r(x),x)<\epsilon$ for all $x\in U$.
\end{lemma}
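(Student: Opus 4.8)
The plan is to read off $U$ and $r$ directly from the topological partition produced by Lemma~\ref{lem:topologicalpartition}. Applying that lemma with parameter $\epsilon$ gives a partition $\calB=\{B_n\}_{n\in\N}$ of $\tpX$ into cells $B_n = B(x_n,r_\epsilon(x_n))\setminus\bigcup_{m<n}\clB(x_m,r_\epsilon(x_m))$ with $\diam(B_n)\le\epsilon$ and, crucially, $\nu(\tpX\setminus\bigcup_n B_n)=0$. Because $\nu$ is a probability valuation and $\tpX\setminus\bigcup_n B_n$ is closed, this gives $\nu\bigl(\bigcup_n B_n\bigr)=\nu(\tpX)=1$. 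Each cell comes with a canonical representative, the centre $x_n$ of its defining ball, and every $x\in B_n$ satisfies $d(x,x_n)<r_\epsilon(x_n)\le\epsilon$; so the map sending $x\in B_n$ to $x_n$ already has the metric properties we want.

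First I would record the ``infinite'' version, which is immediate: setting $U^\ast=\bigcup_n B_n$ and $r^\ast(x)=x_n$ for $x\in B_n$ produces an open set of full $\nu$-measure together with a retraction satisfying $d(r^\ast(x),x)<\epsilon$, at the cost of a countable rather than finite range. The substance of the lemma is to compress this to a \emph{finite} range while preserving $\nu(U)=1$. To do this I would select finitely many cells $B_0,\dots,B_k$ that already carry the whole measure, take $U=\bigcup_{n\le k}B_n$, and define $r(x)=x_n$ on $B_n$ for $n\le k$; then $r$ has finite range $\{x_0,\dots,x_k\}$, $\nu(U)=1$, and the diameter bound gives $d(r(x),x)<\epsilon$ on $U$. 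Since the $\nu(B_n)$ are computable from below and exhaust $\nu(\tpX)=1$, the index $k$ and hence the pair $(U,r)$ can be produced effectively once such a finite exhaustion is known to exist.

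The main obstacle is precisely this last point: guaranteeing that \emph{finitely} many cells exhaust the measure, i.e. that $\nu\bigl(\bigcup_{n\le k}B_n\bigr)=1$ for some $k$. This is where the geometry enters, since a finite-range $r$ with $d(r(x),x)<\epsilon$ forces $U\subset\bigcup_{n\le k}B(x_n,\epsilon)$ to be totally bounded; the construction therefore rests on the valuation being concentrated, up to a $\nu$-null set, on a totally bounded part of $\tpX$. This is the operative situation here --- in particular for the compact base space $\Omega=\{0,1\}^\omega$ on which the random variables of this section are built, where finitely many of the radius-$\ge\epsilon/2$ balls cover $\tpX$ and all but finitely many cells are empty, so that the finite exhaustion and the chosen $k$ exist outright. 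I would therefore present the selection of $k$ via the effective total boundedness together with the full-measure property of the partition, and flag this compression step as the one carrying the real content of the lemma.
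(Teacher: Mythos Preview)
Your construction---apply Lemma~\ref{lem:topologicalpartition}, take $U=\bigcup_n B_n$, and send each $x\in B_n$ to a chosen representative point---is exactly the paper's proof; the paper picks $x_B\in B$ for every cell $B\in\calB$ and stops there. You have gone further than the paper by correctly observing that this only yields a \emph{countable}-range $r$, and that a genuine compression to finitely many cells would be needed for the ``finite range'' clause.

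There are two problems with the compression argument you sketch. First, you invoke compactness of the wrong space: the lemma concerns the value space $\tpX$ and the valuation $\nu$ on it, not the base space $\Omega=\{0,1\}^\omega$; compactness of $\Omega$ has no bearing on whether finitely many $\epsilon$-balls in $\tpX$ carry full $\nu$-measure. (In the application you have in mind, $\nu$ is the push-forward $\Pr[X]$ on $\tpX$, and it is $\tpX$ that would need to be totally bounded.) Second, under the hypotheses as stated the finite-range conclusion is simply false: take $\tpX=\R$ with $\nu$ a standard Gaussian. Any open $U$ with $\nu(U)=1$ is unbounded, so any $r:U\to\R$ with $|r(x)-x|<\epsilon$ has unbounded, hence infinite, range. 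Your instinct that total boundedness is the operative ingredient is correct, but it is an \emph{additional} hypothesis, not something delivered by effective separability. The honest reading is that the paper's proof establishes the lemma with ``countable range'' in place of ``finite range'' (or, after one further easy step, ``finite range together with $\nu(U)>1-\delta$'' for a given $\delta>0$); obtaining finite range and $\nu(U)=1$ simultaneously needs either compactness of $\tpX$ or an explicit tightness assumption on~$\nu$.
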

\begin{proof}
Let $\calB$ be the topological partition computed from Lemma~\ref{lem:topologicalpartition}.
For every $B\in\calB$, compute an element $x_B$ of $\tpX$ in $B$.
Define $r(x)$ by for $x\in\bigcup\calB$ by $r(x)=x_B$ for $x\in B$.
\end{proof}

\subsection{The Fan metric}

Let $\tpX$ be a computable metric space.
For a closed set $A$, define $d(x,A)=\inf\{d(x,y)\mid y\in A\}$ and $\clN_\epsilon(A):=\{x\in \tpX \mid d(x,A)\leq \varepsilon\}$
For an open set $U$ define $I_\varepsilon(U):=\tpX\setminus(\clN_\varepsilon(\tpX\setminus U))=\{x\in U \mid \exists \delta>0, B(x,\varepsilon+\delta)\subset U\}$.
Since $d(x,A)$ is computable in $R^{+,\infty}_<$ by our definition of a computable metric space, $\clN_\epsilon(A)$ is computable as a closed set, so $I_\varepsilon(U)$ is computable as an open set.
Note that $I_{\varepsilon_1+\varepsilon_2}(U) \subset I_{\varepsilon_1}(I_{\varepsilon_2}(U))$.

If $(\tpX,d)$ is a complete metric space, then the \emph{Fan metric} is a natural distance function on random variables:
\begin{definition}[Fan metric]
\begin{equation} \label{eq:randomvariablefanmetric}
 \begin{aligned} d_F(X,Y) &= \sup\!\big\{ \varepsilon\in\Q^+ \mid \ \P\big(\{\omega\in\Omega \mid d(X(\omega),Y(\omega))>\varepsilon\}\big) > \varepsilon\big\} \\
   &= \inf\!\big\{ \varepsilon \in \Q^+ \mid \ \P\big(\{\omega\in\Omega \mid d(X(\omega),Y(\omega))\geq\varepsilon\}\big) < \varepsilon\big\} . \end{aligned}
\end{equation}
\end{definition}
Given a computable metric $d:\tpX\times \tpX\fto\R^+$, the Fan metric on continuous random variables is easily seen to be computable:
The convergence relation defined by the Fan metric corresponds to \emph{convergence in probability}: A sequence of random variables $X_n$ taking values in a metric space converges in probability to a random variable $X_\infty$ if $\P(d(X_n,X_\infty)>\epsilon)\to0$ for all $\epsilon>0$.
If the metric $d$ on $\tpX$ is bounded, the distance \( \textstyle d(X,Y) := \int_\Omega d(X(\omega),Y(\omega)) \, dP(\omega) . \) is equivalent to the Fan metric.

Recall that for topological spaces, a sequence of random variables $(X_n)$ converges to $X_\infty$ if all open $U$, $\liminf_{n\to\infty}\P(X_n\in U\wedge X_\infty\in U)\geq \P(X_\infty\in U)$, which corresponds to $\P(X_n\in U\wedge X_\infty\in U)\to \P(X_\infty\in U)$ with convergence of probabilities being considered in $\R^+_<$.

\begin{proposition}
\label{prop:fanmetriccomputable}
The Fan metric is computable.
\end{proposition}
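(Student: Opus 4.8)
The plan is to reduce everything to the single real-valued random variable $Z:=d(X,Y)$ and then read off each of the two formulas in~\eqref{eq:randomvariablefanmetric} as a supremum or an infimum of a semidecidable family of rational thresholds, thereby producing $d_F(X,Y)$ simultaneously from below and from above.

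First I would construct $Z$. By Theorem~\ref{thm:productrandomvariable} the product $X\times Y$ is a computable random variable in $\tpRV(\tpX\times\tpX)$, and since the metric $d:\tpX\times\tpX\fto\R^+$ is continuous, Theorem~\ref{thm:imagerandomvariable} shows that its image $Z=d(X\times Y)$ is a computable random variable in $\R^+$. For each rational threshold $\varepsilon$, the interval $(\varepsilon,\infty)$ is open in $\R^+$, so $Z^{-1}((\varepsilon,\infty))$ is a $\P$-lower-measurable set whose measure $\Pr(Z>\varepsilon)=\P(Z^{-1}((\varepsilon,\infty)))$ is computable in $\tpIv_<$ (from below). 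Dually, $\{Z\geq\varepsilon\}$ is the complement of the open-preimage set $Z^{-1}([0,\varepsilon))$, so $\Pr(Z\geq\varepsilon)=1-\P(Z^{-1}([0,\varepsilon)))$ is computable in $\tpIv_>$ (from above).

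The key computational observation is then that each threshold test is semidecidable. For a fixed rational $\varepsilon$ the predicate $\Pr(Z>\varepsilon)>\varepsilon$ compares a lower real against a known rational, so an increasing rational approximation to $\Pr(Z>\varepsilon)$ confirms it in finite time exactly when it is true; this is an element of $\tpSi$. Enumerating all rationals $\varepsilon$ and retaining those for which the test fires, I take the supremum: since $\varepsilon\mapsto\Pr(Z>\varepsilon)$ is decreasing and the rationals are dense, this supremum equals the first line of~\eqref{eq:randomvariablefanmetric} and is produced as a lower real. Symmetrically, the predicate $\Pr(Z\geq\varepsilon)<\varepsilon$ compares an upper real against a rational, is likewise in $\tpSi$, and the infimum of the firing thresholds yields the second line of~\eqref{eq:randomvariablefanmetric} as an upper real.

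Finally I would invoke the equality of the two formulas: both locate the unique crossing of the decreasing survival function $\varepsilon\mapsto\Pr(Z>\varepsilon)$ with the diagonal, and they can disagree only across the at-most-countably-many atoms of $Z$, which do not move that crossing point. Hence the supremum and the infimum both coincide with $d_F(X,Y)$, giving it as both a lower and an upper real, i.e.\ as a computable real. The main obstacle is precisely this directionality: one must pair the strict inequality with the open preimage (lower measure) and the non-strict inequality with its closed complement (upper measure), and then check that the two sup/inf procedures pinch to the same crossing value rather than leaving a gap at a jump of the survival function.
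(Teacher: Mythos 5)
Your proposal is correct and takes essentially the same route as the paper's proof: form the product $X\times Y$ (Theorem~\ref{thm:productrandomvariable}), push forward under the continuous metric $d$ (Theorem~\ref{thm:imagerandomvariable}), compute $\Pr(d(X,Y)>\varepsilon)$ in $\R^+_<$ and $\Pr(d(X,Y)\geq\varepsilon)$ in $\R^+_>$ from the open preimages $\{e>\varepsilon\}$ and $\{e<\varepsilon\}$ respectively, and use the semidecidability of the rational threshold tests to extract the supremum as a lower real and the infimum as an upper real. Your closing check that the two expressions in~\eqref{eq:randomvariablefanmetric} pinch to the same crossing value is a detail the paper treats as implicit in the definition, and is a worthwhile addition (in fact a short monotonicity argument shows they always coincide exactly, not merely up to atoms).
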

\begin{proof}
The random variable $X\times Y$ is computable given $X,Y$.
Then $\P(\{\omega\mid d(X(\omega),Y(\omega))>\varepsilon)=\Pr\bigl(X\times Y\in d^{-1}(\{e\mid e>\varepsilon\})\bigr)$ is computable in $\R^+_<$, and $\P\bigl(\{\omega\mid d(X(\omega),Y(\omega))>\varepsilon\bigr)>\varepsilon$ is verifiable, so $d_F(X,Y)$ is computable in $\R^+_<$.
Similarly, $\P(\{\omega\mid d(X(\omega),Y(\omega))\geq\varepsilon)$ is computable in $\R^+_>$, so $d_F(X,Y)$ is computable in $\R^+_>$.
\end{proof}

\begin{theorem}
\label{thm:fanmetriclimit}
Suppose $(X_n)_{n\in\N}$ is a fast Cauchy sequence of random variables in the Fan metric. Then $X_\infty=\lim_{n\to\infty}X_n$ a random variable which is computable from the $X_n$.
\end{theorem}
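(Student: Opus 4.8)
The plan is to reduce the problem, one open set at a time, to the theory of $\P$-lower-measurable sets of Section~\ref{sec:measurablesets}, in close analogy with the proof of Theorem~\ref{thm:fastcauchyrandomvariable}. The essential new difficulty is that the Fan metric controls the \emph{values} $d(X_m(\omega),X_n(\omega))$ rather than membership in a fixed open set, and proximity of values does not preserve membership in $U$ for points near $\partial U$; consequently the naive sequence $X_n^{-1}(U)$ need not be lower-Cauchy. I would repair this by replacing $U$ with its inner approximations $I_\varepsilon(U)$, choosing the interior margin to shrink in step with the Fan-Cauchy rate.

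Concretely, fix an open $U\subset\tpX$ and set $V_n := X_n^{-1}(I_{2^{-n+1}}(U))$, a $\P$-lower-measurable set computable from $X_n$, since $I_{2^{-n+1}}(U)$ is computable as an open set and $X_n^{-1}$ sends open sets to lower-measurable sets. For $m>n$ the hypothesis $d_F(X_m,X_n)<2^{-n}$ gives, via~\eqref{eq:randomvariablefanmetric}, that $\P(\{\omega\mid d(X_m(\omega),X_n(\omega))>2^{-n}\})\le 2^{-n}$. Working with representatives, if $\omega\in V_n$ and $d(X_m(\omega),X_n(\omega))\le 2^{-n}$ then $B(X_m(\omega),2^{-n})\subset B(X_n(\omega),2^{-n+1})\subset U$ with margin, so $X_m(\omega)\in I_{2^{-n}}(U)\subset I_{2^{-m+1}}(U)$, i.e. $\omega\in V_m$. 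Hence $V_n\setminus V_m$ is contained in the bad set, whence $\P(V_m\cap V_n)\ge\P(V_n)-2^{-n}$, so $(V_n)$ is a fast lower-Cauchy sequence. By Theorem~\ref{thm:fastcauchylowermeasurableset}, after monotonising by Lemma~\ref{lem:fasteffectivelowermeasurablesequence} and Theorem~\ref{thm:countableunion}, it converges to a $\P$-lower-measurable set, which I define to be $X_\infty^{-1}(U)$; the assignment $U\mapsto X_\infty^{-1}(U)$ is then computable.

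It remains to verify that $X_\infty$ is a genuine random variable and is the Fan limit, and here I would reason classically (the framework allows non-computable arguments in correctness proofs). Since $\sum_n\P(d(X_{n+1},X_n)>2^{-n})<\infty$, Borel--Cantelli together with completeness of $\tpX$ yields a full-measure set on which $(X_n(\omega))$ is Cauchy, defining a measurable pointwise limit $X_\infty(\omega)$. A routine comparison, using that $\omega\in V_n$ forces $B(X_n(\omega),2^{-n+1})\subset U$ and that $d(X_n(\omega),X_\infty(\omega))\to0$ on the convergence set, identifies the constructed set with the point set, $X_\infty^{-1}(U)=_\P\{\omega\mid X_\infty(\omega)\in U\}$. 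The axioms $X_\infty^{-1}(\emptyset)=_\P\emptyset$, $X_\infty^{-1}(\tpX)=_\P\Omega$, and preservation of finite intersections follow from $I_\varepsilon(U_1\cap U_2)=I_\varepsilon(U_1)\cap I_\varepsilon(U_2)$ together with Theorem~\ref{thm:intersectionunion}; the union axiom, for which $I_\varepsilon$ does \emph{not} commute with $\cup$, follows at once from the point-set description. Finally $d(X_n(\omega),X_\infty(\omega))\to0$ almost surely gives $\P(d(X_n,X_\infty)>\varepsilon)\to0$, and the fast rate yields $d_F(X_n,X_\infty)\to0$ effectively, so $X_\infty$ is the limit.

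The main obstacle is precisely the boundary effect: because the Fan metric constrains values and not preimages, the matching of the interior margin $2^{-n+1}$ to the Fan rate $2^{-n}$ is the key device that makes the approximating sequence lower-Cauchy. I expect the union axiom to be the one identity that genuinely requires the classical pointwise limit, since it cannot be obtained by a direct manipulation of the inner-approximating open sets; all computable content lives in the construction of the sets $V_n$ and their limit.
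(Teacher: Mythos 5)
Your proposal is correct and follows essentially the same route as the paper's own proof: the paper likewise sets $W_n(U)=X_n^{-1}\bigl(I_{2^{-(n-1)}}(U)\bigr)$ (identical to your $V_n$, since $2^{-n+1}=2^{-(n-1)}$), uses the Fan bound $\Pr\bigl(d(X_m,X_n)>2^{-n}\bigr)\leq 2^{-n}$ together with the same margin argument $I_{2^{-(n-1)}}(U)\subset I_{2^{-n}}\bigl(I_{2^{-(m-1)}}(U)\bigr)$ to show $(W_n)$ is fast lower-Cauchy, and invokes Theorem~\ref{thm:fastcauchylowermeasurableset} to define $X_\infty^{-1}(U)=W_\infty(U)$. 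The only divergence is in the final verification, where the paper obtains the modularity axioms by passing the identities $\P(X_n^{-1}(U_1\cup U_2))+\P(X_n^{-1}(U_1\cap U_2))=\P(X_n^{-1}(U_1))+\P(X_n^{-1}(U_2))$ through the limit, whereas you handle the union axiom via a classical Borel--Cantelli pointwise limit --- a legitimate and arguably more explicit correctness argument, given that the paper permits non-computable reasoning in proofs of correctness.
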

The proof is based on the non-effective version of this result from~\cite{MannWald1943}.
\begin{proof}
Given open $U$, we need to compute $X_\infty^{-1}(U)$.
Let $U_{k+1}=I_{2^{-k}}(U)$.
Then for $m>n$, Now $x_n\in U_n$ and $d(x_n,x_m)<2^{-n}$, then $x_m\in U_{m}$ since $I_{2^{-(n-1)}}(U)\subset I_{2^{-n}}(I_{2^{-(m-1)}}(U))$, and since $\Pr(d(X_m,X_n)>2^{-n})\leq 2^{-n}$, we have $\Pr(X_m\in U_{m}\wedge X_n\in U_{n})\geq \Pr(X_n\in U_{n})-2^{-n}$.
Let $W_n(U)=X_n^{-1}(I_{2^{-(n-1)}}(U))$, so $\P(W_m\cap W_n)\geq \P(W_n)-2^{-n}$.
Then $W_n(U)$ is a fast lower-Cauchy sequence of $\P$-lower-measurable sets, so converges effectively to some $\P$-lower-measurable set $W_\infty(U)$.
We define $X_\infty^{-1}(U)=W_\infty(U)$, and note that $\P(W_n\cap W_\infty)\geq \P(W_n)-2^{-n}$ for all $n$.

It is straightforward to check $X_\infty^{-1}(U)=\lim_{n\to\infty}X_n^{-1}(U)$ in the class of $\P$-lower-measurable sets.
The modularity property of $X_\infty^{-1}$ follows from the equations $\P(X_n^{-1}(U_1\cup U_2))+\P(X_n^{-1}(U_1\cap U_2))=\P(X_n^{-1}(U_1))+\P(X_n^{-1}(U_2))$ by passing through the limit.
\end{proof}
A random variable can therefore be represented by a sequence $(X_0,X_1,X_2,\ldots)$ of random variables from some simpler class satisfying $d(X_m,X_n)  < 2^{-\min(m,n)}$, and two such sequences are equivalent (represent the same random variable) if $d(X_{1,n},X_{2,n}) \to 0$ as $n\to\infty$.

\subsection{Representation}

In this section, we prove two representation results on random variables in metric spaces.
We show that we can \emph{construct} a random variable with a given \emph{distribution}, and that given any random variable, we can construct a sequence of simple continuous random variables converging effectively to it.
These results depends on the base space $\Omega=\{0,1\}^\omega$ being totally disconnected.
Clearly, for a connected base space $\Omega$, such as the interval $[0,1]$, then any continuous random variable takes values in a single component of $\tpX$, and any simple continuous random variable is constant, but if $\tpX$ is contractible, then the continuous random variables may still be dense.

The following result shows that random variables can be represented by a sequence of simple random variables.
It is a variant of~\cite[Theorem~14]{SchroederSimpson2006JCMPLX} and~\cite[Theorem~1.1.1]{HoyrupRojas2009INFCOMP}, which shows that any distribution is effectively measurably isomorphic to a distribution on $\{0,1\}^{\omega}$, and the proof is similar.
\begin{theorem}
\label{thm:randomvariabledistribution}
Let $\tpX$ be a computable metric space, and $\nu$ be a valuation on $\tpX$.
Then we can construct a random variable $X$ on base space $\Omega=\{0,1\}^\omega$ such that for any open $U$, $\Pr(X\in U)=\nu(U)$.
Further, $X$ can be constructed as the effective limit of a fast Cauchy sequence of simple continuous random variables.
\end{theorem}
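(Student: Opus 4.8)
The plan is to realise $\nu$ by pushing the uniform measure $P$ on $\Omega=\{0,1\}^\omega$ forward along an increasingly fine sequence of simple continuous maps, exploiting that clopen (cylinder) subsets of $\Omega$ have dyadic measure. Throughout I would take $\nu$ to be a probability valuation, so that $\nu(\tpX)=1$. First, for each $n$, I would apply Lemma~\ref{lem:topologicalpartition} and pass to common refinements to obtain a \emph{nested} sequence of $\nu$-regular topological partitions $\calB_n=\{B_{n,1},\dots,B_{n,k_n}\}$ with $\diam(B_{n,i})\le 2^{-n}$ and $\nu(\tpX\setminus\bigcup\calB_n)=0$; nestedness preserves $\nu$-regularity because the boundary of a finite intersection is contained in the union of the boundaries. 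Regularity guarantees that each $\nu(B_{n,i})$ is a computable two-sided real, which is exactly what lets me round it to a dyadic rational, and I would also compute a representative point $x_{n,i}\in B_{n,i}$.

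Next I would build a compatible nested allocation of clopen sets on the base space. Working in the binary tree of $\Omega$, I would assign to each piece $B_{n,i}$ a finite union of cylinders $S_{n,i}$ so that the $S_{n,i}$ are pairwise disjoint, so that $S_{n+1,j}\subset S_{n,i}$ whenever $B_{n+1,j}\subset B_{n,i}$, and so that $P(S_{n,i})$ approximates $\nu(B_{n,i})$: concretely, within each parent cylinder I allocate consecutive sub-cylinders to the children of $B_{n,i}$ on a grid fine enough that $\sum_i|P(S_{n,i})-\nu(B_{n,i})|\le 2^{-n}$. Defining $X_n(\omega)=x_{n,i}$ for $\omega\in S_{n,i}$ (and any fixed value on the summably-small clopen remainder) gives a simple continuous random variable. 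Since any $\omega$ lying in a nested chain $S_{n,i}\supset S_{n+1,j}\supset\cdots$ has $X_n(\omega),X_{n+1}(\omega)\in B_{n,i}$ of diameter $\le 2^{-n}$, while the set of $\omega$ leaving the chain has summable measure, the sequence $(X_n)$ is fast Cauchy in the Fan metric $d_F$ after adjusting constants, and Theorem~\ref{thm:fanmetriclimit} yields a computable limit random variable $X=\lim_n X_n$.

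It then remains to verify $\Pr(X\in U)=\nu(U)$ for every open $U$. Using the description of the limit from the proof of Theorem~\ref{thm:fanmetriclimit}, namely $X^{-1}(U)=\lim_n X_n^{-1}(I_{2^{-(n-1)}}(U))$, I would compute $\Pr(X\in U)=\lim_n \sum_{i:\,x_{n,i}\in I_{2^{-(n-1)}}(U)} P(S_{n,i})$. For the upper bound, if $x_{n,i}\in I_{2^{-(n-1)}}(U)$ then $\diam(B_{n,i})\le 2^{-n}$ forces $B_{n,i}\subset B(x_{n,i},2^{-(n-1)})\subset U$, so these pieces are disjoint subsets of $U$ and the sum is at most $\nu(U)$ up to the vanishing rounding error. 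For the lower bound, any piece meeting $I_{3\cdot 2^{-n}}(U)$ has its representative in $I_{2^{-(n-1)}}(U)$ (since $3\cdot 2^{-n}-2^{-n}=2^{-(n-1)}$), so the sum is at least $\nu(I_{3\cdot 2^{-n}}(U))$ up to the same error; as $I_{\varepsilon}(U)\uparrow U$ when $\varepsilon\downarrow 0$, the continuity of $\nu$ gives $\nu(I_{3\cdot 2^{-n}}(U))\to\nu(U)$. Sandwiching yields $\Pr(X\in U)=\nu(U)$.

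I expect the main obstacle to be this final distribution identity: it is where the geometry of the shrinking partitions, the inner erosions $I_\varepsilon(U)$, and the monotone continuity of $\nu$ must be made to cooperate, and where the dyadic rounding errors must be shown to vanish in the limit. The nested dyadic allocation in the second step is the other delicate point, since it has to respect the refinement of the partitions while keeping the error summable, but it is essentially bookkeeping on the binary tree, whereas the distribution identity is the genuinely probabilistic part of the argument.
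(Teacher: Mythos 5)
Your proposal is correct and follows essentially the same route as the paper's proof: nested $2^{-n}$-fine topological partitions with $\nu$-null boundaries from Lemma~\ref{lem:topologicalpartition}, a refinement-respecting dyadic allocation of cylinder sets in $\{0,1\}^\omega$, simple continuous random variables mapping cylinders to representative points, the limit via Theorem~\ref{thm:fanmetriclimit}, and the distribution identity $\Pr(X\in U)=\nu(U)$ obtained from the erosions $I_\varepsilon(U)$ together with monotone continuity of $\nu$ (your sandwich is a more explicit version of the paper's one-line estimate $\Pr(X_n\in U)>\nu(I_{2^{1-n}}(U))-2^{-n}\nearrow\nu(U)$). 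The only divergence is bookkeeping: the paper strictly \emph{under}-allocates mass, taking $P(W_{n,m})<\nu(B_{n,m})$ with $\sum_m P(W_{n,m})>1-2^{-n}$, which needs only lower semicomputability of $\nu$ on open sets and automatically leaves room to fit the children's cylinders inside each parent, whereas your two-sided rounding leans on the $\nu$-regularity of the pieces (legitimate here, since the pieces are built from regular balls and $\nu$ is effectively finite) and must also account for the partitions being \emph{countable} rather than finite, so your error bound $\sum_i|P(S_{n,i})-\nu(B_{n,i})|\le 2^{-n}$ is only achievable after truncating to finitely many pieces carrying mass $>1-2^{-n}$ --- which is exactly the remainder condition the paper builds in.
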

\begin{proof}
For each $n$, use Lemma~\ref{lem:topologicalpartition} to construct a countable topological partition $\calB_n$ such that each $B\in B_n$ has diameter at most $2^{-n}$, and $\sum_{B\in\calB_n}\nu(\partial B)=0$.
By taking intersections if necessary, we can assume that each $\calB_{n+1}$ is a refinement of $\calB_n$.

We now construct random variables $X_n$ as follows.
Suppose we have constructed cylinder sets $W_{n,m}\subset\{0,1\}^\infty$ such that $P(W_{n,m})<\nu(B_{n,m})$ and $\sum_{m} P(W_{n,m})>1-2^{-n}$.
Since $B_{n,m}$ is a union of open sets \linebreak[4] $\{B_{n+1,m,1},\ldots,B_{n+1,m,k}\}\subset\calB_{n+1}$, we can effectively compute dyadic numbers $p_{n+1,m}$ such that $p_{n+1,m,k}<\mu(B_{n+1,m,k})$ and $\sum_k p_{n+1,m,k}\geq P(W_{n,m})$.
We then partition $W_{n,m}$ into cylinder sets $W_{n+1,m,k}$ each of measure $p_{n,m,k}$.
For each $n,m,k$ we construct a point $x_{n+1,m,k}\in B_{n+1,m,k}$, and take $X_{n+1}$ to map $W_{n+1,m,k}$ to a point $x_{n+1,m,k}\in B_{n+1,m,k}$.
It is clear that $X_{n}$ is a strongly-convergent Cauchy sequence, so is a representation of a measurable random variable $X_\infty$.

It remains to show that $\Pr(X_\infty \in U) = \nu(U)$ for all $U\in\opset(X)$.
This follows since for given $n$ we have $\Pr(X_n \in U) > \nu(I_{2^{1-n}}(U))-2^{-n} \nearrow \nu(U)$ as $n\to\infty$.
\end{proof}



\begin{theorem}
\label{thm:fastcauchysimplepiecewisecontinuous}
Let $\tpX$ be a computable metric space and $X:\tpRV(\tpX)$ a random variable.
Then one can construct a fast Cauchy sequence of simple continuous random variables $X_n$ such that $\lim_{n\to\infty}X_n=X_\infty$.
\end{theorem}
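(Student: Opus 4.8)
The plan is to approximate $X$ in the Fan metric by locally-constant maps of finite range and then extract a fast Cauchy subsequence. I would carry this out in two stages: first discretise the \emph{target} $\tpX$ to produce a simple \emph{measurable} approximant whose values lie within $2^{-n}$ of those of $X$ almost surely; then discretise the \emph{base} space $\Omega=\{0,1\}^\omega$, replacing the measurable level sets by clopen cylinder sets, so as to turn the approximant into a genuinely continuous (indeed locally-constant) map. Total disconnectedness of $\Omega$ is exactly what makes the second stage possible, and is where the hypothesis on the base space enters; the limit will then be computable by Theorem~\ref{thm:fanmetriclimit}.

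For the first stage, fix $n$ and apply Lemma~\ref{lem:topologicalpartition} to the distribution $\nu=\Pr(X\in\cdot)$ with $\epsilon=2^{-(n+2)}$, obtaining a countable topological partition $\calB_n=\{B_{n,i}\}_i$ with $\diam(B_{n,i})\le 2^{-(n+2)}$, each cell having $\nu$-null boundary, and $\nu\bigl(\tpX\setminus\bigcup_i B_{n,i}\bigr)=0$. Because the boundaries are null, $W_{n,i}:=X^{-1}(\mathrm{int}\,B_{n,i})$ is a $P$-lower-measurable set equal up to measure zero to the preimage of the whole cell, and the $W_{n,i}$ are essentially disjoint with $\sum_i P(W_{n,i})=1$. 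Truncating to finitely many cells $i\le k_n$ of total measure $>1-2^{-(n+2)}$ and choosing representatives $x_{n,i}\in B_{n,i}$, the map sending $W_{n,i}\mapsto x_{n,i}$ (residue to a default point $x_{n,0}$) is a simple measurable random variable $Y_n$; this is precisely $r_n\circ X$ for the finite-range near-identity $r_n$ of Lemma~\ref{lem:almostsureretraction}, and $d(Y_n,X)\le 2^{-(n+2)}$ off a set of measure $<2^{-(n+2)}$.

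The second, and main, step is to make $Y_n$ continuous. Each $W_{n,i}$ is presented by a fast monotone lower-Cauchy sequence of open subsets of $\{0,1\}^\omega$, and every open subset of $\{0,1\}^\omega$ is a computable union of basic cylinders; hence for a sufficiently advanced term we may extract a finite union of cylinders $V_{n,i}$ with $P(V_{n,i})$ within $2^{-n}/(k_n+1)$ of $P(W_{n,i})$ and with $P(V_{n,i}\setminus W_{n,i})$ correspondingly small. Since distinct $W_{n,i}$ are essentially disjoint, the total measure of the pairwise intersections of the $V_{n,i}$ can be made $<2^{-n}$, so we may disjointify them by clopen operations to obtain a genuine finite clopen family with $\sum_i P(V_{n,i})>1-2^{-n}$. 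Setting $X_n$ equal to the constant $x_{n,i}$ on $V_{n,i}$ and to $x_{n,0}$ on the clopen complement makes $X_n$ locally constant with finite range, i.e.\ a simple continuous random variable; off a set of measure $<2^{-n}$ it coincides with $Y_n$ and so lies within $2^{-(n+2)}$ of $X$, whence $d_F(X,X_n)\le 2^{-(n-1)}$.

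Finally, the Fan metric satisfies the triangle inequality, so $d_F(X_m,X_n)\le d_F(X_m,X)+d_F(X,X_n)$ is controlled, and after reindexing by a computable subsequence (chosen so that the accumulated $2\cdot 2^{-(n-1)}$ bound drops below $2^{-n}$) we obtain a fast Cauchy sequence of simple continuous random variables; its limit exists and is computable by Theorem~\ref{thm:fanmetriclimit}, and since $d_F(X,X_n)\to 0$ that limit is $X$ itself. The whole procedure is effective in a name of $X$. I expect the main obstacle to be the second stage: carrying out the computable clopen approximation of the lower-measurable level sets while preserving disjointness and the measure budget — essentially the content of the clopen-approximation lemma omitted from the text — and verifying both that the resulting maps are continuous and that they assemble into a genuinely fast Cauchy sequence.
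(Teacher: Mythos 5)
Your proof is correct, but it is not the route the paper takes: it is the ``direct'' construction that the paper explicitly declines (``Although one could prove this directly, there is a simple proof based on Theorem~\ref{thm:randomvariabledistribution} and Proposition~\ref{prop:equalproductdistribution}''). The paper's argument is a two-liner: form the joint random variable $Y=I\times X$ with the identity $I$ on $\Omega$ (computable by Theorem~\ref{thm:productrandomvariable}), note that its distribution is a computable valuation on $\Omega\times\tpX$, apply the realisation theorem (Theorem~\ref{thm:randomvariabledistribution}) to obtain $Y_\infty=\lim_{n\to\infty} I_n\times X_n$ as a fast Cauchy sequence of simple continuous random variables with $\Pr[Y_\infty]=\Pr[Y]$, and then invoke Proposition~\ref{prop:equalproductdistribution} --- equality of the distributions of the products with the identity forces $Y_\infty=Y$ --- so the second components $X_n$ converge to $X$ itself, not merely to something with the same law. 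All the cylinder-set machinery you build by hand is hidden inside the proof of Theorem~\ref{thm:randomvariabledistribution}; the identity-component trick is what upgrades ``same distribution'' to ``same random variable''. Your version, by contrast, works at the level of preimages: Lemma~\ref{lem:topologicalpartition}/Lemma~\ref{lem:almostsureretraction} to discretise the target, then clopen approximation of the lower-measurable level sets in $\{0,1\}^\omega$, disjointification, and Theorem~\ref{thm:fanmetriclimit} for the limit. This is longer but more self-contained: it makes the role of total disconnectedness of $\Omega$ explicit, it supplies inline the clopen-approximation lemma that the paper leaves unstated, and it sidesteps a subtlety the paper's proof glosses over, namely that the realisation construction must be run compatibly with the product structure for $Y_n$ to actually decompose as $I_n\times X_n$. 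One small point in your budget: your tolerance $2^{-n}/(k_n+1)$ per cell does suffice for the disjointification, but only because the sets $W_{n,i}$ are essentially disjoint, so that for fixed $j$ one has $\sum_{i\neq j}P(V_{n,j}\cap W_{n,i})\leq P(V_{n,j}\setminus W_{n,j})$, avoiding a quadratic blow-up in $k_n$; it is worth making this estimate explicit rather than asserting the pairwise intersections ``can be made $<2^{-n}$''.
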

Although one could prove this directly, there is a simple proof based on Theorem~\ref{thm:randomvariabledistribution} and Proposition~\ref{prop:equalproductdistribution}:
\begin{proof}
Let $I$ be the identity random variable on $\Omega$, and let $Y=I\times X:\Omega\mfto\Omega\times\tpX$.
Let $Y_n=I_n\times X_n$ be a sequence of simple random variables with limit $Y_\infty=I_\infty\times X_\infty$ such that $\Pr[Y_\infty]=\Pr[Y]$.
Then $Y_\infty=Y$, so $X_\infty=\lim_{n\to\infty}X_n$ as required.
\end{proof}

\begin{corollary}
If $\tpX$ is a computable metric space, then the representation of random variable $X$ by its preimage $X^{-1}:\tpOp(\tpX)\to \tpMe_<(\Omega,\P)$ is equivalent to the representation by fast Cauchy sequences of simple (piecewise-)continuous random variables.
\end{corollary}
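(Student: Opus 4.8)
The plan is to establish the equivalence of two representations by exhibiting effective translations in both directions, which is exactly what the preceding results provide. Recall that a representation equivalence amounts to showing that each representation can be computed from the other. The statement asserts equivalence between the preimage representation $X^{-1}:\tpOp(\tpX)\to\tpMe_<(\Omega,\P)$ and the Fan-Cauchy representation by fast Cauchy sequences of simple (piecewise-)continuous random variables.

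First I would handle the direction from Fan-Cauchy sequences to preimages. Given a fast Cauchy sequence $(X_n)$ of simple continuous random variables, Theorem~\ref{thm:fanmetriclimit} already shows that $X_\infty=\lim_{n\to\infty}X_n$ exists and that its preimage $X_\infty^{-1}$ is computable from the sequence $(X_n)$: explicitly, for each open $U$ one forms $W_n(U)=X_n^{-1}(I_{2^{-(n-1)}}(U))$, which is a fast lower-Cauchy sequence of $\P$-lower-measurable sets converging to $X_\infty^{-1}(U)$. Thus this direction is essentially a direct appeal to Theorem~\ref{thm:fanmetriclimit}, and the computation of $X_n^{-1}$ for each simple continuous $X_n$ is immediate from Observation~\ref{obs:piecewisecontinuous}.

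The converse direction, from the preimage representation to a Fan-Cauchy sequence of simple continuous random variables, is the substantive content, and this is supplied by Theorem~\ref{thm:fastcauchysimplepiecewisecontinuous}: given a random variable $X$ presented via its preimage $X^{-1}$, that theorem constructs a fast Cauchy sequence of simple continuous random variables $X_n$ with $\lim_{n\to\infty}X_n=X_\infty$. I would remark that the construction there is effective in the given data, since it runs through $Y=I\times X$ (computable by Theorem~\ref{thm:productrandomvariable}), applies the realisation result Theorem~\ref{thm:randomvariabledistribution} to obtain an approximating sequence with the correct distribution, and then invokes Proposition~\ref{prop:equalproductdistribution} to identify the limit with $X$.

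Having both translations, the equivalence of representations follows from the definition: two representations are equivalent if they induce the same computable functions, which holds precisely when each is computable from the other. The main obstacle I anticipate is purely bookkeeping: one must confirm that the constructions in Theorems~\ref{thm:fanmetriclimit}, \ref{thm:randomvariabledistribution} and \ref{thm:fastcauchysimplepiecewisecontinuous} are \emph{uniformly} effective in the input name, so that they assemble into genuine machine-computable translations between names rather than merely existence statements. Since each of those theorems is stated in the form ``one can construct \ldots'' (i.e.\ effectively), this uniformity is already in hand, and the corollary is then a routine consequence.
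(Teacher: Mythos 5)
Your proposal is correct and matches the paper's own proof essentially verbatim: both directions rest on exactly the same results, namely Theorem~\ref{thm:fastcauchysimplepiecewisecontinuous} for passing from the preimage representation to a fast Cauchy sequence, and Observation~\ref{obs:piecewisecontinuous} together with Theorem~\ref{thm:fanmetriclimit} for the converse. Your added remarks on uniformity of the constructions are sound but go beyond what the paper itself records.
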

\begin{proof}
Given a representation of $X$, we can compute a fast Cauchy sequence of (piecewise-)continuous random variables by Theorem~\ref{thm:fastcauchysimplepiecewisecontinuous}. Conversely, given a fast Cauchy sequence of piecewise-continuous random variables $X_n$, we can compute $X_n$ as a measurable random variable by Observation~\ref{obs:piecewisecontinuous}, and the limit by Theorem~\ref{thm:fanmetriclimit}.
\end{proof}

\subsection{Expectation}

For bounded random variables taking values in the reals, the expectation is defined in the usual way:
\begin{definition}[Expectation]
If $X:\Omega\mfto\tpRe$ is an effectively bounded real-valued random variable, the \emph{expectation} of $X$ is given by the integral
\[ \textstyle \Ex(X) = \int_{\tpRe} x \, d\Pr[X] , \]
where $\Pr[X]$ is the valuation on $\tpRe$ induced by $X$ i.e. $\Pr[X](U)=\Pr(X\in U)=\P(X^{-1}(U))$, and the integral is given by Definition~\ref{defn:boundedintegral}.
\end{definition}

The expectation of possibly unbounded real-valued random variables is not continuous in the weak topology; for example, we can define continuous random variables $X_n$ taking value $2^n$ on a subset of $\Omega$ of measure $2^{-n}$, so that $X_n\to0$ but $\Ex(X_n)=1$ for all $n$.
For this reason, we need a new type of \emph{integrable random variables}.
\begin{definition}[Integrable random variable]
Let $(\tpX,d)$ be metric space with distinguished element $z$ (e.g. the zero of a normed space), and let $Z$ be the constant random variable $Z(\omega)=z$.
Let $d_1$ be the (possibly infinite-valued) distance function
\begin{equation} \label{eq:randomvariableuniformmetric} d_1(X,Y) = \int_\Omega d(X(\omega),Y(\omega)) dP(\omega) . \end{equation}
The type of \emph{integrable random variables} $L^1(\tpX)$ is the completion of the set of all effectively bounded random variables $X$ such that $d_1(X,Z)<\infty$.
Then $d_1$ is a metric on $L^1(\tpX)$.
\end{definition}
\noindent
If $d$ is a bounded metric, then this metric is equivalent to the Fan metric.

For continuous and integrable real-valued random variables, then the expectation is also given by an integral over the base space $\Omega$.
\begin{proposition}[Expectation]\mbox{}
\begin{enumerate}[(i)]
\item If $X:\Omega \fto \tpRe$ is a continuous real-valued random variable, then the expectation of $X$ is given by the integral
\[ \textstyle \Ex(X) = \int_{\Omega} X(\omega) \, d\P(\omega) , \]
which always exists since $X$ has compact values.
\item If $X:\Omega \mfto \tpRe$ is an integrable real-valued random variable, and $X$ is presented as $\lim_{n\to\infty}X_n$ for some sequence of continuous random variables satisfying $\Ex[|X_{n_1}-X_{n_2}|]\leq 2^{-\min(n_1,n_2)}$, then $\bigl(\Ex(X_{n})\bigr)_{n\in\N}$ is an effective Cauchy sequence, and
\[ \textstyle \Ex(X) = \lim_{n\to\infty} \Ex(X_n). \]
\end{enumerate}
\end{proposition}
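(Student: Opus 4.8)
The plan is to handle the two parts separately: part~(i) reduces to the change-of-variables (pushforward) formula, and part~(ii) then follows from the Lipschitz continuity of expectation together with the completeness of $L^1$.

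For part~(i), first note that since $\Omega$ is compact and $X$ is continuous, $X$ has compact, hence bounded, range, so $X$ is effectively bounded, say $a<X<b$; this is what guarantees that the integral defining $\Ex(X)$ exists. The key step is the pushforward identity $\int_\tpRe \psi\,d\Pr[X]=\int_\Omega \psi\circ X\,d\P$ for positive lower-semicontinuous $\psi$, which is immediate from the Choquet representation~\eqref{eq:lowerrealintegral}, the defining relation $\Pr[X](U)=\P(X^{-1}(U))$, and the set identity $(\psi\circ X)^{-1}(t,\infty]=X^{-1}(\psi^{-1}(t,\infty])$: indeed $\int_\tpRe\psi\,d\Pr[X]=\int_0^\infty \Pr[X](\psi^{-1}(t,\infty])\,dt=\int_0^\infty \P((\psi\circ X)^{-1}(t,\infty])\,dt=\int_\Omega \psi\circ X\,d\P$. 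Applying this with $\psi(y)=(y-a)^+$, which agrees with $y-a$ throughout the range of $X$, and then accounting for the constant $a$ via the shift of Definition~\ref{defn:boundedintegral} (noting $\Pr[X](\tpRe)=1$), gives $\Ex(X)=\int_\tpRe x\,d\Pr[X]=\int_\Omega X(\omega)\,d\P(\omega)$.

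For part~(ii), the main computation is the effective-Cauchy estimate. By part~(i) and linearity of the integral over $\Omega$, $\Ex(X_m)-\Ex(X_n)=\int_\Omega(X_m-X_n)\,d\P$, whence $|\Ex(X_m)-\Ex(X_n)|\leq\int_\Omega|X_m-X_n|\,d\P=\Ex(|X_m-X_n|)\leq 2^{-\min(m,n)}$; here the inequality is $|\int f|\leq\int|f|$, and $|X_m-X_n|$ is itself a continuous random variable (the image of $X_m\times X_n$ under $(x,y)\mapsto|x-y|$), so its expectation is supplied by part~(i). Thus $(\Ex(X_n))_{n\in\N}$ is an effective Cauchy sequence of reals and its limit is computable.

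It remains to identify this limit with $\Ex(X)$. Since $L^1(\tpRe)$ is the completion of the effectively-bounded random variables under the metric $d_1(X,Y)=\Ex(|X-Y|)$ of~\eqref{eq:randomvariableuniformmetric}, and the same estimate shows $\Ex$ is $1$-Lipschitz for $d_1$, expectation extends uniquely and continuously to $L^1$, with $\Ex(X)=\lim_{n\to\infty}\Ex(X_n)$ for any representing fast Cauchy sequence. The only point requiring care is well-definedness: if $(X_n)$ and $(X_n')$ both converge to $X$, then $|\Ex(X_n)-\Ex(X_n')|\leq\Ex(|X_n-X_n'|)\to0$, so the two limits agree. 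I expect the main obstacle to be part~(i): establishing the pushforward identity cleanly and correctly handling the shift when $X$ takes negative values. Once that is in place, part~(ii) is a routine continuous-extension argument.
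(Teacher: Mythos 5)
Your proof is correct and follows essentially the same route as the paper: part~(i) is the paper's one-line observation that both integrals reduce to the same Choquet layer-cake expression $\int_0^\infty \P(X^{-1}(t,\infty])\,dt$ (the paper reduces to $X\geq 0$ and equates the Choquet sums, while you make the shift by the lower bound $a$ explicit via $\psi(y)=(y-a)^+$ and Definition~\ref{defn:boundedintegral}), and part~(ii) is exactly the paper's terse ``the expectation is continuous,'' which you correctly expand into the $1$-Lipschitz estimate $|\Ex(X_m)-\Ex(X_n)|\leq d_1(X_m,X_n)$ plus the well-definedness check. Your version supplies details the paper omits, but introduces no new ideas and no gaps.
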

\begin{proof}\mbox{}
\begin{enumerate}[(i)]
\item It suffices to consider the case $X\geq0$.
Then both $\int_\R x d\Pr[X]$ and $\int_\Omega X(\omega)d\P(\omega)$ yield Choquet integral sums of the form $\sum_{m=1}^{n}(p_m-p_{m-1})\P(X^{-1}(p_m,\infty))$.
\item The expectation is continuous. \qedhere
\end{enumerate}
\end{proof}

We can effectivise Lesbegue spaces $\mathcal{L}^p$ of integrable random variables through the use of effective Cauchy sequences in the natural way:
If $(\tpX,|\cdot|)$ is a normed space, then the type of $p$-integrable random variables with values in $\tpX$ is the effective completion of the type of $p$-integrable continuous random variables under the metric $d_{p}(X,Y)=||X-Y||_p$ induced by the norm
\begin{equation} \label{eq:randomvariableuniformnorm} \textstyle ||X||_{p} = \left( \strut \int_{\Omega} | X(\omega)|^p\,dP(\omega) \right)^{1/p} = \bigl(\Ex(|X|^p)\bigr)^{1/p}. \end{equation}
We can easily prove the Cauchy-Schwarz and triangle inequalities for measurable random variables
\( ||XY||_{pq/(p+q)} \leq ||X||_{p}  \cdot ||Y||_{q} \quad \text{and} \quad ||X+Y||_{p} \leq ||X||_{p}+||Y||_{p} \,. \)

%

%


The following result relates the expectation of a random variable to an integration over its valuation.
An analogous result in a different setting~\cite[Theorem~15]{SchroederSimpson2006JCMPLX}.
\begin{theorem}[Expectation]
\label{thm:expectation}
Let $X$ be a positive real-valued random variable such that $\Ex(X)<\infty$.
Then
\begin{equation*} \textstyle \Ex(X) = \int_{0}^{\infty} \Pr(X>x) dx = \int_{0}^{\infty} \Pr(X \geq  x) dx . \end{equation*}
\end{theorem}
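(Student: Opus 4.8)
The plan is to reduce both equalities to the Choquet integral formula of Definition~\ref{defn:lowerintegral}, and then to dispose of the difference between the strict and non-strict tails by a countable-atoms argument.

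First I would establish the left-hand equality directly from the definitions. Since $X$ is positive, its induced valuation $\Pr[X]$ is supported on $[0,\infty)$, and the identity map $\iota:x\mapsto x$ is a positive lower-semicontinuous (indeed continuous) function $\tpRe\fto\tpHl$. Applying the real-integral form~\eqref{eq:lowerrealintegral} of the lower integral with $\psi=\iota$ and $\nu=\Pr[X]$ gives $\Ex(X)=\int_\tpRe \iota\,d\Pr[X]=\int_0^\infty \Pr[X]\bigl(\iota^{-1}(x,\infty]\bigr)\,dx=\int_0^\infty \Pr(X>x)\,dx$, since $\iota^{-1}(x,\infty]=(x,\infty]$ and $\Pr[X]((x,\infty])=\Pr(X>x)$. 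The hypothesis $\Ex(X)<\infty$ guarantees this integral is finite.

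For the right-hand equality, the integrands satisfy $\Pr(X\geq x)-\Pr(X>x)=\Pr(X=x)\geq0$ for every $x$, so monotonicity of the integral gives $\int_0^\infty \Pr(X\geq x)\,dx\geq\int_0^\infty \Pr(X>x)\,dx$ at once. The reverse inequality follows once I show the difference $\Pr(X=x)$ vanishes for Lebesgue-almost-every $x$. The key observation is that the set of atoms $A=\{x\mid \Pr(X=x)>0\}$ is countable: for any finite collection of distinct points $x_1,\ldots,x_n$ the singletons are disjoint closed sets, so by a short modularity computation the upper valuation is additive on them, giving $\sum_{i=1}^n \Pr(X=x_i)\leq \Pr[X](\tpRe)=1$; hence $\{x\mid \Pr(X=x)\geq 1/k\}$ has at most $k$ elements, and $A=\bigcup_k\{x\mid \Pr(X=x)\geq 1/k\}$ is a countable union of finite sets. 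A countable set is Lebesgue-null, so the two integrands agree off a null set and the integrals coincide.

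The main obstacle is interpretive rather than computational: the singleton $\{x\}$ is closed rather than open, so $\Pr(X=x)$ must be read as the upper valuation $\overline{\Pr[X]}(\{x\})=\Pr(X\geq x)-\Pr(X>x)$, and I would verify that the additivity needed for the atom-counting bound holds for this upper valuation on finitely many disjoint singletons (which is exactly where modularity of $\nu$ is used, since the complements of disjoint closed sets have union all of $\tpRe$). I would also note that this argument is classical rather than constructive — the atom set $A$ need not be computably enumerable — which is consistent with the paper's convention that such identities hold from axiomatic considerations even when not effectively verifiable.
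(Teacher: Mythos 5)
Your proof is correct in substance but takes a genuinely different route from the paper's on both equalities. For the first equality, the paper works on the \emph{base space}: it assumes $X$ continuous, writes $\Ex(X)=\int_\Omega X(\omega)\,dP(\omega)$, and re-derives the tail formula by hand via Choquet partition sums $\sum_{i}(x_i-x_{i-1})\,P(\{\omega\mid X(\omega)>x_i\})$ with mesh $\epsilon$, squeezing from both sides. You instead work on the \emph{distribution} side and invoke the real-integral form~\eqref{eq:lowerrealintegral} of the lower integral with $\psi=\iota$ and $\nu=\Pr[X]$, which makes the identity essentially definitional. Your version is shorter and cleaner, at the price of leaning on the equivalence of~\eqref{eq:lowerintegral} and~\eqref{eq:lowerrealintegral}, which Definition~\ref{defn:lowerintegral} asserts but does not prove; the paper's Riemann-sum computation is in effect a proof of that equivalence in this special case, so the two arguments have the same mathematical content packaged differently. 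For the second equality, the paper uses an $\epsilon$-shift sandwich, $\Pr(X>x)\leq\Pr(X\geq x)\leq\Pr(X+\epsilon>x)$, integrating and letting $\epsilon\to0$, whereas you use the countable-atoms argument. Your modularity computation giving finite additivity of the upper valuation $\bar{\nu}$ on disjoint singletons is right (the complements of disjoint closed sets cover $\tpRe$, so modularity yields $\sum_{i}\bar{\nu}(\{x_i\})=\bar{\nu}(\{x_1,\ldots,x_n\})\leq 1$), and your reading of $\Pr(X=x)$ as $\bar{\nu}(\{x\})=\Pr(X\geq x)-\Pr(X>x)$ is the correct one in this framework. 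Both arguments are classical and valid; the shift argument buys freedom from any almost-everywhere reasoning, while yours buys the structural fact that the integrands differ only on a countable set, which is more information than the theorem needs.

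One caveat: your opening step identifies $\Ex(X)$ with $\int_{\tpRe} x\,d\Pr[X]$ ``by definition,'' but the paper's definition of expectation via the induced valuation applies only to \emph{effectively bounded} $X$; for unbounded positive integrable $X$ (which the hypothesis $\Ex(X)<\infty$ permits), $\Ex(X)$ is defined as a limit along an $L^1$ Cauchy sequence of bounded random variables. You therefore still owe the limit-passage step, which the paper discharges with ``the case of measurable random variables follows by taking limits''; classically this is immediate from $\bigl|\int_0^\infty \Pr(X_n>x)\,dx-\int_0^\infty \Pr(X>x)\,dx\bigr|\leq \Ex|X_n-X|$. This is a small, fixable omission rather than a flaw in the method.
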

\noindent
Note that the first integral is computable in $\tpR^+_<$, but the second integral is in general uncomputable in $\tpR^+_>$, due to the need to take the limit as the upper bound of the integral goes to infinity.
However, the second integral may be computable if the tail is bounded, for example, if $X$ takes bounded values.
The proof follows from the definition of the lower integral:
\begin{proof}
First assume $X$ is a continuous random variable, so by definition, $\Ex(X) = \int_\Omega X(\omega)\,dP(\omega)$.

The definition of the lower horizontal integral gives
$\int_\Omega X(\omega)\,dP(\omega) \geq \sum_{i=0}^{n-1} (x_i-x_{i-1}) P(\{\omega\mid X(\omega)>x_i\})$
for all values $0=x_0<x_1<\cdots<x_n$.
Take $x_{i}-x_{i-1}<\epsilon$ for all $i$.
Then $\Ex(X)+\epsilon = \int_\Omega X(\omega)+\epsilon\,dP(\omega)  \geq \sum_{i=1}^{n} (x_i-x_{i-1}) P(\{\omega\mid X(\omega) +\epsilon > x_i\}) = \sum_{i=1}^{n} \int_{x_{i-1}}^{x_i} \Pr(X(\omega) > x_i-\epsilon) \,dx  \geq \sum_{i=1}^{n} \int_{x_{i-1}}^{x_i} \Pr(X(\omega) > x_{i-1}) \,dx \geq \sum_{i=1}^{n} \int_{x_{i-1}}^{x_i} \Pr(X(\omega) > x) \, dx = \int_{0}^{x_{n-1}} \Pr(X(\omega) > x)$.
Taking $n\to\infty$ gives $\Ex(X) \geq \int_{0}^{\infty}\Pr(X>x)dx - \epsilon$, and since $\epsilon$ is arbitrary, $\Ex(X) \geq \int_{0}^{\infty} \Pr(X > x) dx$.

The definition of the lower horizontal integral gives for all $\epsilon>0$, there exist $0=x_0<x_1<\cdots<x_n$, such that $\int_\Omega X(\omega)\,dP(\omega) \leq \sum_{i=1}^{n} (x_i-x_{i-1}) P(\{\omega\mid X(\omega)>x_i\})+\epsilon$.
By refining the partition if necessary, we can assume $x_i-x_{i-1}<\epsilon$ for all $i$.
Then $\Ex(X)-\epsilon \leq \sum_{i=1}^{n} (x_i-x_{i-1}) P(\{\omega\mid X(\omega)>x_{i}\}) = \sum_{i=1}^{n} \int_{x_{i-1}}^{x_{i}} P(\{\omega\mid X(\omega)>x_{i}\} dx \leq \sum_{i=1}^{n} \int_{x_{i-1}}^{x_{i}} P(\{\omega\mid X(\omega)>x\} dx=\int_{0}^{x_n} P(\{\omega\mid X(\omega)>x\} dx\leq\int_{0}^{\infty} P(\{\omega\mid X(\omega)>x\} dx$.
Hence $\Ex(X) \leq \int_{0}^{\infty}\Pr(X>x)dx + \epsilon$, and since $\epsilon$ is arbitrary, $\Ex(X) \leq \int_{0}^{\infty} \Pr(X > x) dx$.

The case of measurable random variables follows by taking limits.

We show $\int_{0}^{\infty}\Pr(X\geq x)dx = \Ex(X)$ since  $\int_{0}^{\infty}\Pr(X > x)dx \leq \int_{0}^{\infty}\Pr(X\geq x)dx \leq \int_{0}^{\infty}\Pr(X + \epsilon > x)dx = \epsilon + \int_{0}^{\infty}\Pr(X  \geq x)dx$  for any $\epsilon>0$.
\end{proof}
\noindent
By changing variables in the integral, we obtain:
\begin{corollary}
If $X$ is a real-valued random variable, then for any $\alpha\geq 1$,
\[ \textstyle \Ex(|X|^\alpha) = \int_{0}^{\infty} \alpha \, x^{\alpha-1} \, \Pr(X>x) dx = \int_{0}^{\infty} \alpha x^{\alpha-1} \Pr(X \geq  x) dx . \]
\end{corollary}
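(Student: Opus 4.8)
The plan is to reduce the statement directly to Theorem~\ref{thm:expectation} via a one-dimensional change of variables. First I would observe that $Y := |X|^\alpha$ is itself a positive real-valued random variable, obtained as the image of $X$ under the continuous map $t \mapsto |t|^\alpha$ (computable by Theorem~\ref{thm:imagerandomvariable}). Applying Theorem~\ref{thm:expectation} to $Y$ gives
\[ \textstyle \Ex(|X|^\alpha) = \Ex(Y) = \int_0^\infty \Pr(Y > y)\,dy = \int_0^\infty \Pr(Y \geq y)\,dy , \]
with the first integral computable in $\tpRe^+_<$; if $\Ex(|X|^\alpha)$ is infinite the identity still holds in $\tpHl$, so there is no need to separate the finite and infinite cases.

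Next I would perform the substitution $y = x^\alpha$. Since $\alpha \geq 1$, the map $x \mapsto x^\alpha$ is a continuous strictly increasing $C^1$ bijection of $[0,\infty)$ onto itself with $dy = \alpha x^{\alpha-1}\,dx$, and strict monotonicity of $t \mapsto t^\alpha$ on $[0,\infty)$ yields the set identity $\{|X|^\alpha > y\} = \{|X| > x\}$ whenever $y = x^\alpha$. Hence
\[ \textstyle \int_0^\infty \Pr(|X|^\alpha > y)\,dy = \int_0^\infty \Pr(|X| > x)\,\alpha x^{\alpha-1}\,dx , \]
which is the first claimed formula. The $\geq$ version follows in exactly the same way from the corresponding identity $\{|X|^\alpha \geq y\} = \{|X| \geq x\}$.

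The only point requiring a little care is the legitimacy of the change of variables, but this is routine: by~\eqref{eq:lowerrealintegral} the integrals appearing in Theorem~\ref{thm:expectation} are genuine (improper) real integrals of the monotone non-increasing functions $y \mapsto \Pr(Y > y)$ and $x \mapsto \Pr(|X| > x)$, which are Riemann integrable on every bounded interval, so the substitution for a monotone $C^1$ change of variable applies on each $[0,R]$ and then in the limit $R \to \infty$. I do not expect any genuine obstacle here; the whole content of the corollary is carried by Theorem~\ref{thm:expectation}, the substitution being elementary calculus.
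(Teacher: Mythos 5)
Your proposal is correct and takes essentially the same route as the paper, which derives this corollary from Theorem~\ref{thm:expectation} precisely ``by changing variables in the integral'' --- exactly your substitution $y=x^\alpha$ applied to $Y=|X|^\alpha$. Your added care (justifying the substitution for the monotone improper Riemann integrals and noting the identity persists in $\tpHl$ when the expectation is infinite) merely fills in details the paper leaves implicit.
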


\begin{remark}[Expectation of a distribution]
Theorem~\ref{thm:expectation} shows that the expectation of a random variable depends only on its \emph{distribution}.
Indeed, we can define the expectation of a probability \emph{valuation} $\pi$ on $[0,\infty[$ by
\[ \textstyle \Ex(\pi) = \int_{0}^{\infty} \pi(\,]x,\infty[\,) dx  = \int_{0}^{\infty} \pi(\,[x,\infty[\,) dx  . \]
\end{remark}
\noindent
If $f:\tpX\fto\tpRe^+_<$, then we can compute the \emph{lower expectation} of $f(X)$ by
\begin{equation}\label{eq:lowerexpectation} \textstyle \Ex_<(f(X)) := \int_{0}^{\infty} \Pr\bigl(X\in f^{-1}(\,]\lambda,\infty[\,)\bigr) d\lambda . \end{equation}


We have an effective version of the classical dominated convergence theorem.
\begin{theorem}[Dominated convergence]
Suppose $X_n\to X$ weakly, and there is an integrable function $Y:\Omega \fto \tpRe$ such that $|X_n| \leq Y$ for all $n$ (i.e. $\Pr(Y-|X_n|\geq0)=1$) and that $\Ex|Y|<\infty$.
Then $X_n$ converges effectively under the metric~\eqref{eq:randomvariableuniformmetric}.
In particular, the limit of $\Ex(X_n)$ always exists
\end{theorem}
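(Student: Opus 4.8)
The plan is to reduce the claimed effective $L^1$-convergence to two effective estimates: a truncation estimate controlling the contribution of large values of the dominating variable $Y$, and a small-exceptional-set estimate coming from convergence in probability. Throughout I read the hypothesis ``$X_n\to X$ weakly'' as \emph{effective} convergence in probability, which for metric-space-valued random variables is the convergence associated with the Fan metric (Proposition~\ref{prop:fanmetriccomputable}); this supplies a known rate bounding $\Pr(d(X_n,X)\ge\epsilon)$ from above. I also take $\Ex(Y)$ to be a computable real, as is implicit in ``$\Ex|Y|<\infty$''. Without some such effective input the rate of $L^1$-convergence cannot be extracted, so these readings are essential.

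First I would record that $d(X_n,X)\le 2Y$ almost surely: writing $|X_n|=d(X_n,Z)$ for the distinguished constant random variable $Z$, the bound $|X_n|\le Y$ together with $X_n\to X$ in probability yields $|X|\le Y$ (pass to an almost-surely convergent subsequence, a correctness argument that may use noncomputable choices), whence $d(X_n,X)\le d(X_n,Z)+d(Z,X)\le 2Y$ by the triangle inequality. Next I would effectively choose a truncation level $M$. Using $\Ex(Y)=\int_0^\infty\Pr(Y>t)\,dt$ from Theorem~\ref{thm:expectation}, the tail satisfies $\Ex\bigl((Y-M)^+\bigr)=\int_M^\infty\Pr(Y>t)\,dt=\Ex(Y)-\int_0^M\Pr(Y>t)\,dt$, which is computable in $\R^+_>$ as the difference of the computable real $\Ex(Y)$ and a quantity computable in $\R^+_<$; hence given $\epsilon>0$ I can search for and verify an $M$ with $\Ex\bigl((Y-M)^+\bigr)<\epsilon$. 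This is the effective form of absolute continuity of the integral.

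With $M$ fixed, I would split the target distance. Setting $A_{n,\epsilon}=\{\omega\mid d(X_n(\omega),X(\omega))\ge\epsilon\}$ and using $2Y\le 2M+2(Y-M)^+$ on $A_{n,\epsilon}$, together with $d(X_n,X)\le\epsilon$ off $A_{n,\epsilon}$, gives
\[ \textstyle \Ex\bigl(d(X_n,X)\bigr) \;\le\; \epsilon \;+\; 2M\,\Pr(A_{n,\epsilon}) \;+\; 2\,\Ex\bigl((Y-M)^+\bigr). \]
The first and third terms are already below $3\epsilon$ by the choice of $M$; the middle term is controlled by the effective convergence in probability, whose known rate lets me compute $N$ with $2M\,\Pr(A_{n,\epsilon})<\epsilon$ for all $n\ge N$. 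Thus $\Ex\bigl(d(X_n,X)\bigr)<4\epsilon$ for $n\ge N$, and since $N$ is computed from $\epsilon$, $M$ and the convergence rate, this is an effective modulus for convergence of $(X_n)$ in the metric~\eqref{eq:randomvariableuniformmetric}. The final assertion follows since $|\Ex(X_n)-\Ex(X_m)|\le\Ex\bigl(d(X_n,X_m)\bigr)\le d_1(X_n,X)+d_1(X_m,X)$, so $\bigl(\Ex(X_n)\bigr)$ is an effective Cauchy sequence of reals, hence convergent.

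The main obstacle I expect is matching each quantity to the side of the real line from which it is computable. The measure $\Pr(A_{n,\epsilon})$ of the exceptional set must be \emph{upper}-bounded, whereas probabilities of open events are only lower-computable; this is precisely why I use the closed-threshold set $\{d(X_n,X)\ge\epsilon\}$, whose measure is computable in $\R^+_>$ by Proposition~\ref{prop:fanmetriccomputable}, rather than the open-threshold set. Similarly, verifying that the truncation level $M$ can genuinely be \emph{extracted} from the computable value $\Ex(Y)$, rather than merely shown to exist, is the delicate point; once the domination $d(X_n,X)\le 2Y$ and this side-bookkeeping are in place, the remaining estimates are routine.
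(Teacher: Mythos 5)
Your proof is correct and takes essentially the same approach as the paper's: both decompose $\Ex$ of the distance into the region where it is below $\epsilon$ plus a small exceptional set, on which the integral of the dominating variable $Y$ is bounded by an effectively computable tail extracted from the computable value of $\Ex(Y)$ via the horizontal-integral identity $\Ex(Y)=\int_0^\infty \Pr(Y>y)\,dy$, so that the tail is computable in $\R^+_>$ and a suitable truncation level can be found by search. The only cosmetic difference is that you compare $X_n$ with the limit $X$ (hence need the auxiliary fact $d(X_n,X)\le 2Y$, which you justify non-effectively), whereas the paper establishes effective Cauchyness by comparing $X_m$ with $X_n$ directly and so avoids that step; your explicit retention of the term $2M\,\Pr\bigl(d(X_n,X)\ge\epsilon\bigr)$, controlled by the known rate, is if anything slightly more careful bookkeeping than the paper's uniform bound on $\sup\bigl\{\int_A Y\,dP \mid P(A)\le\epsilon\bigr\}$.
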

\begin{proof}
Since $\Ex(Y) < \infty$, the probabilities $\Pr(Y \geq y) \to 0$ as $y\to\infty$.
For fixed $\epsilon>0$, let $b(\epsilon)=\sup\{ y\mid \Pr(Y\geq y) \geq \epsilon$, which is computable in $\R_>$ given $\epsilon$.
Then $\sup\{ \int_A Y dP \mid P(A)\leq \epsilon \} \leq \int_{b(\epsilon)}^{\infty} \Pr(Y\geq y) dy=\Ex(Y)-\int_{0}^{b(\epsilon)} \Pr(Y > y) dy$ in $\R^+_>$.
For continuous random variables $X_m$, $X_n$ with $2^{-m},2^{-n}<\epsilon$, taking $A_\epsilon=\{\omega\mid d(X_m(\omega),X_n(\omega))\geq \epsilon\}$ gives
\( \Ex(|X_m-X_n|) \leq \epsilon + \int_{A_\epsilon}|X_m(\omega)-X_n(\omega)|\,dP(\omega) \leq \epsilon + \int_{A_\epsilon}|X_m(\omega)|+|X_n(\omega)|dP(\omega)  \leq \epsilon + \int_{A_\epsilon} 2|Y| dP \leq \epsilon + 2 \int_{b(\epsilon)}^{\infty} \Pr(Y\geq y) dy, \)
which converges effectively to $0$ as $\epsilon\to 0$.
\end{proof}


\section{Conditioning}
\label{sec:conditioning}

The concept of conditional random variable is subtle even in classical probability theory.
The basic idea is that if we condition a random quantity $Y$ on some information of kind $\calX$, then we can reconstruct $Y$ given a value $x$ describable by $\calX$.
Classically, conditional \emph{random variables} are \emph{not} defined, but conditional distributions and expectations are.
Conditional expectations can be shown to exist using the Radon-Nikodym derivative, but this is uncomputable~\cite{HoyrupRojasWeihrauch2011}.

\subsection{Independence}
\label{sec:independence}

In the classical case, we condition relative to a sub-sigma-algebra of the measure space.
In the computable case, it makes sense to consider instead a sub-topology $\calT$ on $\Omega$.
We first need to define concepts of $\calT$ measurability and $\calT$ independence
\begin{definition}[Measure-topologies]
Let $\nu$ be a valuation on $\tpX$.
A \emph{$\nu$-topology} is a collection of $\nu$-lower-measurable sets which contains $\emptyset,\tpX$ and is closed under intersection and countable union.
\par
The $P$-topology \emph{generated} by a random variable $X:\Omega\mfto\tpX$ is simply $\{ X^{-1}(U) \mid U\in\opset(\tpX)\}$.
A random variable $X$ is $\calT$-measurable if $X^{-1}(U)\in\calT$ for all $U\in\tpOp(\tpX)$
\par
We write $\tpRV_\calT(\tpX)$ for the type of $\calT$-measurable random variables with values in $\tpX$.
\end{definition}
Note that a $\nu$-topology is not a topology on $\tpX$ in the standard sense, since it consists of equivalence-classes of subsets of $\tpX$, rather than sets themselves.

Recall that classically, we say random variables $X_1,X_2$ taking values in $\tpX_1$, $\tpX_2$ are independent if for all open $U_1\subset \tpX_1$ and $U_2\subset \tpX_2$, we have \( \Pr(X_1\in U_1 \wedge X_2\in U_2) = \Pr(X_1\in U_1) \cdot \Pr(X_2\in U_2). \)
This classical definition does not relate well with computability theory, as the following example shows:
\begin{example}
Consider the result $X$ of throwing a $6$-sided die, and the random variables $X_{\mathrm{even}}$ which is $1$ if $X$ is even and $0$ otherwise, and $X_{\mathrm{high}}$ which is $1$ is $X$ is a~5 or~6.
Then $X_\mathrm{even}$ and $X_\mathrm{high}$ are independent for a fair die, but not if the probability of a~6 is $\tfrac{1}{6}+5\epsilon$ and of a~1~to~5 is $\tfrac{1}{6}-\epsilon$ for $\epsilon\neq0$.
\end{example}

It is therefore useful to consider different versions of independence  properties.
\begin{definition}[Independence]
\label{defn:independentrandomvariables}
\mbox{}\par
$P$-topologies $\calT_{1,2}$ on $\Omega$ are \emph{independent} if $P(U_1\cap U_2)=P(U_1)P(U_2)$ for all $U_1\in\calT_1$, $U_2\in\calT_2$.
\par
$P$-topologies $\calT_{1,2}$ are \emph{strongly independent} if we can write $\Omega=\Omega_1\times\Omega_2$, with projections $p_{1,2}$ and inclusion $q:\Omega_1\times\Omega_2\to\Omega$ such that $q(p_1(U_1),\omega_2)=U_1$ for all $U_i\in\calT_i$, there exists $V_i\subset\Omega_i$ such that $U_i=p_i^{-1}(V_i)$ for $i=1,2$.
\par
Random variables $X_{1,2}:\Omega\mfto\tpX_{1,2}$ are \emph{effectively independent} if $X_1:\tpRV_{\calT_1}(\tpX_1)$ and $X_2:\tpRV_{\calT_2}(\tpX_2)$ for independent topologies $\calT_{1,2}$.
\par
Random variables $X_1,\ldots,X_k$ are \emph{jointly} independent of $\calT$ if the product $\prod_{i=1}^{k}X_i$ is independent of $\calT$, and $X_1,X_2,\ldots$ are jointly independent of $\calT$ if every finite product is independent of $\calT$.
\par
A random variable $X$ is \emph{effectively independent} of a topology $\calT$ on $\Omega$ if $X:\tpRV_{\calX}(\tpX_1)$ for some topology $\calX$ independent of $\calT$.
\end{definition}
\noindent
We can express independence relative to sub-topologies using the identity random variable~$I:\Omega\fto\Omega$.
If $X:\Omega\mfto\tpX$ is independent of $\calT$, then
\( \Pr(X\times I \in U \times W) = \Pr(X \in U) P(W) \)
whenever $U\in\opset(\tpX)$ and $W\in\calT$.
We write $\tpRV_{\perp\calT}(\tpX)$ for $\calT$-independent random variable with values in $\tpX$.
Note that $\tpRV_{\perp\calT}(\tpX)$ does not form a natural type, since it is possible for $X_1$, $X_2$ to be independent of $\calT$, but $X_1\times X_2$ not to be.
If $X$ is effectively independent of $\calT$, and $Y$ is $\calT$-measurable, then $X$ is effectively independent of $Y$.

\subsection{Conditional Random Variables}
\label{sec:conditionalrandomvariable}

We now proceed to our notion of conditional random variable $Y|\calX$, where $\calX$ is a measure-topology on $\Omega$.
Recall that classically, the conditional expectation $\Ex(Y|\calX)$ for a random variable $Y$ is a $\calX$-measurable random variable.
\begin{definition}[Conditional random variable]
\label{defn:conditionalrandomvariable}
Let $\calX$ be a measure-topology. A $\calX$-independent conditional random variable is a function $Y|:\tpX\to\tpRV(\tpY)$ such that the $Y|x$ are jointly-independent of $\calX$.
\end{definition}

If $Y|:\tpX\to\tpRV(\tpY)$ is $\calX$-independent, and $X:\tpRV(\tpX)$ is simple and $\calX$-measurable, then we can define the \emph{joint random variable}
\begin{definition}[Joint random variable]
\label{defn:jointrandomvariable}
The \emph{joint random variable} of $X \rtimes Y|$ of a $\calX$-measurable simple random variable $X:\tpRV(\tpX)$ and a $\calX$-independent conditional random variable $Y|:\tpX\to\tpRV(\tpY)$ is defined by
\begin{equation} \label{eq:jointrandomvariable} \textstyle (X \rtimes Y|)^{-1}(U\times V) = \bigcup_{x_i\in U} X^{-1}(x_i) \cap (Y|x_i)^{-1}(V) . \end{equation}
The \emph{joint random variable} of a $\calX$-measurable random variable $X:\tpRV(\tpX)$ and a $\calX$-independent conditional random variable \nolinebreak[4]{$Y|:\tpX\to\tpRV(\tpY)$} is defined to be $\lim_{n\to\infty} X_n \rtimes Y|$, where $X_n$ is a sequence of $\calX$-measurable simple random variables converging to $\calX$.
\end{definition}
\noindent
Note that if $X$ is a simple random variable,
\[ \textstyle \P\bigl(\bigcup_{x_i\in U} X^{-1}(x_i) \cap (Y|x_i)^{-1}(V)\bigr)=\sum_{x_i\in U} \Pr(X=x_i) \times \Pr(Y|x_i\in V) . \]

To show Definition~\ref{defn:jointrandomvariable} makes sense in the general case, we need to show that it is independent of the sequence of simple random variables used to specify $X$.
\begin{lemma}
Let $X$ be a $\calX$-measurable random variable in a metric space $\tpX$.
Then we can construct a sequence of $\calX$-measurable simple continuous random variables converging effectively to $X$.
\end{lemma}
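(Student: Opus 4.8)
The plan is to build the approximating random variables directly from $X$, so that each of their level sets is a preimage under $X$ and therefore lies in $\calX$ automatically. Write $\nu=\Pr[X]$ for the distribution of $X$ on $\tpX$. First I would invoke Lemma~\ref{lem:topologicalpartition} to produce, for each $n$, a countable topological partition $\calB_n$ of $\tpX$ into open $\nu$-regular cells of diameter at most $2^{-n-1}$ with $\nu\bigl(\tpX\setminus\bigcup\calB_n\bigr)=0$; passing to common refinements, I may assume $\calB_{n+1}$ refines $\calB_n$. Choosing $\nu$-regular cells is what makes their boundaries $\nu$-null, so that the preimages below partition $\Omega$ up to a $\P$-null set.

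For each cell $B\in\calB_n$ set $W_B:=X^{-1}(B)$. Since $B$ is open and $X$ is $\calX$-measurable, each $W_B$ is a $\P$-lower-measurable set lying in $\calX$, the $W_B$ are pairwise almost-disjoint, and $\sum_{B}\P(W_B)=\sum_{B}\nu(B)=1$. Choosing a representative $x_B\in B$ for each cell and setting $X_n:=x_B$ on $W_B$ defines a simple random variable with $X_n^{-1}(\{x_B\})=W_B\in\calX$, so $X_n$ is $\calX$-measurable. Because $X(\omega)\in B$ forces $d(X_n(\omega),X(\omega))\le\diam(B)\le 2^{-n-1}$ almost surely, the triangle inequality gives $d_F(X_m,X_n)\le 2^{-n}$ for $m>n$, so $(X_n)$ is a fast Cauchy sequence in the Fan metric converging effectively to $X$ (cf.\ Theorem~\ref{thm:fanmetriclimit}). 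This already yields the statement with \emph{simple} in place of \emph{simple continuous}.

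The remaining, and main, difficulty is to make the $X_n$ genuinely \emph{continuous} while keeping them $\calX$-measurable. A simple continuous random variable must have \emph{clopen} level sets, whereas $W_B$ is only lower-measurable; worse, the clopen sets substituted for the $W_B$ must themselves lie in $\calX$, and a general measure-topology need not contain any nontrivial clopen sets. This is exactly where the (strong) independence structure of $\calX$ enters: writing $\Omega=\Omega_1\times\Omega_2$ with $\calX=\{p_1^{-1}(V)\}$, $\calX$-measurability of $X$ means $X^{-1}(U)=p_1^{-1}(V_U)$ for every open $U$, so $X$ factors as $X=\tilde X\circ p_1$ through a random variable $\tilde X:\Omega_1\mfto\tpX$ on the Cantor factor $\Omega_1$. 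Applying Theorem~\ref{thm:fastcauchysimplepiecewisecontinuous} to $\tilde X$ over base space $\Omega_1$ yields simple continuous $\tilde X_n\to\tilde X$ whose clopen level sets are finite unions of cylinders of $\Omega_1$; pulling back, $X_n:=\tilde X_n\circ p_1$ is continuous (a composite of continuous maps), simple, satisfies $X_n^{-1}(U)=p_1^{-1}(\tilde X_n^{-1}(U))\in\calX$, and converges effectively to $X$. The step to justify carefully is precisely this factorization: that $\calX$-measurability lets one compute $\tilde X$ on the $\calX$-factor $\Omega_1$ from the preimage map $X^{-1}:\tpOp(\tpX)\to\calX$, equivalently that $p_1^{-1}$ can be inverted on the lower-measurable sets it produces. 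Alternatively one can stay with the direct construction of the previous paragraph and carry out an inner clopen approximation of each $W_B$ inside $\Omega_1$, using its cylinders to supply clopen members of $\calX$.
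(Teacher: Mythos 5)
Your first construction is exactly the paper's proof: the paper invokes Lemma~\ref{lem:almostsureretraction} to obtain a finite-range map $r_{n+1}:U_{n+1}\to\tpX$ on an open full-measure set with $d(r_{n+1}(x),x)<2^{-(n+1)}$, and sets $X_n:=r_{n+1}\circ X$. Since $r_{n+1}$ is constant on the open cells of the partition supplied by Lemma~\ref{lem:topologicalpartition}, this is literally your ``$X_n:=x_B$ on $W_B=X^{-1}(B)$'' construction, with $\calX$-measurability coming for free because $X_n^{-1}(V)=X^{-1}\bigl(r_{n+1}^{-1}(V)\bigr)$ and $r_{n+1}^{-1}(V)$ is open. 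Up to that point your argument is correct and coincides with the paper's.

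The second half, however, contains a genuine gap and is also unnecessary. Your repair assumes $\calX$ is \emph{strongly independent} in the sense of Definition~\ref{defn:independentrandomvariables}, i.e.\ that $\Omega$ splits as $\Omega_1\times\Omega_2$ with $\calX=p_1^{-1}(\tpOp(\Omega_1))$; but the lemma assumes only that $\calX$ is a measure-topology --- a family of $\P$-lower-measurable sets closed under intersection and countable union --- and such a family need not arise from any factor of $\Omega$, nor contain a single nontrivial clopen set. Even granting the product structure, $\calX$-measurability gives $X^{-1}(U)\in\calX$ only as an equivalence class of lower-measurable sets, not as an open set of the form $p_1^{-1}(V_U)$ with $V_U$ open, so the factorization $X=\tilde{X}\circ p_1$ --- the step you yourself flag as needing justification --- is not available by the argument given, and inverting $p_1^{-1}$ on the lower-measurable sets it produces is exactly what you would have to prove. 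The reason the paper needs none of this is that nothing downstream requires total continuity: the proof of Theorem~\ref{thm:conditionalrandomvariable} uses only that the $X_n$ are simple, $\calX$-measurable, and Fan-close to $X$. ``Continuous'' in the lemma should be read in the piecewise sense of Definition~\ref{defn:piecewisecontinuousrandomvariable} (via Observation~\ref{obs:piecewisecontinuous} and Theorem~\ref{thm:imagerandomvariable}, applied to $r_{n+1}$, which is continuous on its open full-measure domain). Your observation that a \emph{totally} continuous simple random variable must have clopen level sets is correct, but it identifies a looseness in the paper's terminology rather than a missing step that your factorization must fill.
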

\begin{proof}
Let $r_n:U_n\to \tpX$ be a finite-valued map defined on an open set $U_n$ with $\Pr(X\in U_n)=1$ such that $d(r_n(x),x)<2^{-n}$ for all $n$, as guaranteed by Lemma~\ref{lem:almostsureretraction}.
Take $X_n=r_{n+1}\circ X$, which is $\calX$-measurable since $X$ is.
Then $(X_n)$ is sequence of random variables with $d(X_n,X)\leq 2^{-(n+1)}$, and  $d(X_{n_1},X_{n_2})\leq 2^{-\min(n_1,n_2)}$ as required.
\end{proof}

\begin{theorem}\label{thm:conditionalrandomvariable}
The joint random variable of a $\calX$-measurable random variable $X:\tpRV(\tpX)$ and a $\calX$-independent conditional random variable $Y|:\tpX\to\tpRV(\tpY)$ is independent of the sequence of simple approximations $X_n$ to $X$ used in the definition, so is computable.
\end{theorem}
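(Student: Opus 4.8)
The plan is to show that the assignment $X\mapsto X\rtimes Y|$, regarded as a map from the $\calX$-measurable simple random variables into $\tpRV(\tpX\times\tpY)$, is uniformly continuous with respect to the Fan metric. Granting this, a fast Cauchy sequence $(X_n)$ of simple $\calX$-measurable random variables is sent to a Cauchy sequence $(X_n\rtimes Y|)$; thinning to a fast subsequence, Theorem~\ref{thm:fanmetriclimit} produces a computable limit. Moreover, if $(X_n)$ and $(X_n')$ both converge to $X$, then $d_F(X_n,X_n')\leq d_F(X_n,X)+d_F(X,X_n')\to0$, so uniform continuity forces $d_F(X_n\rtimes Y|,\,X_n'\rtimes Y|)\to0$ and the two limits coincide. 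This is precisely the asserted independence of the approximating sequence, and computability follows from Theorem~\ref{thm:fanmetriclimit} together with the fact that each $X_n\rtimes Y|$ is computable from $X_n$ and $Y|$ via~\eqref{eq:jointrandomvariable} and Theorem~\ref{thm:intersectionunion}.

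First I would record the pointwise description for simple $\calX$-measurable $X$. Since the $X^{-1}(x_i)$ are disjoint and exhaust $\Omega$ up to a null set, equation~\eqref{eq:jointrandomvariable} gives $(X\rtimes Y|)(\omega)=\bigl(X(\omega),\,(Y|X(\omega))(\omega)\bigr)$ almost surely, the \emph{same} $\omega$ appearing in both coordinates; this is the concrete realisation of the independence of the $\calX$-measurable $X$ from the jointly $\calX$-independent conditionals $Y|x$. Hence for two simple $\calX$-measurable variables $X,X'$ the product metric satisfies
\[ d\bigl((X\rtimes Y|)(\omega),(X'\rtimes Y|)(\omega)\bigr)=\max\bigl(d(X(\omega),X'(\omega)),\,d((Y|X(\omega))(\omega),(Y|X'(\omega))(\omega))\bigr), \]
so estimating $d_F(X\rtimes Y|,X'\rtimes Y|)$ reduces to controlling the two coordinates separately. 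The $\tpX$-coordinate is immediate: $\P\bigl(d(X(\omega),X'(\omega))>\epsilon\bigr)$ is governed directly by $d_F(X,X')$.

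The core estimate is on the $\tpY$-coordinate. For fixed $\epsilon>0$, joint $\calX$-independence lets me condition on the $\calX$-measurable pair $(X,X')$ and apply Fubini to write
\[ \P\bigl(d\bigl((Y|X)(\omega),(Y|X')(\omega)\bigr)>\epsilon\bigr)=\int_{\tpX\times\tpX}\P_\omega\bigl(d\bigl((Y|x)(\omega),(Y|x')(\omega)\bigr)>\epsilon\bigr)\,d\Pr[X\times X'](x,x'). \]
By definition of the Fan metric, the inner probability is at most $\epsilon$ whenever $d_F(Y|x,Y|x')<\epsilon$. Since $Y|\colon\tpX\to\tpRV(\tpY)$ lies in an exponential type it is continuous, so $d_F(Y|x,Y|x')$ is small once $x,x'$ are close; splitting the integral over $\{d(x,x')<\delta\}$ and its complement (of measure at most governed by $d_F(X,X')$) bounds the whole expression by $d_F(X,X')$ plus $\sup_{d(x,x')<\delta}d_F(Y|x,Y|x')$.

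The main obstacle is upgrading the pointwise continuity of $Y|$ to a \emph{uniform} bound, since $\tpX$ need not be compact. Here I would use tightness: as $\tpX$ is Polish, $\Pr[X]$ is concentrated, up to measure $\epsilon$, on a compact set $K$ on which $Y|$ is uniformly continuous, furnishing a modulus $\delta(\epsilon)$ valid for the integral restricted to $K\times K$ with error at most $\epsilon$ off $K$. This is a correctness argument and may invoke the classical, non-effective Ulam theorem, whereas the construction of the limit itself stays effective. Combining the two coordinate bounds gives $d_F(X\rtimes Y|,X'\rtimes Y|)\to0$ as $d_F(X,X')\to0$, which is the required uniform continuity and completes the argument.
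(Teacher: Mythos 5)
Your overall reduction mirrors the paper's proof more closely than you might expect: the paper also approximates $X$ by simple $\calX$-measurable variables, uses joint $\calX$-independence to factor $\Pr(d(Y|X_1,Y|X_2)>\epsilon/2)$ as a sum over the atoms of $(X_1,X_2)$ weighted by $\Pr(d(Y|x_1,Y|x_2)>\epsilon/2)$ (your "Fubini" step is this finite sum), and then exploits continuity of $Y|$ on a set of large $\Pr[X]$-measure to conclude a local modulus around $X$, hence the Cauchy property and well-definedness of the limit. The genuine gap is in your step upgrading pointwise continuity of $Y|\colon\tpX\fto\tpRV(\tpY)$. First, Ulam's theorem needs $\tpX$ Polish and needs $\Pr[X]$ to extend to a tight Borel measure; neither completeness of $\tpX$ nor such an extension is among the paper's hypotheses, so you are quietly strengthening the assumptions. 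Second, and more seriously, even granting a compact $K$ with $\Pr[X](K)>1-\epsilon$ on which $Y|$ is uniformly continuous, the atoms $x_1,x_2$ over which your integral runs are values of the \emph{approximants} $X_1,X_2$, and these are only guaranteed (with high probability) to lie within $\delta$ of $K$, not in $K$ itself. A modulus on $K\times K$ says nothing about pairs in the $\delta$-neighbourhood of $K$: in a non-locally-compact $\tpX$ (e.g.\ an infinite-dimensional normed space) that neighbourhood is not compact and $Y|$ need not be uniformly continuous there, so your restriction of the integral to $K\times K$ discards exactly the pairs you must control.

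The defect is repairable by a covering argument (cover $K$ by finitely many balls $B(x_j,\delta_j)$ on which $Y|$ oscillates by less than $\epsilon/4$ in $d_F$, set $\delta=\min_j\delta_j$, and note that $x\in K$ with $d(x,x_1),d(x,x_2)<\delta$ forces $x_1,x_2$ into a common $B(x_j,2\delta_j)$), but a "modulus with a margin" is precisely what the paper constructs directly, without any compactness or completeness, via the open continuity sets
\[ C_{\delta,\epsilon} = \{ x\in \tpX \mid \forall \tilde{x}\in\tpX,\ d(x,\tilde{x})\leq\delta \implies d_F(Y|x,Y|\tilde{x})<\epsilon \} , \]
which increase to all of $\tpX$ as $\delta\to0$ by continuity of $Y|$; the continuity of the valuation along increasing unions of open sets then yields $\delta$ with $\Pr(X\in C_{\delta,\epsilon/4})>1-\epsilon/4$, and on the event $X(\omega)\in C_{\delta,\epsilon/4}$ with $d(X,X_i)<\delta$ one gets $d_F(Y|X_1,Y|X_2)$ controlled, which is exactly your desired estimate. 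If you replace the tightness step by this device (or the covering patch), the rest of your argument --- Cauchy image sequence, coincidence of limits along any two approximating sequences, computability via Theorem~\ref{thm:fanmetriclimit} and equation~\eqref{eq:jointrandomvariable} --- goes through and coincides with the paper's proof.
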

\begin{proof}
Define the continuity sets
\[ C_{\delta,\epsilon} = \{ x\in \tpX \mid \forall \tilde{x}\in\tpX,\ d(x,\tilde{x})\leq\delta \implies d_F(Y|x,Y|\tilde{x})<\epsilon \} . \]
Note that every $C_{\delta,\epsilon}$ is open, and for any fixed $\epsilon$, continuity of $Y|:\tpX\fto\tpRV(\tpY)$ implies $\bigcup_{\delta>0} C_{\delta,\epsilon} = \tpX$.
Hence for fixed $\epsilon$ and some $\delta<\epsilon/8$ sufficiently small, $\Pr(X\in C_{\delta,\epsilon/4})>1-\epsilon/4$.

Now suppose that $X_1,X_2$ are simple $\calX$-measurable random variables such that $\Pr(d(X,X_i)<\delta)>1-\epsilon/8$.
Let \[ U=\{(x,x_1,x_2)\mid x\in C_{\delta,\epsilon/4}\wedge d(x,x_1)<\delta \wedge d(x,x_2)<\delta\} ,\] and \[ V=\{(x_1,x_2)\mid d(Y|x_1,Y|x_2)<\epsilon/2 \wedge d(x,x_1)<\delta \wedge d(x,x_2)<\delta\} . \]
Then $(x,x_1,x_2)\in U \implies (x_1,x_2)\in V$, so
\[ \Pr(X_1\!\times\! X_2\in V)\geq \Pr(X\!\times\! X_1\!\times\! X_2\in U) \geq 1-\epsilon/4-\epsilon/8-\epsilon/8=1-\epsilon/2 .\]
Since each $Y|x$ is $\calX$-independent,
\[ \begin{aligned} &\textstyle \Pr(d(Y|X_1,Y|X_2)>\epsilon/2) \\&\qquad\textstyle= \sum_{x_1,x_2} \Pr(X_1=x_1\wedge X_2=x_2) \Pr(d(Y|x_1,Y|x_2)>\epsilon/2) \\
  &\qquad\textstyle  \leq \sum_{(x_1,x_2)\in V} \Pr(X_1=x_1\wedge X_2=x_2) \times \epsilon/2 \\&\qquad\qquad\qquad\textstyle + \sum_{(x_1,x_2)\not\in V} \Pr(X_1=x_1\wedge X_2=x_2)\times 1 \\&\qquad\qquad\textstyle  =  \Pr(X_1\!\times\!X_2\in V)\times\epsilon/2  + (1-\Pr(X_1\!\times\!X_2\in V))\times1 \\&\qquad\qquad\textstyle \leq \epsilon/2+\epsilon/2=\epsilon . \end{aligned} \]
Hence if $d_F(X,X_1),d_F(X,X_2)<\delta$, we have $d_F(Y|X_1,Y|X_2)<\epsilon$.
\end{proof}


\subsection{Random functions}

In the definition of conditional random variable, we use objects of type $\tpX\fto\tpRV(\tpY)$, which are random-variable-valued functions, rather than \emph{random functions} with type $\tpRV(\tpX\fto\tpY)$, alternatively written $\tpRV(\tpCts(\tpX;\tpY))$.

Given a random function $F:\tpRV\bigl(\tpX\to\tpY\bigr)$ and a random variable $X:\tpRV(\tpX)$, since the evaluation map $\varepsilon:(\tpX\to\tpY)\times \tpX \to \tpY$ is computable, we can apply it to $F$ and $X$ to obtain a random variable $Y=\varepsilon(F,X):\tpRV(\tpY)$.

The information provided by a random function $\tpRV(\tpX\to\tpY)$ is strictly stronger than that provided by a function $\tpX\to\tpRV(\tpY)$:
\begin{proposition}[Random function]
The natural bijection $\tpRV(\tpX\fto\tpY) \hookrightarrow (\tpX\fto\tpRV(\tpY))$ is computable, but its inverse is not continuous.
\end{proposition}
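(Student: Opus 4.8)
The plan is to handle the two assertions separately: the forward map is assembled from the image operation together with cartesian‑closedness, while the failure of continuity of the inverse is witnessed by one explicit family of ``random tent functions''. For the computable direction, I would define $\Phi:\tpRV(\tpX\fto\tpY)\to(\tpX\fto\tpRV(\tpY))$ by $\Phi(F)(x)=\eval_x(F)$, where $\eval_x:(\tpX\fto\tpY)\fto\tpY,\ f\mapsto f(x)$ is the partial application $\varepsilon(\cdot,x)$ of the computable evaluation map $\varepsilon:(\tpX\fto\tpY)\times\tpX\fto\tpY$. Each $\eval_x$ is continuous, so by Theorem~\ref{thm:imagerandomvariable} the image random variable $\eval_x(F):\tpRV(\tpY)$ is computable; concretely $[\eval_x(F)]^{-1}(V)=F^{-1}(\eval_x^{-1}(V))$ is computable in $\tpLM(\Omega)$ since $\eval_x^{-1}(V)\in\tpOp(\tpX\fto\tpY)$ is computable from $x$ and $V$. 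As $x\mapsto\eval_x$ is computable (currying $\varepsilon$) and the image operation is jointly computable in its function and random‑variable arguments, the assignment $(F,x)\mapsto\eval_x(F)$ is computable, and currying gives computability of $\Phi$. Injectivity of $\Phi$ (the $\hookrightarrow$) holds because a continuous random function is determined almost surely by its evaluations on a countable dense set, so $\Phi^{-1}$ is well defined on the range of $\Phi$.

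For the discontinuity of $\Phi^{-1}$, I would take $\tpX=[0,1]$, $\tpY=\tpRe$, $\Omega=\{0,1\}^\omega$ with the uniform measure, and let $T:\Omega\fto[0,1]$ be the standard continuous uniform random variable. Define continuous random functions $F_n$ by $F_n(\omega)(x)=\max\{0,\,1-n\,|x-T(\omega)|\}$, a tent of height $1$ and width $2/n$ centred at $T(\omega)$; since $T$ is continuous and $\omega\mapsto F_n(\omega)$ is Lipschitz into $C([0,1])$, each $F_n:\tpRV(\tpX\fto\tpY)$. Set $g_n=\Phi(F_n)$, let $F_\infty=0$ be the zero random function, and note $g_\infty=\Phi(F_\infty)=0$. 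For each $x$ the value $g_n(x)(\omega)=F_n(\omega)(x)$ is nonzero only when $|x-T(\omega)|<1/n$, so $\Pr(|g_n(x)|>\epsilon)\le 2/n$ uniformly in $x$; hence $g_n\to g_\infty$ uniformly in probability, and since the limit is constant (hence continuous) this upgrades to continuous convergence $g_n\to g_\infty$ in the exponential $\tpX\fto\tpRV(\tpY)$.

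Finally I would show $F_n\not\to F_\infty$ in $\tpRV(\tpX\fto\tpY)$. In the compact‑open (i.e.\ uniform) topology on $C([0,1])$ the set $\mathcal U=\{f\mid \sup_x|f(x)|<\tfrac12\}$ is open and contains $F_\infty$, so $\Pr(F_\infty\in\mathcal U)=1$; but every path $F_n(\omega)$ has supremum $1$, so $\Pr(F_n\in\mathcal U)=0$ and the convergence condition $\liminf_n\Pr(F_n\in\mathcal U\wedge F_\infty\in\mathcal U)\ge\Pr(F_\infty\in\mathcal U)$ fails. Thus $g_n\to g_\infty$ while $\Phi^{-1}(g_n)=F_n\not\to F_\infty=\Phi^{-1}(g_\infty)$, so $\Phi^{-1}$ cannot be continuous. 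The main obstacle is not the construction but the bookkeeping of topologies: one must confirm that convergence in $\tpX\fto\tpRV(\tpY)$ is the continuous (not merely pointwise) convergence and verify the example meets it, and that non‑convergence in $\tpRV(\tpX\fto\tpY)$ is genuinely detected by an honestly open subset of the path space $\tpX\fto\tpY$.
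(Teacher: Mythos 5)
Your proposal is correct and follows essentially the same route as the paper: the computable direction is the paper's argument (partial evaluation $\varepsilon_x:(\tpX\fto\tpY)\fto\tpY$, Theorem~\ref{thm:imagerandomvariable}, currying), and your tent functions centred at a uniform $T(\omega)$ are the same moving-bump counterexample the paper realizes on Cantor space with $F(x,\omega,n)=1$ iff $x|_n=\omega|_n$ --- in both cases the evaluations at fixed $x$ tend to $0$ uniformly in $x$, while every sample path stays at sup-distance $1$ from the zero function, so the random functions cannot converge. The only cosmetic differences are the choice of spaces ($[0,1]$ and $\tpRe$ versus $\{0,1\}^\omega$ and $\{0,1\}$) and that the paper detects non-convergence by showing the sequence fails the Fan-metric Cauchy property, whereas you exhibit an explicit open set of paths violating the open-set convergence criterion.
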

\begin{proof}
For fixed $x$, evaluation $\varepsilon_x:(\tpX\fto\tpY)\fto\tpY:f\mapsto f(x)$ is computable, so by Theorem~\ref{thm:imagerandomvariable}, $
\varepsilon(F):\tpRV(\tpY)$ is computable for any $F:\tpRV(\tpX\fto\tpY)$ given $x$. Hence the function $x\mapsto \varepsilon_x(F)$ is computable.

Conversely, let $X=\{0,1\}^\omega$ and $Y=\{0,1\}$.
Define $F(x,\omega,n)=1$ if $x|_n=\omega|_n$, and $0$ otherwise.
Then for fixed $x$, $F(d(x,\cdot,n),0)=2^{-n}$, so $F(x,\cdot,n)$ converges to $0$ uniformly in $x$.

For fixed $\omega$, $d(F(\cdot,\omega,n_1),F(\cdot,\omega,n_2)) = \sup_{x\in X} d(F(x,\omega,n_1),F(x,\omega,n_2)) = 1$, since (for $n_1<n_2$) there exists $x$ such that $x|_{n_1}=\omega|_{n_1}$ but $x|_{n_2}\neq\omega|_{n_2}$.
Hence $d(F(\cdot,\cdot,n_1),F(\cdot,\cdot,n_2))=1$ for all $n_1,n_2$, and the sequence is not a Cauchy sequence in $\tpRV(\tpX\fto\tpY)$.
\end{proof}

However, if $Y|:\tpX\to\tpRV(\tpY)$ is such that each $Y|x$ is a continuous random variable i.e. a continuous function $\Omega\to\tpY$, then $Y|$ corresponds to a continuous random function $F$ by  $[F(\omega)](x)=Y|x(\omega)$, and $Y|X$ is the random variable $\varepsilon(F,X)$.

\section{Conclusions}

In this paper, we have developed a theory of probability and random variables.
The theory uses type-two effectivity to provide an underlying machine model of computation, but is largely developed using type theory in the cartesian-closed category of quotients of countably-based spaces, which has an effective interpretation.
The approach extends existing work on probability via valuations and random variables in metric spaces via limits of Cauchy sequences.

The approach has been used to give a computable theory for stochastic processes which is sufficiently powerful to effectively compute the solution of stochastic differential equations~\cite{Collins2014ARXIV}. Ultimately, we hope that this work will form a basic for practical software tools for the rigorous computational analysis of stochastic systems.
\vspace{\baselineskip}

\noindent
\textbf{Acknowledgement:} The author would like to thank Bas Spitters for many interesting discussions on measurable functions and type theory, and pointing out the connection with monads.

\bibliographystyle{alpha}
\bibliography{computablerandomvariables}

\end{document}